\tikzstyle{red dot}=[fill=red, draw=black, shape=circle]
\tikzstyle{blue dot}=[fill=blue, draw=black, shape=circle]
\tikzstyle{dash dot}=[fill=white, draw=black, shape=circle, dashed]
\tikzstyle{greeen dot}=[fill=green, draw=black, shape=circle]
\tikzstyle{yellow dot}=[fill=yellow, draw=black, shape=circle]
\tikzstyle{circle dot}=[fill=white, draw=black, shape=circle]
\tikzstyle{black dot}=[fill=black, draw=black, shape=circle]
\tikzstyle{dashline}=[-, dashed]
\tikzstyle{redline}=[-, draw=red]
\tikzstyle{blueline}=[-, draw=blue]
\tikzstyle{green line}=[-, draw=green]
\tikzstyle{yellow line}=[-, draw=yellow]
\tikzstyle{purple line}=[-, draw=magenta]
\begin{document}
\title{Condensation Completion and Defects in 2+1D Topological Orders}
\author{Gen Yue}
\email{gyue@link.cuhk.edu.hk}
\affiliation{Department of Physics, The Chinese University of Hong Kong, Central Avenue, Hong Kong, China}
\author{Longye Wang}
\email{wanglongye@link.cuhk.edu.hk}
\affiliation{Department of Physics, The Chinese University of Hong Kong, Central Avenue, Hong Kong, China}
\author{Tian Lan}
\email{tlan@cuhk.edu.hk}
\affiliation{Department of Physics, The Chinese University of Hong Kong, Central Avenue, Hong Kong, China}

\begin{abstract}
    We review the condensation completion of a modular tensor category $\cC$, which yields a fusion 2-category $\Sigma\cC$  of separable algebras, bimodules over algebras and bimodule maps in $\cC$. Physically, $\Sigma\cC$ is the fusion 2-category of codimension-1 defects, codimension-2 defects and instantons in the  $2+1$D topological order $\cC$. We realize the rough-rough wall and $e$-$m$ exchange wall in Toric Code model on the lattice by deforming the Hamiltonian based on the corresponding algebraic data. We apply condensation completion to Toric Code, $3\mathbf{F}$, two-laryer semion and $\mathbb{Z}_4$ topological orders, and explicitly enumerate their $1$d and $0$d defects along with fusion rules. We also mention other applications of condensation completion: alternative interpretations of condensation completion of a braided fusion category; condensation completion of the category of symmetry charges and its correspondence to gapped phases with symmetry; for a topological order $\cC$, one can find all gapped boundaries of the stacking of $\cC$ with its time-reversal conjugate through computing the condensation completion of $\cC$.
\end{abstract}
\maketitle
{\small \setcounter{tocdepth}{2} \tableofcontents }
\section{Introduction}
Understanding different types of phases and phase transitions is a fundamental inquiry in the field of condensed matter physics. For a considerable time, it was widely believed that the Landau-Ginzburg theory of spontaneous symmetry breaking provided a universal framework to describe all kinds of phases and phase transitions. However, in recent decades, many unconventional topologically ordered phases beyond Landau's paradigm have been discovered and have gained substantial attention. There are various approaches to studying these new phases. From a microscopic point of view, people write down many exactly solvable lattice models~\cite{KITAEV20032,Levin_2005,KITAEV20062,Levin_2012,Hu_2013,Heinrich_2016,green2023enriched},  construct the ground state wave functions or partition functions~\cite{Chen_2013,Wang_2018,Wang_2020} and use many numerical methods~\cite{Chen_2011, Schuch_2011, Pollmann_2012, Jiang_2012,Bauer_2013} to study the exotic properties and classifications of these phases. From a macroscopic point of view, people study the observables of the phases in the thermodynamic and long wavelength limit, it turns out that those observables have rather rich algebraic structure and may be described by (higher) category theory~\cite{Kitaev_2012,Lan_2016,Barkeshli_2019, gaiotto2019condensations,Kong_2020,Kong2020DefectsIT,Kong_2022,PhysRevB.105.235143,aasen2022characterizationclassificationfermionicsymmetry, Johnson_Freyd_2022, Lan2023CategoryOS}.

One direct application of category theory in topological phases is that the topological defects in a topological order form a fusion (higher) category. For example, particles in $2+1$D\footnote{We will use D for spacetime dimension and d for spatial dimension.} string-net model with input fusion category $\cC$ forms a modular tensor category $Z(\cC)$~\cite{Levin_2005}, particles in $2+1$D twisted quantum double model with group $G$ and a three cocycle $\omega\in \text{H}^3(G,U(1))$ form a modular tensor category (MTC) $D^{\omega}(G)$\cite{KITAEV20032,Hu_2013}, strings in $3+1$D toric code model form a non-degenerate braided fusion 2-category $Z(2\Rep \mathbb Z_2)$\cite{Kong2020DefectsIT}. In these models, people consider the codimension-2 and higher defects in $n+1$D ($n\geq 2$) spacetime, forming a braided fusion $(n-1)$-category. It's also natural to ask about the codimension-1 defects, the higher codimension defects on codimension-1 defects, and their fusion rules. To answer this question, one can construct these lower codimensional defects directly if there is already a lattice model or a field theory realization of the phase. As an example, the codimension-1 and higher defects in the string-net model have been studied in such manner in the seminal work~\cite{Kitaev_2012}. However, this is quite tedious and not systematical. The category of defects is a universal description of a phase, which doesn't rely on the specific model that realizes the phase. It turns out that for an $n+1$D anomaly-free topological orders, the fusion $n$-category of codimension-$1$ and higher defects in it can be completely determined by the braided fusion $(n-1)$-category of codimension-2 and higher defects through a categorical algorithm called condensation completion~\cite{Johnson_Freyd_2022}. 

If the category of codimension-1 and higher defects can be determined by the category of codimension-2 and higher defects, then why should we care about it? Here we give two reasons
\begin{enumerate}
    \item Although it has been encoded in, seeing the codimension-1 defects and the related fusion rules from the category of codimension-2 and higher defects is not straightforward. It's worthwhile to give the data explicitly.
    \item The fusion $n$-category of codimension-1 defects is a mathematically  complete theory. The process of doing condensation completion of a category is conceptually similar with the Cauchy completion that generates the real numbers $\R$ from the rationals $\Q$. Just as Cauchy completion ``fills the gaps" in the rational numbers by formally adding limit for every Cauchy sequence, thereby yielding the complete space $\R$, the condensation completion of a category $\cC$, denoted as $\mathrm{Kar}(\cC)$ systematically adds objects representing absolute limits (universal properties preserved by functors) that might be missing in $\cC$ itself.
    This completeness of a mathematical theory is important for physics: Newtonian mechanics fundamentally requires the field of real numbers $\R$ to rigorously formulate calculus and differential equations;
    quantum mechanics relies on the completeness of Hilbert spaces to justify the operations of taking limits, differentials and integrals, such as those appearing in the Schrödinger equation and path integrals. In direct analogy, $\mathrm{Kar}(\cC)$ enables well-defined semi-simplicity and functorial constructions that depend on the presence of absolute limits. The fusion (higher) category of codimesnion-1 defects in a topological order is such an mathematically complete category. The codimension-1 defects are mathematically a kind of absolute limits called condensates~\cite{gaiotto2019condensations}. As we'll discuss in \ref{PhysicsCondensation} and Remark.\ref{WhyCondensation}, codimension-1 defects emerge physically by condensing certain particles along a codimension-1 surface.  

\end{enumerate}
In this paper, we will first provide a concise overview of the mathematical framework for condensation completion, and delve into concrete examples: the condensation completion of modular tensor categories with $\mathbb Z_2\times \mathbb Z_2$ and $\mathbb Z_4$ fusion rules, respectively.  In addition, we talk about some other applications of condensation completion. One such application is that the condensation completion of a general braided fusion $(n-1)$-category will give the potentially anomalous categorical data of an $(n-1)$d phase. Another application is that one can classify certain topological phases with symmetry by condensation completion of the category of symmetry charges. Also, for a $2+1$D topological order $\cC$, one can find all gapped boundaries of the stacking of $\cC$ with its time-reversal conjugate through computing the condensation completion of $\cC$.  
\begin{notation} 
Symbols and abbreviations used in this paper are summarized in Table \ref{symbols}.
\begin{table}[htb]
\renewcommand{\arraystretch}{1.0}
\scalebox{0.8}{
\begin{tabular}{ll}
  \hline
  \textbf{Symbol} & \textbf{Definition} \\
  \hline
  $\Box$ & Tensor product of fusion 2-categories\\
  \hline
  $\sim_{M}$ & Morita equivalence\\
  \hline
  $\mathrm{B}\cC$ & Delooping of the category $\cC$\\
  \hline
  $\mathrm{Kar}\cC$ & Karoubi completion of the category $\cC$\\
  \hline
  $\Sigma\cC$ & $\mathrm{Kar}\mathrm{B}\cC$, Karoubi completion of the category $\mathrm{B}\cC$,\\
  &i.e., condensation completion of $\cC$\\
  \hline
  $[G,\psi]$ & Group algebra of the group $G$ over field $\C$,\\
  & with multiplication twisted by $\psi\in \mathrm{H}^2(G, U(1))$\\
  \hline
  \textbf{Abbreviation} &  \textbf{Corresponding topological order} \textbf{of MTCs}\\
  \hline
  \textbf{TC} &  Toric code\\
  \hline
  \textbf{3F} & Three-fermion\\
  \hline
  \textbf{S} & Semion\\
  \hline
  \textbf{SS} & Two layers of semion\\
  \hline
  $\DS$ & Double-semion\\
  \hline
\end{tabular}}
\caption{symbols and abbreviation.}
\label{symbols}
\end{table}
\end{notation}

\section{Condensation Completion}
\label{CondensationCompletion}
In this section, we will review the algebraic construction of condensation completion. The idea of completing a theory by all the ``condensed defects" or ``condensation descendants" has a long history and was studied in a series of works~\cite{Carqueville_2016,Kitaev_2012,KW1405.5858,douglas2018fusion,gaiotto2019condensations} (see also Remark 3.18 in~\cite{KLW+2003.08898}). The modern term \emph{condensation completion}, as introduced by Gaiotto and Johonson-Freyd~\cite{gaiotto2019condensations}, is a generalization of the Karoubi envelope (also known as idempotent completion~\cite{BALMER2001819} or Cauchy completion) from ordinary category to higher category theory. Mathematically, 
condensation completion of a higher category $\cC$ is the process of incorporating all condensation descendants into it, resulting in a complete higher category $\mathrm{Kar}(\cC)$ where all absolute limits exist. This allows for properties such as the semisimplicity of a higher category to be well-defined.

\begin{figure}[htbp]
    \centering
\begin{tikzpicture}[scale=0.9]
\tikzstyle{every node}=[font=\small,scale=0.9]
	\begin{pgfonlayer}{nodelayer}
		\node [style=none] (0) at (-1, 6) {};
		\node [style=none] (1) at (-2, 2) {};
		\node [style=none] (2) at (8, 6) {};
		\node [style=none] (3) at (7, 2) {};
		\node [style=none] (17) at (-0.75, 2.5) {$\mathbbm1$};
		\node [style=none] (18) at (0.25, 5.5) {$\mathbbm1$};
		\node [style=none] (19) at (0.25, 4) {$M$};
		\node [style=none] (26) at (0, 5) {};
		\node [style=none] (28) at (-0.5, 3) {};
		\node [style=circle dot] (29) at (-0.25, 4) {};
		\node [style=none] (30) at (1.25, 2.5) {$A$};
		\node [style=none] (31) at (2.25, 5.5) {$A$};
		\node [style=none] (32) at (2.25, 4) {$N$};
		\node [style=none] (33) at (2, 5) {};
		\node [style=none] (34) at (1.5, 3) {};
		\node [style=greeen dot] (35) at (1.75, 4) {};
		\node [style=none] (36) at (3.25, 2.5) {$B$};
		\node [style=none] (37) at (4.25, 5.5) {$C$};
		\node [style=none] (38) at (4.25, 4) {$P$};
		\node [style=none] (39) at (4, 5) {};
		\node [style=none] (40) at (3.5, 3) {};
		\node [style=yellow dot] (41) at (3.75, 4) {};
		\node [style=none] (42) at (5.75, 4) {$...$};
		\node [style=none] (43) at (6.75, 5.5) {$\Sigma\cC$};
	\end{pgfonlayer}
	\begin{pgfonlayer}{edgelayer}
		\draw (0.center) to (2.center);
		\draw (0.center) to (1.center);
		\draw (1.center) to (3.center);
		\draw (2.center) to (3.center);
		\draw [style=dashline] (26.center) to (29);
		\draw [style=dashline] (29) to (28.center);
		\draw [style=redline] (33.center) to (35);
		\draw [style=redline] (35) to (34.center);
		\draw [style=blueline] (39.center) to (41);
		\draw [style=purple line] (41) to (40.center);
	\end{pgfonlayer}
\end{tikzpicture}
 \caption{$\mathbbm 1, A, B,C$ are separable algebras in $\cC$ (objects in $\Sigma \cC$); $M,N,P$ are bimodules over algebras (1-morphisms in $\Sigma \cC$).}
    \label{Alg-Defects}
\end{figure}

\subsection{Mathematical Definition}
In this paper, we mainly focus on condensation completion of the delooping of a modular tensor category $\cC$,  denoted as $\Sigma \cC\equiv \mathrm{Kar}(\mathrm{B}\cC)$, where the delooping $\mathrm{B}\cC$ is a monoidal 2-category with a single object $*$, and $\Hom_{\mathrm{B}\cC}(*,*) = \cC$. Let's consider the category $\cC$ that describes particles in a $2+1$D topological order. As for $\mathrm{B}\cC$, particles are just interpreted as $0$d domain walls between trivial $1$d defects. Physically, the goal of condensation completion is to obtain a fusion 2-category $\mathfrak{C}$, where objects in $\mathfrak{C}$ are string-like topological defects in this topological order, 1-morphisms are all possible particle-like defects, and 2-morphisms are instantons. Gaiotto and Johnson-Freyd provided an inductive construction of the condensation completion of an $n$-category in their paper. Truncate their general construction to $2$-category case, one can get a concrete workable category model of $\Sigma\cC$. We list the data of $\Sigma\cC$ as following (see Fig.\ref{Alg-Defects} for the physical picture),
\begin{enumerate}
    \item Objects in $\Sigma\cC$ are separable algebras (see~\ref{SeparableAlg}) in $\mathcal{C}$.  Two objects $A,A'$ are isomorphic in $\Sigma \mathcal{C}$ if they are Morita equivalent algebras (see~\ref{MoritaEqAlg}) in $\mathcal{C}$, denoted by $A\sim_{M} A’$.
    \item Given two separable algebras $(A,\mu_A, \eta_A)$ and $(B,\mu_B,\eta_B)$, the hom-category $\Hom(A,B)$ is the category of $B$-$A$-bimodules $_B \cC_A$, i.e., 1-morphisms in $\Sigma\mathcal{C}$ are bimodules in $\cC$, and 2-morphisms are bimodule maps. 
    \item The composition of hom-category is a functor $\circ$ definded by the relative tensor product as followings,
    \begin{equation}
        \label{FusionOfPoints}
        \circ:=\ot[B]: {}_D\cC_B\times{}_B\cC_A\to{}_D\cC_A
    \end{equation}
    where $A,B,D\in\Sigma\cC$. The relative tensor product of bimodules over algebras and bimodule maps is defined in ~\ref{RTP}.
    
    \item The tensor product $\Box$ in $\Sigma\mathcal{C}$ is induced by the braiding structure of $\cC$, more explicitly, 
    \begin{itemize}
        \item for two objects $(A,\mu_A,\eta_A), (B,\mu_B,\eta_B)\in\Sigma\cC$, the tensor product $A\Box B:=(A\otimes B, \mu_{A\otimes B},\eta_{A\ot B})$ with multiplication defined as followings,
    
    
    \begin{equation}
    \label{multi}
        \begin{tikzcd}
	{(A\otimes B) \otimes (A\otimes B)} && {A\otimes (B \otimes A)\otimes B} \\
	&& {A\otimes (A \otimes B)\otimes B} \\
	{A\otimes B} && {(A\otimes A) \otimes (B\otimes B)}
	\arrow["{\mu_{A\otimes B}}"', from=1-1, to=3-1]
	\arrow["\alpha", from=1-1, to=1-3]
	\arrow["c_{B,A}", from=1-3, to=2-3]
	\arrow["\alpha", from=2-3, to=3-3]
	\arrow["{\mu_A\otimes \mu_B}", from=3-3, to=3-1]
\end{tikzcd}
    \end{equation}
    where $\alpha$ is the associator in $\mathcal{C}$; $\mu_A$ and $\mu_B$ are multiplication of $A$ and $B$, respectively; and $\eta_{A\ot B} = \eta_A\ot \eta_B$
    \item for two 1-morphisms $(M,\rho_M,\tau_M)\in{}_{A'}\cC_{A},(N,\rho_N,\tau_N)\in{}_{B'}\cC_{B}$, the tensor product $M\Box N:=(M\otimes N, \rho_{M\Box N}, \tau_{M\Box N})$ with actions defined as followings,
    \begin{equation}
    \label{TP1morR}
        \begin{tikzcd}
	{(M\otimes N)\otimes (A\otimes B)} && {M\otimes (N\otimes A)\otimes B} \\
	&& {M\otimes (A\otimes N)\otimes B} \\
	{M\otimes N} && {(M\otimes A)\otimes (N\otimes B)}
	\arrow["{\tau_{M\Box N}}"', from=1-1, to=3-1]
	\arrow["\alpha", from=1-1, to=1-3]
	\arrow["{c_{N,A}}", from=1-3, to=2-3]
	\arrow["\alpha", from=2-3, to=3-3]
	\arrow["{\tau_M \otimes \tau_N}"', from=3-3, to=3-1]
        \end{tikzcd}
    \end{equation}
    \begin{equation}
    \label{TP1morL}
        \begin{tikzcd}
	{(A'\otimes B')\otimes (M\otimes N)} && {A'\otimes (B'\otimes M)\otimes N} \\
	&& {A'\otimes (M\otimes B')\otimes N} \\
	{M\otimes N} && {(A'\otimes M)\otimes (B'\otimes N)}
	\arrow["{\rho_{M\Box N}}"', from=1-1, to=3-1]
	\arrow["\alpha", from=1-1, to=1-3]
	\arrow["{c_{B',M}}", from=1-3, to=2-3]
	\arrow["\alpha", from=2-3, to=3-3]
	\arrow["{\rho_M \otimes \rho_N}"', from=3-3, to=3-1]
        \end{tikzcd}
    \end{equation}
    \item for two 2-morphisms $f,g$, the tensor product $f\Box g:=f\otimes g$ as bimodule map.
    \item for four 1-morphisms $M\in {}_{A'}\cC_{A}$, $N\in{}_{B'}\cC_{B}$, $P\in {}_{A''}\cC_{A'}$, $Q\in{}_{B''}\cC_{B'}$, the functoriality of tensor product gives a natural 2-isomorphism 
    \begin{equation}
    \label{tensorator}
        \phi_{P,Q,M,N}:(P\Box Q)\circ(M\Box N)\cong (P\circ M)\Box (Q\circ N) 
    \end{equation}
    which is induced by the braiding in $\cC$ as following,
    \begin{equation}
        \begin{tikzcd}
	P\ot Q\ot M\ot N && (P\ot Q)\ot[A'\ot B'](M\ot N) \\
	\\
	P\ot M\ot Q\ot N && (P\ot[A'] M)\ot (Q\ot[B'] N)
        \arrow["u_{A'\ot B'}", from=1-1, to=1-3]
	\arrow["{\phi_{P,Q,M,N}}", from=1-3, to=3-3]
	\arrow["{c_{Q,M}}"', from=1-1, to=3-1]
	\arrow["u_{A'}\ot u_{B'}"', from=3-1, to=3-3]
        \end{tikzcd}
    \end{equation}
    where $u_{A'\ot B'},u_{A'}, u_{B'}$ are quotient maps for corresponding relative tensor products. The interchange 2-isomorphism is given by the composition of two 2-isomorphisms \eqref{tensorator} as follows,
    \begin{widetext}
        \begin{equation}
        \phi_{P,N}:(P\Box B')\circ(A'\Box N)\xrightarrow{}(P\circ A')\Box (B'\circ N)\cong (A''\circ P)\Box(N\circ B)\xrightarrow{} (A''\Box N)\circ(P\Box B)
    \end{equation}
    \end{widetext}
    \end{itemize}
\end{enumerate}
$\Sigma \mathcal{C}$ has actually much more data and conditions they satisfy than what we have mentioned above, we refer readers to see ~\cite{xi2023class} for the detail of the condensation completion of a general braided fusion category. With this model $\Sigma\mathcal{C}$, physical intuitions are matched by concrete mathematical objects, as shown in table ~\ref{corres}. We will identify them with each other from now on.

\begin{remark}
\label{WhyInvertible}
 Two $1$d defects $A, A'$ are isomorphic as objects in $\Sigma\cC$ when they are Morita equivalent; $A,A'$ are not necessarily isomorphic as algebras. There can be nontrivial but invertible $0$d domain wall between them. A pair of invertible $0$d domain walls between $A$ and $A'$, $A'$ and $A$ can shrink to the trivial $0$d defect on $A$, and that's exactly the definition of invertible bimodules and Morita equivalence \ref{MoritaEqAlg}. $A$ and $A'$ are physically different 1d defects, the invertible 0d defects between them are nontrivial and can not be omitted. If we choose a defect in every Morita class as the representative, and identify every defect as direct sum of these representatives and $0$d defects between them like what we do in $1$-category, the identification is always up to some invertible $0$d defects.
\end{remark}

\begin{remark}
    \label{rmk.RelationToKar(BC)}
    According to the construction in \cite{gaiotto2019condensations},
for a general monoidal category $\cC$, objects in $\kar(\mathrm{B}\cC)$ are so called condensation algebras, which are non-unital special Frobenius algebras in $\cC$. A non-unital special Frobenius algebra is a triple $(A,\mu:A\ot A\rightarrow A,\Delta: A\rightarrow A\ot A)$. $(A,\mu)$ is a (non-unital) associative algebra, $(A,\Delta)$ is a (non-counital) coassociative coalgebra.  We use the red line to denote the algebra $A$, and  
\begin{equation*}
     \begin{tikzpicture}
        \draw[thick,red](0,0) -- (0.5,0.5);
        \draw[thick,red](0.5,0.5)--(1,0);
        \draw[thick, red](0.5,0.5)--(0.5,1);
        \fill [black] (0.5,0.5) circle [radius=1pt];
        \node[black] at (0.8,0.8) {$\mu$};    
        \draw[thick,red](3.5,0)--(3.5,0.5);
        \draw[thick,red](3,0.8)--(3.5,0.5);
        \draw[thick,red](4,0.8)--(3.5,0.5);
        \node[black] at (3.8,0.3) {$\Delta$};
        \fill [black] (3.5,0.5) circle [radius=1pt];
    \end{tikzpicture}
\end{equation*}
are the multiplication and comultiplication.  
They satisfy the special Frobenius conditions.
\begin{equation}
    \begin{tikzpicture}[baseline={(0.5,0.75)}]
        \draw[thick,red](0.5,0)--(0.5,0.3);
        \draw[thick,red](0.5,0.3)--(0,0.8);
        \draw[thick,red](0.5,0.3)--(1,0.8);
        \draw[thick,red](0,0.8)--(0.5,1.3);
        \draw[thick,red](0.5,1.3)--(1,0.8);
        \draw[thick,red](0.5,1.3)--(0.5,1.6);
        \fill[black] (0.5,0.3) circle [radius = 1pt];
        \fill[black] (0.5,1.3) circle [radius = 1pt];
    \end{tikzpicture} = \ 
    \begin{tikzpicture}[baseline={(0.5,0.75)}]
        \draw[thick,red] (0.5,0)--(0.5,1.6);
    \end{tikzpicture}
\end{equation}
\begin{equation}
\label{Ainter}
\begin{tikzpicture}[baseline={(0.5,0.75)}]
    \draw[thick,red](0,0) -- (0.5,0.5);
    \draw[thick,red](0.5,0.5)--(1,0);
        \draw[thick, red](0.5,0.5)--(0.5,1);
        \draw[thick,red](0.5,1)--(0,1.5);
        \draw[thick,red](0.5,1)--(1,1.5);
        \fill [black] (0.5,0.5) circle [radius=1pt];
        \fill [black] (0.5,1) circle [radius=1pt];
\end{tikzpicture}    = 
\begin{tikzpicture}[baseline={(0.5,0.75)}]
\draw[thick,red] (0,0)--(0.2,0.5);
\draw[thick,red] (0.2,0.5)--(1,1);
\draw[thick,red] (1,1)--(1.2,0);
\draw[thick,red] (0.2,0.5)--(0,1.5);
\draw[thick,red](1,1)--(1.2,1.5);
\fill[black] (0.2,0.5) circle [radius=1pt];
\fill[black] (1,1) circle [radius=1pt];
\end{tikzpicture}=
\begin{tikzpicture}[baseline={(0.5,0.75)}]
\draw[thick,red] (0,0)--(0.2,1);
\draw[thick,red] (0.2,1)--(1,0.5);
\draw[thick,red] (1,0.5)--(1.2,1.5);
\draw[thick,red] (0.2,1)--(0,1.5);
\draw[thick,red](1,0.5)--(1.2,0);
\fill[black] (0.2,1) circle [radius=1pt];
\fill[black] (1,0.5) circle [radius=1pt];
\end{tikzpicture}
\end{equation}
A 1-morphism between $(A,\mu_A,\Delta_A)$ and $(B,\mu_B,\Delta_B)$ is a so-called condensation bimodule over $A$ and $B$.
The theorem 3.3.3 in \cite{gaiotto2019condensations} shows that if $\cC$ is rigid (any object has left and right duals), then the 2-category $\kar(\mathrm{B}\cC)$ is equivalent to the the 2-category of separable algebras and bimodules in $\cC$ in the associative unital sense. Note that for a separable algebra $(A,\mu,\eta)$ in a rigid monoidal category $\cC$, although there is no canonical choice of bimodule splitting $\Delta: A\rightarrow A\ot A$, any choice of $\Delta$ extends $(A,\mu)$ to a condensation algebra $(A,\mu,\Delta)$. The lemma.3.3.2 in \cite{gaiotto2019condensations} ensures different choices of $\Delta$ give equivalent condensation algebras (as objects in $\kar(\mathrm{B}\cC)$). Moreover, if a separable algebra $(A,\mu,\eta)$  in a MTC is connected, i.e. $\hom_\cC(\mathbbm 1,A)\cong\C$, then it has a unique (symmetric normalized-special) Frobenius algebra structure \cite{Fuchs_2002,KONG2014436}.
\end{remark}

\subsection{Physical Realization}
\label{PhysicsCondensation}
Let $A$ be a separable algebra with multiplication $\mu$ \ref{SeparableAlg} in a fusion category $\cC$ and $\Delta$ be a bimodule splitting (or a chosen comultiplication). 
If $\cC$ is further the MTC of particles of a $2+1$D topological order $\mathfrak{C}$, we can construct a 1d gapped defect using the data of a separable algebra $A\in \cC$ and $\Delta$ by
\begin{enumerate}
    \item first creating $A$ particles freely along a 1d line;
    \item then turning on a large enough local interaction between $A$ particles based on eqn.\eqref{Ainter},
\end{enumerate}
as shown in Fig.~\ref{AdefectPhy}.

  \begin{figure}
    \centering
    \begin{subfigure}[b]{0.23\textwidth}
   \begin{tikzpicture}[baseline={(0.5,1)}]
    \fill[yellow!30!white] (0,0)--(3,0)--(4,2)--(1,2)--cycle;
    \draw[black] (0.2,0.3) node[right]{$\cC$};
\end{tikzpicture}    
\subcaption{}
\label{nothing}
    \end{subfigure}
    \begin{subfigure}[b]{0.23\textwidth}
        \begin{tikzpicture}[baseline={(0.5,1)}]
    \fill[yellow!30!white] (0,0)--(3,0)--(4,2)--(1,2)--cycle;
    \draw[black] (0.2,0.3) node[right]{$\cC$};
    \foreach \i in {0,0.1,0.2,...,0.9} {
        \fill[red] ($(1.1,1)!\i!(3,1)$) circle[radius=1.3pt];
    }
    \draw[black] (1.1,1)node[left]{$\cdots$};
    \draw[black] (3.5,1)node[left]{$\cdots$};
    \draw[red]  (2,1) node[above]{$\cdots A\ \cdots\ A\cdots$};
\end{tikzpicture}
\subcaption{}
\label{freeA}
    \end{subfigure}
\begin{subfigure}[b]{0.23\textwidth}
\begin{tikzpicture}[baseline={(0.5,1)}]
    \fill[yellow!30!white] (0,0)--(3,0)--(4,2)--(1,2)--cycle;
    \draw[black] (0.2,0.3) node[right]{$\cC$};
    
    \foreach \i in {0,0.1,...,0.9} {
        \coordinate (mid) at ($(1.1,1)!{\i +0.05}!(3,1)$);
        \fill[gray!30] (mid) ellipse [x radius=0.17, y radius=0.08]; 
    }
    \foreach \i in {0,0.1,...,0.9} {
        \coordinate (mid) at ($(1.1,1)!{\i +0.05}!(3,1)$);
        \draw[black, line width=0.4pt] (mid) ellipse [x radius=0.17, y radius=0.08]; 
    }
    
    \foreach \i in {0,0.1,0.2,...,0.9} {
        \fill[red] ($(1.1,1)!\i!(3,1)$) circle[radius=1pt];
    }
    
    \draw[black] (1.1,1)node[left]{$\cdots$};
    \draw[black] (3.5,1)node[left]{$\cdots$};
    \draw[red]  (2,1) node[above]{$\cdots A\ \cdots\ A\cdots$};
\end{tikzpicture}
    \subcaption{}
    \label{Ainteraction}
\end{subfigure}
\begin{subfigure}[b]{0.23\textwidth}
    \begin{tikzpicture}[baseline={(0.5,1)}]
    \fill[yellow!30!white] (0,0)--(3,0)--(4,2)--(1,2)--cycle;
    \draw[black] (0.2,0.3) node[right]{$\cC$};
    \draw[thick,red] (0.5,1)--(2.9,1)node[above]{$A$}--(3.5,1);
\end{tikzpicture}
  \subcaption{}
  \label{condenseA}
\end{subfigure}
\caption{~\ref{nothing} is a $2+1$D topological order $\cC$ in the ground state. In~\ref{freeA},  many free $A$ particles are created along a 1d line. In~\ref{Ainteraction}, each grey circle means a local interaction~\eqref{Ainter} between two $A$ particles. Then these interacting $A$ particles effectively form a gapped $1$d defect labeled by $A$ as in~\ref{condenseA}.}
\label{AdefectPhy}
\end{figure}

Note that this idea is discussed in section 2.4 in \cite{gaiotto2019condensations}, where it serves as a key physical intuition of the mathematical definition of condensation completion. Also, the lattice model of anyon condensation for string net model in~\cite{christian2023latticemodelcondensationlevinwen} is constructed aligning the similar idea.

Let's take the toric code model as an example. Consider the toric code model on the square lattice. There is a 2d Hilbert space on each link, and the Hamiltonian is 
\begin{equation}
    H_0=-\sum_v A_v-\sum_p B_p,
\end{equation}
where \begin{equation}
    A_v=\begin{tikzpicture} [baseline={(0.5,0.35)}]
    \draw[thick,red](0,0.5)--(1,0.5);
    \draw[thick,red](0.5,0)--(0.5,1);
    \fill[black] (0.5,0.5) circle [radius=1pt];
    \draw[black](0.5,0.3)node[right]{$v$};
    \end{tikzpicture}, \ 
    B_p = \begin{tikzpicture}[baseline={(0.5,0.15)}]
        \draw[thick,blue](0,0)--(0.5,0)--(0.5,0.5)--(0,0.5)--cycle;
        \draw[black](0.04,0.25) node[right]{$p$};
    \end{tikzpicture}
\end{equation}
is the product of $\sigma^x$ on the four edges connected with the vertex $v$, and product of $\sigma^z$ on the four edges around the plaquette $p$, respectively.

It's well known that there are four simple anyons $\mathbbm 1, e, m, f$ in toric code, whose fusion and braiding are described by the MTC $\ve_{\Z_2\times Z_2}$ (we will give the categorical data in section \ref{ToricCode}. In the lattice model, an $e$ particle on the vertex $v_0$ can be represented by a state with $A_{v_0}=-1$ and $A_v=B_p=1$ for $v\neq v_0$ and any plaquette $p$; a $m$ particle on the plaquette $p_0$ can be represented by a state with $B_{p_0}=-1$, and $A_v=B_p=1$ for $p\neq p_0$ and any vertex $v$; and $f$ is the fusion of $e$ and $m$.

In section~\ref{ToricCode}, we will find that there are 6 gapped domain walls in toric code, corresponding to 6 separable algebras in $\ve_{\Z_2\times \Z_2}$. In the following we will construct the lattice model with gapped domain wall labeled by separable algebras $\mathbbm 1\oplus e$ and $\mathbbm 1\oplus f$ in $\ve_{\Z_2\times \Z_2}$. The other three nontrivial gapped domain walls can be constructed similarly.

For $A=\mathbbm 1\oplus e$, it is a 2-dimensional algebra. We can choose the basis $\{\ket{\mathbbm 1},\ket{e}\}$, where $\ket{\mathbbm 1}$ is the unit, and the multiplication and comultiplication are given by 
\begin{equation}
\label{multi1+e}
    \begin{split}
        \mu: (\mathbbm 1\oplus e)\otimes (\mathbbm 1\oplus e)& \rightarrow \mathbbm 1\oplus e\\
           \ket{e} \ket{e}&\mapsto \ket{\mathbbm 1}
    \end{split}
\end{equation}
\begin{equation}
\label{comulti1+e}
    \begin{split}
        \Delta: \mathbbm 1\oplus e& \rightarrow (\mathbbm 1\oplus e)\otimes (\mathbbm 1\oplus e)\\
         \ket{\mathbbm 1}&\mapsto \frac{1}{2} (\ket{\mathbbm 1}\ket{\mathbbm 1
         }+\ket{e}\ket{e})\\
         \ket{e}&\mapsto \frac{1}{2}(\ket{\mathbbm 1}\ket{e}+\ket{e}\ket{\mathbbm 1})
    \end{split}
\end{equation}
To construct a $\mathbbm 1\oplus e$ wall, we first remove all the $A_v$ operators on the line as shown in fig.\ref{1+eWallLattice}. 
\begin{equation}
 H_{\mathbbm 1\oplus e-\mathrm{free}} = -\sum_{v\notin L}A_v-\sum_p B_p   
\end{equation}
For $v\in L$, $A_v=\pm 1$ both correspond to the ground states, it's equivalent to create free $\mathbbm 1\oplus e$ particles on the vertices along $L$.  Note that the local interaction $\Delta\circ\mu$ \eqref{Ainter} maps $\ket{\mathbbm1}\ket{\mathbbm 1}$ and $\ket{e}\ket{e}$ to $\frac{1}{2} (\ket{\mathbbm 1}\ket{\mathbbm 1
         }+\ket{e}\ket{e})$, maps $\ket{\mathbbm 1}\ket{e}$ and $\ket{e}\ket{\mathbbm 1}$ to $\frac{1}{2}(\ket{\mathbbm 1}\ket{e}+\ket{e}\ket{\mathbbm 1})$. It can be effectively realized by
\begin{equation}
\label{1+eLocalInter}
    \frac{1+\sigma_l^z}{2},\  \forall\ \mathrm{link}\ l\in L,
\end{equation}
since $\sigma^z$ creates a pair of $e$ particles. Dropping the constant terms, the $\mathbbm 1\oplus e$ domain wall model is 
\begin{equation}
    H_{\mathbbm 1\oplus e-\mathrm{condense}} = -\sum_{v\notin L}A_v-\sum_p B_p-\lambda\sum_{l\in L}\sigma_l^z
\end{equation}

 When $\lambda\gg 1$, the local interaction equivalently fix the degree of freedom on the link along the wall to be $\sigma^z_l=1$, and the $\mathbbm 1\oplus e$ wall is the fusion of two rough boundaries of toric code~\cite{kong2022invitation}. 
\begin{figure}
    \centering
    \begin{tikzpicture}[thick, scale=0.8]
  \foreach \x in {1,2,3} {
    \ifnum\x=2
      \draw[blue] (\x,0) -- (\x,4); %
    \else
      \draw[black] (\x,0) -- (\x,4); 
    \fi
  }
  \foreach \y in {1,2,3} {
    \draw[black] (0,\y) -- (4,\y);
  }
  \foreach \y in {1,2,3} {
    \fill[white] (2,\y) circle[radius=0.8pt]; 
    \draw[black, line width=0.4pt] (2,\y) circle[radius=1pt]; 
  }
  \draw[blue] (2,0) node[right]{$L$};
\end{tikzpicture}
    \caption{$\mathbbm 1\oplus e$ wall, the circles on the vertices mean the $A_v$ operators are removed. We turn on a large enough local interaction $-\lambda\frac{1+\sigma_l^z}{2}$ for links aling the blue line.}
    \label{1+eWallLattice}
\end{figure}

For $A=\mathbbm 1\oplus f$, the algebra multiplication and comultiplication is similar to $\mathbbm 1\oplus e$ but replacing $e$ in \eqref{multi1+e} and \eqref{comulti1+e}  with $f$. To construct the $\mathbbm 1\oplus f$ wall, we first create free $\mathbbm 1\oplus f$ particles along a (thicken) line $L$. As shown in fig.\ref{1+fWallLattice}, the thicken line $L$ cover a line of vertical links and a line of horizontal links on the adjacent right side. An $f$ anyon is the composition of an $e$ anyon and an $m$ anyon. We fix a framing convention here that the $m$ anyon is always to the lower right of the $e$ anyon. So an $f$ anyon on the line $L$ is located at a pair $(v,p)$ of a vertex $v$ and a plaquette $p$ to the lower right of $v$. Similarly, we'll denote a pair of links on the horizontal line and vertical line as $l_\parallel$ and $l_\perp$, respectively, where $l_\parallel$ is to the lower right of $l_\perp$ (see fig.~\ref{1+fTrap}). To trap free $\mathbbm 1\oplus f$ particles, we first remove the $A_v$ and $B_p$ operators for all $(v,p)$ covered by $L$ and then add the interaction $-A_vB_p$. The ground state of the deformed Hamiltonian will correspond to vacuum $A_v=B_p=1$ or an  $f$ particle $A_v=B_p=-1$

\begin{equation}
    H_{\mathbbm 1\oplus f-\mathrm{free}} = -\sum_{v\notin L}A_v-\sum_{p\notin L}B_p-\sum_{(v,p)\in L}A_vB_p.
\end{equation}

The interaction $\Delta\circ \mu$ for $\mathbbm 1\oplus f$ can be effectively realized by 

\begin{equation}
    -\frac{1+\sigma^z_{l_\perp}\sigma^x_{l_\parallel}}{2},\ \forall (l_\perp, l_\parallel) \in L 
\end{equation}
since $\sigma^z_{l_\perp}\sigma^x_{l_\parallel}$ create a pair of $f$ particles. Then,
\begin{equation}
\begin{split}
\label{1+fcondense}
    &H_{\mathbbm 1\oplus f-\mathrm{condense}}\\ =& -\sum_{v\notin L}A_v-\sum_{p\notin L}B_p-\sum_{(v,p)\in L}A_vB_p-\lambda\sum_{(l_\perp,l_\parallel)\in L}\sigma^z_{l_\perp}\sigma^x_{l_\parallel}
\end{split}
\end{equation}

\begin{figure}
    \centering
    \begin{tikzpicture}[thick, scale=0.8]
    \foreach \y in {0,1,2,3}{
    \fill[purple, opacity=0.3] (0.9,\y) rectangle (1.8,\y+1);
    }
  \foreach \x in {1,2,3} {
      \draw[black] (\x,0) -- (\x,4); 
  }
  \foreach \y in {1,2,3} {
    \draw[black] (0,\y) -- (4,\y);
  }
  \foreach \y in {1,2,3} {
    \fill[white] (1,\y) circle[radius=0.8pt]; 
    \draw[black, line width=0.4pt] (1,\y) circle[radius=1pt]; 
  }
    \foreach \y in {0,1,2,3} {
    \draw[black,  line width=0.8pt, dashed] 
      (1.5, \y+0.5) circle[radius=0.4]; 
  }
  \draw[purple] (1,0) node[left]{$L$};
\end{tikzpicture}
    \caption{$\mathbbm 1\oplus f$ wall, the circles on the vertices and in the plaquettes mean the $A_v$ and $B_p$ operators are removed. For every pair $(v,p)$ covered by the  purple line, we use $-A_vB_p$ to trap an $\mathbbm 1\oplus f$ particle; and we turn on a large enough interaction $-\lambda\frac{1+\sigma^z_{l_\perp}\sigma^x_{l_\parallel}}{2},\ \forall (l_\perp, l_\parallel) \in L$. }
    \label{1+fWallLattice}
\end{figure}

\begin{figure}
    \centering
      \begin{tikzpicture}[thick, scale=0.8]
\fill[purple,opacity=0.3](-0.1,2) rectangle (0.9,0);
\draw[black, thick](-1,1)--(0.2,1)node[below]{$v$}--(1,1);
\fill[black] (0,1) circle[radius=1.2pt];
\draw[black, thick](0,0)--(0,2);
\fill[gray,opacity=0.5] (0,1) rectangle (1,0);
\draw[black] (0.5,0.7)node[below]{$p$};
\draw[purple] (0,1.5)node[left]{$L$};
    \end{tikzpicture}
    \begin{tikzpicture}[thick,scale=0.8]
        \fill[purple,opacity=0.3](-0.1,2) rectangle (0.9,0);
\draw[black, thick](-1,1)--(1,1);
\fill[black] (0,1) circle[radius=1.2pt];
\draw[black, thick](0,0)--(0,2);
\fill[gray,opacity=0.5] (0,1) rectangle (1,0);
\draw[black,thick] (0,1.5)node[right]{$l_{\perp}$};
\draw[black,thick] (0.2,0.7)node[right]{$l_{\parallel}$};
\draw[purple] (0,1.5)node[left]{$L$};
    \end{tikzpicture}
    \caption{a pair ($v,p$) $\in L$ and a pair ($l_{\perp},l_{\parallel}$)$\in L$}
    \label{1+fTrap}
\end{figure}

 We denote the $\sigma^z$ eigenstates with eigenvalues $\pm 1$  to be $\ket{\uparrow},\ket{\downarrow}$, respectively; denote the $\sigma^x$ eigenstate with eigenvalues $\pm 1$ to be $\ket{+},\ket{-}$, respectively. 
When $\lambda\gg 1$, the local interaction forces  $\sigma^z_{l_\perp}\sigma^z_{l_{\parallel}}$ to be $+1$, the local space \begin{tikzpicture}
    \draw[black,thick] (0.4,0)--(0,0)--(0,0.4);
\end{tikzpicture} on the wall is effectively a spin $\frac{1}{2}$, and will be denoted as a red link \begin{tikzpicture}
    \draw[red,thick] (0, 0.2)--(0.4,0.2);
\end{tikzpicture}. 
We identify $\ket{\uparrow,+}$ with $\ket{+}$ and $\ket{\downarrow,-}$ with $\ket{-}$, then the $B_p$ operator adjacent to the left hand side of the wall acting on the effective spin is equivalent to 
\begin{equation}
    C_{s} = \begin{tikzpicture}[baseline={(0.8,0.2)}]
        \draw[black,thick] (0.6,0.6)--(0.3,0.6)node[above]{$\sigma^z$}--(0,0.6)--(0,0.3)node[left]{$\sigma^z$}--(0,0)--(0.3,0)node[below]{$\sigma^z$}--(0.6,0);
        \draw[black,dotted] (0.6,0)--(0.6,0.6);
        \draw[red,thick] (0.6,0.3)--(1.2,0.3);
        \draw[black, thick] (1,0.3)node[above]{$\sigma^x$};
        \draw[black, thick] (0.1,0.3)node[right]{$s$};
    \end{tikzpicture},
\end{equation}

 and the $A_vB_p$ operator on the wall acting on the effective spin is equivalent to
 \begin{equation}
     D_t = \begin{tikzpicture} [baseline = {(0.8,0.2)}]
         \draw[black, thick] (0,0.3)--(0.3,0.3)node[above]{$\sigma^x$}--(0.6,0.3); 
         \draw[red,thick] (0.6,0)--(1.2,0);
         \draw[black,thick] (0.9,0)node[below]{$\sigma^y$};
         \draw[red,thick] (0.6,0.6)--(1.2,0.6);
         \draw[black,thick] (0.9,0.6)node[above]{$\sigma^y$};
         \draw[black, thick] (1.2,0)--(1.2,0.3)node[right]{$\sigma^z$}--(1.2,0.6);
          \draw[black,dotted] (0.6,0)--(0.6,0.6);
          \draw[black,thick] (0.7,0.3)node[right]{$t$};
     \end{tikzpicture}.
 \end{equation}
 As shown in fig.\ref{ZigZagWall}, the deformed Hamiltonian~\eqref{1+fcondense} now is equivalent to 
 \begin{equation}
 \label{zigzagH}
     H_{\mathbbm 1\oplus f-\mathrm{condense}}\\ = -\sum_{v\notin L}A_v-\sum_{p\ \text{in bulk}}B_p-\sum_{s}C_s-\sum_{t} D_t.
 \end{equation}
Note that \eqref{zigzagH} constructed from separable algebra $\mathbbm 1\oplus f$ recovers the result in \cite{Kitaev_2012}. It's a commuting projector Hamiltonian, the four kinds of point-like defects on the wall are given by vacuum, $C_s=-1$, $D_t=-1$, and their fusion. Remember the string operators for $e$ and $m$ particles are $\prod \sigma^z$ and $\prod \sigma^x$, respectively. Moving an $e$ particle from the left bulk to the wall and moving an $m$ particle from the right bulk to the wall end up being $D_t=-1$. Moving an $m$ particle from the left bulk to the wall and moving an $e$ particle from the right bulk to the wall end up being $C_s=-1$. So passing through the zigzag wall exchanges $e$ and $m$ particles.

 \begin{figure}
     \centering
     \begin{tikzpicture}[scale=0.8]
    \draw[black, thick] (1, -0.5) -- (1, 3.5); 
     \draw[black, dotted] (2, -0.5) -- (2, 3.5);
\foreach \y in {0.5,1.5,2.5} {
    \draw [black,thick](0, \y) -- (2, \y); 
}

\foreach \y in {0,1,2,3} {
    \draw [black,thick](2, \y) -- (4, \y); 
}

\foreach \y in {0,1,2,3} {
    \draw [red,thick](2, \y) -- (3, \y); 
}
  \draw[black,thick] (3, -1) -- (3, 4); 
\end{tikzpicture}
     \caption{$\mathbbm 1\oplus f$ condensation effectively realizes the zigzag wall which exchanges $e$ and $m$ particles. The red line is the effective spin fixed by $\sigma^z_{l_\perp}\sigma^z_{l_{\parallel}}=1, (l_{\perp},l_{\parallel})\in L$.}
     \label{ZigZagWall}
 \end{figure}

\begin{remark}
    \label{WhyCondensation}
 For higher dimensional topological orders,  codimension-1 defects can be realized in a similar way. Let $\cC$ be a braided fusion $(n-1)$-category of codimension-2 and higher defects in a $n+1$D topological order, a condensation algebra is an object $A\in \cC=\Omega B\cC$, viewed as a domain wall between trivial codimension-1 defect, satisfying a system of commuting condensation diagrams, including associativity,  co-associativity, separability and so on (see Proposition 2.2.4 in \cite{gaiotto2019condensations} for details). Follow the similar process, we freely generate many $A$ defects on a codimension-1 ``surface", and turn on the interaction between $A$ defects according to their algebraic structure. It will result into an effective codimension-1 defect labeled by $A$. 
\end{remark}
\begin{remark}
    In~\cite{Kitaev_2012}, Kiteav and Kong constructed all gapped 1d defects of string-net model. For a string-net model with input fusion category $\cC$,  a gapped 1d defects is given by a $\cC$-$\cC$ bimodule category $\cM$. $\cC$-$\cC$ bimodule categories form a fusion 2-category $\Fun(\Sigma\cC\boxtimes(\Sigma \cC)^{\mathrm{op}},2\ve)\simeq \Fun(\Sigma\cC,\Sigma\cC)\simeq \Sigma Z_1(\cC)$, which is exactly the condensation completion of the category of anyons in the model. 
\end{remark}

\onecolumngrid
    \begin{center}
        \begin{table}
    \centering
    \begin{tabular}{|c|c| c| c|}
    \hline
       &$\mathfrak{C}$  &$\Sigma\mathcal{C}$  \\
       \hline
     object  & 1d defect & separable algebra in $\mathcal{C}$\\
     \hline
     1-morphism & 0d gapped domain wall between 1d defect  & bimodule over algebras in $\mathcal{C}$\\
     \hline
     2-morphism & instanton &  bimodule map\\
     \hline
     tensor product & fusion of 1d defects  & tensor product of separable algebras\\
     \hline
    \end{tabular}
    \caption{fusion 2-category model for 2+1D topological orders}
    \label{corres}
\end{table}
    \end{center}
\twocolumngrid

\section{ Algorithms of Calculation}
 \label{Algorithms}
We encourage readers who are not familiar with the language of category to see \ref{AlgebraAndModule} for basic definitions and properties of algebras and modules in the context of category theory.

\subsection{Finding $1$d defects} 
$1$d defects are  separable algebras.
This paper will focus on the case when the modular tensor category $\mathcal{C}$ is pointed braided fusion category. The most general example of a pointed fusion category
 is the category $\text{Vec}_G^{\omega,c}$
  of finite dimensional vector spaces graded by a finite abelian group
 $G$ with the associator twisted by the 3-cocycle $\omega \in Z^3(G, U(1))$ and braiding $c$. For a general finite group $G$, according to \cite{10.1155/S1073792803205079,etingof2016tensor}, the separable algebras in  $\text{Vec}_G^\omega$ are classified by a pair $(H,\psi)$ up to Morita equivalence. Where $H$ is a subgroup of $G$ and $\psi\in \text{C}^2(H, U(1))$ satisfying $d\psi = \omega|^{-1}_{H\times H\times H}$. We denote the group algebra of $H$ with the multiplication twisted by $\psi$ as $[H,\psi]$. $\psi$ is omitted when it is trivial. We use $[H,\psi]$ as a representative in each Morita equivalence class. We'll denote the category of right $[H,\psi]$-modules  $(\text{Vec}_G^\omega)_{[H,\psi]}$ as $\cM(H,\psi)$.
  However, the Morita class of separable algebras in $\text{Vec}_G^\omega$ is not in one-to-one correspondence with $(H,\psi)$; there is still redundancy. If two such twisted group algebras are conjugated to each other under the adjoint action of $G$, then their module categories are equivalent as $\text{Vec}_{G}^\omega$-module\cite{Natale_2017}. One should quotient out this equivalence relation to get the correct classification. We refer readers to \ref{Gomega} for the details.
 \subsection{Fusion rule of $1$d defects}
  The fusion of two $1$d defects is the tensor product of two algebras defined in \ref{CondensationCompletion}, and can be expressed as a Morita equivalent algebra which is the direct sum of some simple $1$d defects. 
 
Let $\cA$ be a fusion n-category, $A\in \cA$ be an object, and $\{A_i\}$ be a set of representative of isomorphic class of simple objects in $\cA$.
We can then do a direct sum decomposition 
\begin{equation}
    A\simeq\oplus_{i} (V_{A}^i \odot A_i)
\end{equation}
where $V_A^i\in n\ve$ is a linear $(n-1)$-category, and $\odot: n\ve \times \cA \rightarrow \cA$ is the action of $n\ve$ on $\cA$, which is  determined by the linear structure of $\cA$ 
 \begin{equation}
     \Hom_{\cA}(V\odot A,B) \simeq \Hom_{n\ve}(V, \Hom_{\cA}(A,B)), 
 \end{equation}
 $\forall V \in n\ve; A,B\in \cA$. When $n=1$, the right hand side is nothing but a vector space with dimension $\dim(V)\dim(\Hom_{\cA}(A,B))$, so $V\odot A = A^{\oplus \dim(V)}$. When $n=2$, $\Hom_{\cA}(A,B)$ is a finite semisimple linear 1-category, which is just a direct sum of $|\Hom_{\cA}(A,B)|$ $\ve$'s, where $|\Hom_{\cA}(A,B)|$ is the number of simple objects in $\Hom_{\cA}(A,B)$. So $\Hom_{2\ve}(V,\Hom_{\cA}(A,B))$ is $|V||\Hom_{\cA}(A,B)|$ copies of $\ve$, and then $V\odot A \simeq A^{\oplus |V|}$. For $n\geq 3$, $\Hom_{\cA}(A,B)$ is no longer simply some copies of $(n-1)\ve$, $V\odot X$ is then not some copies of $X$.

When $n=2$,
accroding to the unique decomposition theorem in semisimple 2-category\cite{douglas2018fusion},
$V_A^i$ doesn't depend on the choice of representatives but only the equivalent classes of simple objects. 
It's easy to see in general 
$V_{A}^i\neq \Hom_\cA(A,A_i)$. Because in a higher category, the hom-category between two non-isomorphic simple objects is in general not 0. Similarly, the fusion rule in $\cA$ is  
 \begin{equation}
\label{FusionCoeffients}
     A\Box B \simeq \oplus_{C\in \mathrm{Irr}(\cA)} V_{A,B}^C\odot C
 \end{equation}
 $V_{A,B}^C \in n\ve$, and $V_{A,B}^C\neq \Hom_\cA(A\Box B,C)$ in general, and we don't have a general procedure to determine the fusion coefficient $V_{A,B}^C$ yet.
 
 However, for a pointed braided fusion category $\cC=\ve_{G}^{\omega,c}$ mentioned above, the relative Deligne tensor product of finite semisimple modules over $\cC$ was studied in~\cite{Decoppet_2023RDT}. Let $\cM(E,\phi)$, $\cM(F,\psi)$ be two left $\cC$-module, then the relative tensor product of them is multiples of a certain simple module $\cM(H,\rho)$
 \begin{equation}
 \label{RDTEQ}
     \cM(E,\phi)\boxtimes_\cC \cM(F,\psi)\simeq \cM(H,\rho)^{\oplus\alpha}
 \end{equation}
 where the subgroup $H$, $\rho\in \mathrm{H}^2(H,U(1))$ and the multiplicity $\alpha$ are all determined by the subgroups $E,F$ and $\phi\in\mathrm{H}^2(E,U(1)), \psi\in \mathrm{H}^2(F,U(1))$. We review the details in \ref{RDT}. The tensor product of separable algebra in $\cC$ then satisfies the same fusion rule
 \begin{equation}
     [E,\phi]\Box [F,\psi] \sim_{M} [H,\rho]^{\oplus \alpha}.
 \end{equation}

 \begin{remark}
The fusion of 1-dimensional gapped phases or gapped defects has studied through various methods in recent years. In~\cite{Roumpedakis_2023}, the authors calculate the fusion of 1+1D defects coming from gauging 1-form symmetry in 2+1D quantum field theory, and the fusion coefficients are 1+1D TQFT. While we are preparing this paper, the authors of~\cite{stephen2024fusiononedimensionalgappedphases} use finite depth quantum circuits and local unitaries to study the fusion of one-dimensional gapped phases and defects on them. The fusion coefficients are  also interpreted as 1+1D system. In general, when we fuse codimension-1 defects in a n+1D topological order or fuse two $n$D topological phases, the fusion coefficient $V_{A,B}^C \in n\ve$. We can interpret every indecomposable object in $n\ve$ as an anomaly-free $(n-1)+1$D topological order admitting gapped boundary \cite{Lan2023CategoryOS,Kong_2022ql1}, and $V_{A,B}^C\odot C$ is the stacking of this anomaly free topological order $V_{A,B}^C$ with the $(n-1)+1$D defect or phase $C$. Since 1+1D anomaly-free topological order is either trivial or some copies of the trivial one, so the stacking result is just some copies of $C$. In higher dimensions, for example, if we consider the fusion of membrane defects $A$ and $B$ in a $3+1$D topological order, $V_{A, B}^C$ labels some anomaly-free $2+1$D non-chiral topological order, and the stacking $V_{A,B}^C\odot C$ can be rather nontrivial.
 \end{remark}
\subsection{Finding 0d defects}
  The 0d defects between 1d defects are bimodules.
  One can use the free module decomposition to find the simple modules over an separable algebra $A$ in a fusion category $\cC$. As said in Lemma \ref{A.1}, separability of the algebra $A$ makes sure that any simple right $A$ module is a direct summand of a free module $i\otimes A$ for some simple object $i\in\cC$. To find all the simple right $A$-modules, we just need to do direct sum decomposition in the category of right $A$-module $\cC_A$,  for $i\otimes A, \forall i\in \mathcal{C}$. In the similar way we can find all simple $A$-$B$-bimodules for $A,B$ both separable algebras.
  \subsection{Fusion of 0d defects}
  After knowing all the $0$d defects, we can compute their fusion rule, both along with and perpendicular to $1$d defects. The former is to compute the composition, and the latter is to compute the tensor product.  The tensorator structure \eqref{tensorator} just means that the order of composition and tensor product of defects on the walls doesn't matter.

  The only one tricky thing is that with invertible bimodules we choose, one can represent the tensor product of bimodules with a bimodule between the Morita equivalent $1$d defects which are in the form of direct sum of some simple $1$d defects.  Remember the fusion of point like defects is the tensor product of 1-morphisms defined as \eqref{TP1morR} and \eqref{TP1morL}. If we have $M\in {}_A\cC_B$, $N\in {}_{A'}\cC_{B'}$, then $M\Box N\in {}_{A\Box A'}\cC_{B\Box B'}$. As mentioned in Remark.\ref{WhyInvertible}, if $D$ and $E$ are $1$d defects of direct sum of some representatives, and $A\Box A'\sim_M D$, $B\Box B'\sim_M E$, with the isomorphism we choose, we can identify an object in ${}_D\cC_E$ as the fusion result. More explicitly, if $S\in {}_D\cC_{A\Box A'}$ and $P\in {}_{B\Box B'}\cC_{E}$ are two invertible bimodules between $D$ and $A\Box A'$, $B\Box B'$ and $E$ respectively, then at the cost of these two invertible $0$d defects, we can use\begin{equation}
  \label{PointsWithWall}
    (S\ot[A\Box A'] M\Box N )\ot[B\Box B'] P \in {}_{D}\cC_{E}
\end{equation} 
as the fusion result of $M$ and $N$. 

\begin{remark}
Such identification is actually an equivalence between hom-categories. If $A\sim_M A'$ and $B\sim_M B'$ in $\Sigma \cC$, then ${}_A\cC_{A'}$ and ${}_B\cC_{B'}$ are equivalent, and the equivalence is induced by the 1-isomorphism i.e. invertible bimodules between $A$ and $A'$, $B$ and $B'$. Since invertible bimodules may not be unique, there is no canonical equivalence between ${}_A\cC_{A'}$ and ${}_B\cC_{B'}$. 
\end{remark}

In the cases we care about, algebras and modules are both some group element graded vector spaces, and algebra actions are linear maps. To find direct summands of a free module, one just need to find subspaces of the free module that are invariant under algebra actions. As for the fusion rule of bimodules, the composition is relative tensor product, all the maps in the relative tensor product \ref{RTP} are now linear maps. So all the calculations are straightforward linear algebra.

\section{Example: Toric Code}
\subsection{Separable Algebras in Toric Code}
\label{ToricCode}
Toric code given by Kitaev \cite{KITAEV20032} is the simplest non-chiral nontrivial 2+1D topological order. We can use the fusion category $\text{Vec}_{\mathbb{Z}_2\times \mathbb{Z}_2}$ to describe the anyons and their fusion data. We denote the group elements of $\mathbb{Z}_2\times \mathbb{Z}_2$ by $\{1,a,b,c\}$, satisfying $a^2=b^2=c^2=1, ab=c$. There are four simple objects $\mathbbm{1}, e,m,f$ in  $\text{Vec}_{\mathbb{Z}_2\times \mathbb{Z}_2}$, which only contains a 1-dimensional space graded by $\{1, a, b, c\}$, respectively. Due to the group multiplications, they satisfy the fusion rule $e\otimes e = m\otimes m = f\otimes f =\mathbbm {1}, e\otimes m = m\otimes e = f$, and then correspond to the four simple anyons in toric code. As a modular tensor category, we take the trivial associator, and the following braidings between simple objects\begin{equation}
\label{TCBraiding}
    c_{e,m}=c_{f,f}=c_{e,f}=c_{f,m}=-1,
\end{equation} 
all other braidings between simple objects are trivial.


$G=\mathbb{Z}_2\times \mathbb{Z}_2$ has $5$ subgroup, $  \mathbb Z_1=\{1\}, (\mathbb Z_2)_e=\{1,a\}, (\mathbb Z_2)_m=\{1,b\}, (\mathbb Z_2)_f = \{1,c\}, \mathbb Z_2\times \mathbb Z_2=\{1,a,b,c\} $. Since
\begin{equation}
    \text{H}^2(\mathbbm{Z}_2, U(1))=\{1\},  \text{H}^2(\mathbbm{Z}_2\times \mathbbm{Z}_2, U(1))=\mathbbm{Z}_2,
\end{equation} 
there are 6 Morita equivalent classes of separable algebras in $\text{Vec}_{\mathbb{Z}_2\times \mathbb{Z}_2}$. As objects in $\text{Vec}_{\mathbb{Z}_2\times \mathbb{Z}_2}$, 
\begin{equation}
    \begin{split}
        &[\mathbb Z_1]=\mathbbm1,\\
        &[(\Z_2)_e]=\mathbbm1\oplus e, [(\Z_2)_m)]=\mathbbm1\oplus m, [(\Z_2)_f]=\mathbbm 1\oplus f, \\&
        [\Z_2\times \Z_2]=\mathbbm1\oplus e\oplus m\oplus f, \\ &[(\Z_2\times \Z_2,\psi)]=(\mathbbm1\oplus e\oplus m\oplus f)_{\psi}.
    \end{split}
\end{equation} We can choose the basis vector in $\mathbbm 1, e, m ,f$ as $|\mathbbm1\rangle, |e\rangle, |m\rangle, |f\rangle$ such that the multiplication rules of $[\Z_2\times \Z_2]$ are similar to the fusion rules, as shown in table \ref{AlgM}. $[\mathbb Z_1], [(\Z_2)_e], [(\Z_2)_m], [(\Z_2)_f] $ are just the sub-algebras of $[\Z_2\times \Z_2]$.

\begin{table}
    \centering
    \begin{tabular}{c c c c c}  
        & $|\mathbbm1\rangle$ & $|e\rangle$ & $|m\rangle$ & $|f\rangle$ \\ 
  \hline
$|\mathbbm{1}\rangle$ & $|\mathbbm{1}\rangle$ & $|e\rangle$ & $|m\rangle$ & $|f\rangle$ \\  
 \hline
 $|e\rangle$ & $|e\rangle$ & $|\mathbbm1\rangle$&$|f\rangle$& $|m\rangle$ \\
 \hline
 $|m\rangle$ &  $|m\rangle$&  $|f\rangle$&  $|\mathbbm 1\rangle$ &  $|e\rangle$\\
 \hline
 $|f\rangle$ &  $|f\rangle$& $|m\rangle$ & $|e\rangle$ & $|\mathbbm1 \rangle$ \\
    \end{tabular}
    \caption{Multiplications of $[\Z_2\times \Z_2]$, row times column}
    \label{AlgM}
\end{table}

The multiplication of $[\Z_2\times \Z_2,\psi]$ is twisted by a nontrivial 2-cocycle $\psi\in \text{H}^2(\mathbb{Z}_2\times\mathbb{Z}_2, U(1))$. For simplicity, we use \begin{equation}
    \psi(b,a)=\psi(b,c)=\psi(c,a)=\psi(c,c)=-1.
\end{equation}
Then, with the basis vector $|\mathbbm1\rangle, |e\rangle, |m\rangle, |f\rangle$, the multiplication rules of $[\Z_2\times \Z_2,\psi]$ are shown in table \ref{AlgM2}.

\begin{table}
    \centering
    \begin{tabular}{c c c c c}  
        & $|\mathbbm1\rangle$ & $|e\rangle$ & $|m\rangle$ & $|f\rangle$ \\ 
  \hline
$|\mathbbm{1}\rangle$ & $|\mathbbm{1}\rangle$ & $|e\rangle$ & $|m\rangle$ & $|f\rangle$ \\  
 \hline
 $|e\rangle$ & $|e\rangle$ & $|\mathbbm1\rangle$&$|f\rangle$& $|m\rangle$ \\
 \hline
 $|m\rangle$ &  $|m\rangle$&  $-|f\rangle$&  $|\mathbbm 1\rangle$ &  $-|e\rangle$\\
 \hline
 $|f\rangle$ &  $|f\rangle$& $-|m\rangle$ & $|e\rangle$ & $-|\mathbbm1 \rangle$ \\
    \end{tabular}
    \caption{Multiplications of $[\Z_2\times \Z_2,\psi]$, row times column}
    \label{AlgM2}
\end{table}

These are 6 algebras are 6 1-dimensional gapped domain wall phases in toric code. 

\subsection{Bimodules Over Algebras}
Now we compute the bimodules of these separable algebras. We denote $\text{Vec}_{\mathbb{Z}_2\times\mathbb{Z}_2}$ as $\TC$ for simplicity.

\begin{subsubsection}{$[\Z_1]$-bimodule.}
Since $[\Z_1]=\mathbbm 1$, then $_{[\Z_1]}\TC_{[\Z_1]}=\TC$ with monoidal structure just the tensor product of $\TC$.
\end{subsubsection}

\begin{subsubsection}{$[(\Z_2)_e], [(\Z_2)_m], [(\Z_2)_f]$-bimodules.}
Since the algebra structures of these three algebras are quite similar, we just take $[(\Z_2)_m]$ as an example and just show the results of the other two.

$[(\Z_2)_m]=\mathbbm1\oplus m$, $[(\Z_2)_m]\otimes \mathbbm1 \otimes [(\Z_2)_m]=\langle\ |\mathbbm1\rangle|\mathbbm 1\rangle|\mathbbm1\rangle,|\mathbbm1 \rangle|\mathbbm1 \rangle| m \rangle, |m \rangle|\mathbbm1 \rangle|\mathbbm1 \rangle,|m \rangle|\mathbbm1 \rangle|m \rangle \ \rangle$. The left and right $A(H_3)$ actions are just $A(H_3)$ multiplications, and then we can write down the results of the $A(H_3)$ actions, see table \ref{1+m action}. 
\begin{table}
    \centering
    \begin{tabular}{c c c c c c}
        Left & $\mathbbm 1\mathbbm 1\mathbbm 1$ & $m\mathbbm 1 m$& $\mathbbm 1\mathbbm 1 m$&$m\mathbbm 1\mathbbm 1$ & Right \\ 
        \hline
         $\mathbbm 1$& $\mathbbm 1\mathbbm 1\mathbbm 1$ & $m\mathbbm 1 m$& $\mathbbm 1\mathbbm 1 m$&$m\mathbbm 1\mathbbm 1$ & \\
         \hline
         $m$ & $m\mathbbm 1\mathbbm 1$ & $\mathbbm1\mathbbm 1 m$& $m\mathbbm 1 m$&$\mathbbm1\mathbbm 1\mathbbm 1$  &\\
         \hline
         &$\mathbbm 1\mathbbm 1\mathbbm 1$ & $m\mathbbm 1 m$& $\mathbbm 1\mathbbm 1 m$&$m\mathbbm 1\mathbbm 1$& $\mathbbm1$\\
         \hline
         &$\mathbbm 1\mathbbm 1 m$ & $m\mathbbm 1 \mathbbm1$& $\mathbbm 1\mathbbm 1 \mathbbm1$&$m\mathbbm 1 m$& $m$
    \end{tabular}
    \caption{$[(\Z_2)_m]$ action on $[(\Z_2)_m]\otimes \mathbbm1\otimes [(\Z_2)_m]$, we neglect the bracket for simplicity.}
   \label{1+m action}
\end{table} 
Then there are two non-isomorphic indecomposable sub-bimodules, \begin{equation}
\label{(1+m)_0}
\begin{split}
   &M_0^m = \mathbbm 1\oplus m\\& =\langle |\mathbbm1\rangle|\mathbbm1\rangle|\mathbbm1\rangle+|m\rangle|\mathbbm1\rangle |m\rangle, |\mathbbm1\rangle|\mathbbm 1\rangle|m\rangle+|m\rangle|\mathbbm1\rangle|\mathbbm1\rangle \ \rangle
\end{split}
\end{equation}
\begin{equation}
\begin{split}
    &M_1^m = \mathbbm 1\oplus m \\&=\langle |\mathbbm1\rangle|\mathbbm1\rangle|\mathbbm1\rangle-|m\rangle|\mathbbm1\rangle |m\rangle, |\mathbbm1\rangle|\mathbbm 1\rangle|m\rangle-|m\rangle|\mathbbm1\rangle|\mathbbm1\rangle \ \rangle
\end{split}
\label{(1+m)_1}
\end{equation}

To see why they are non-isomorphic bimodules, let's first suppose there is a bimodule isomorphism. As a morphism in the category of graded vector space, it keeps grading
\begin{equation}
    \begin{split}
        \phi: M_0^m&\rightarrow M_1^m\\
        \mathbbm 1 \mathbbm 1 \mathbbm 1+ m\mathbbm 1 m&\mapsto c_1(\mathbbm 1\mathbbm 1\mathbbm 1-m\mathbbm 1 m)\\
        \mathbbm 1\mathbbm 1 m+m\mathbbm 1\mathbbm 1 &\mapsto c_2(\mathbbm 1\mathbbm 1 m- m\mathbbm 1\mathbbm 1).
    \end{split}
\end{equation}
A bimodule map commutes with the algebra action (see \ref{BimoduleMap}). For the left $\ket{m}$ action
\begin{equation}
        c_2(\mathbbm 1\mathbbm 1 m- m\mathbbm 1\mathbbm 1) = -c_1(\mathbbm 1\mathbbm 1 m- m\mathbbm 1\mathbbm 1),
\end{equation}
and for the right $\ket{m}$ action
\begin{equation}
     c_2(\mathbbm 1\mathbbm 1 m- m\mathbbm 1\mathbbm 1) = c_1(\mathbbm 1\mathbbm 1 m- m\mathbbm 1\mathbbm 1),
\end{equation}
then $c_1=c_2=0$, so $M_0^m$ and $M_1^m$ are not isomorphic to each other.

More explicitly, $M_0^m\simeq [(\Z_2)_m]$ as $[(\Z_2)_m]$-bimodule. For $M_1^m$, we choose the basis
\begin{equation*}
    \begin{split}
\ket{\underline{\mathbbm 1}}&=|\mathbbm1\rangle|\mathbbm1\rangle\mathbbm1\rangle-|m\rangle|\mathbbm1\rangle |m\rangle \\
\ket{\underline{m}}&=|\mathbbm1\rangle|\mathbbm 1\rangle|m\rangle-|m\rangle|\mathbbm1\rangle|\mathbbm1\rangle .
    \end{split}
\end{equation*}
The right $[(\Z_2)_m]$ action is
\begin{equation}
\begin{split}
\tau_{M_1^m} :\   M_1^m&\otimes[(\Z_2)_m] \rightarrow  M_1^m \\
 \ket{\underline{\mathbbm1}}&\otimes\ket{m} \mapsto \ket{\underline{m}}\\
 \ket{\underline{m}}&\otimes\ket{m} \mapsto \ket{\underline{\mathbbm 1}}\\
\end{split}
\end{equation}
Later we will refer to such form of action with no extra phase factors as of the fusion rule type. The left $[(\Z_2)_m]$ action is 
\begin{equation}
\begin{split}
\rho_{M_1^m} :\  [(\Z_2)_m]&\otimes M_1^m \rightarrow  M_1^m \\
\ket{m}&\otimes \ket{\underline{\mathbbm1}} \mapsto -\ket{\underline{m}}\\
\ket{m}&\otimes \ket{\underline{m}} \mapsto -\ket{\underline{\mathbbm 1}}\\
\end{split}
\end{equation}
Such form of action with extra phase factors will be called of twisted type. Note that different choices of bases may lead to different phase factors (more generally, matrix elements or structure coefficients) representing the same action; the terms ``fusion rule type" and "twisted type" are always relative to a given basis choice.

Similarly,
\begin{equation*}
    \begin{split}
        [(\Z_2)_m]\otimes e \otimes [(\Z_2)_m]=\langle&\ |\mathbbm1\rangle|e\rangle|\mathbbm1\rangle,|m\rangle|e\rangle|\mathbbm1\rangle,\\
        &|\mathbbm1\rangle|e\rangle|m\rangle,|m\rangle|e\rangle|m\rangle\ \rangle.\\
    \end{split}
\end{equation*}

For this case, we list the actions in table \ref{1+m action2}.

\begin{table}
    \centering
    \begin{tabular}{c c c c c c}
        Left & $\mathbbm 1em$ & $me\mathbbm1$& $\mathbbm 1e\mathbbm 1$&$mem$ & Right \\ 
        \hline
         $\mathbbm 1$&$\mathbbm 1em$ & $me\mathbbm1$& $\mathbbm 1e\mathbbm 1$&$mem$ & \\
         \hline
         $m$ & $mem$&$\mathbbm1 e \mathbbm1$&$me\mathbbm1$&$\mathbbm1 em$  &\\
         \hline
         &$\mathbbm 1em$ & $me\mathbbm1$& $\mathbbm 1e\mathbbm 1$&$mem$& $\mathbbm1$\\
         \hline
         &$\mathbbm1 e\mathbbm1$&$ee\mathbbm1$&$\mathbbm1em$&$me\mathbbm1$& $m$
    \end{tabular}
    \caption{$[(\Z_2)_m]$ action on $[(\Z_2)_m]\otimes e\otimes [(\Z_2)_m]$, we omit the bracket for simplicity.}
   \label{1+m action2}
\end{table} 

Then there are two non-isomorphic indecomposable sub-bimodules, \begin{equation}
\label{e+f_0}
\begin{split}
   &N_0^m = e\oplus f \\ &=\langle\ |\mathbbm1\rangle|e\rangle|\mathbbm1\rangle+|m\rangle|e\rangle|m\rangle, |\mathbbm1\rangle|e\rangle|m\rangle+|m\rangle|e\rangle|\mathbbm1\rangle\ \rangle\\
\end{split}
\end{equation}
\begin{equation}
\label{e+f_1}
\begin{split}
    &N_1^m = e\oplus f \\&= \langle\ |\mathbbm1\rangle|e\rangle|\mathbbm1\rangle-|m\rangle|e\rangle|m\rangle, |\mathbbm1\rangle|e\rangle|m\rangle-|m\rangle|e\rangle|\mathbbm1\rangle\ \rangle
\end{split}
\end{equation}

The left and right $[(\Z_2)_m]$ actions on $N_0^m$ are of the fusion rule type. The right action of $[(\Z_2)_m]$ on $N_1^m$ is of the fusion rule type, while the left action is twisted: 
\begin{equation}
\begin{split}
    \rho_{N_1^m}: [(\Z_2)_m] &\otimes N_1^m \rightarrow N_1^m\\
     \ket{m}&\otimes\ket{\underline{e}}\mapsto -\ket{\underline{f}}\\
     \ket{m}&\otimes \ket{\underline{f}}\mapsto -\ket{\underline{e}}
\end{split}
\end{equation}
Here the basis we use for $N_1^m$ is again \eqref{e+f_1},
\begin{equation*}
    \begin{split}
    \ket{\underline{e}} &=|\mathbbm1\rangle|e\rangle|\mathbbm 1\rangle-|m\rangle|e\rangle|m\rangle \\
\ket{\underline{f}}
&=|\mathbbm1\rangle|e\rangle|m\rangle-|m\rangle|e\rangle\ket{\mathbbm1}
    \end{split}
\end{equation*}

One can show that $[(\Z_2)_m]\otimes m \otimes [(\Z_2)_m]$, $[(\Z_2)_m]\otimes f \otimes [(\Z_2)_m]$ will not give bimodule that is not isomorphic to the 4 bimodules mentioned above. So there are just 4 simple objects in $_{[(\Z_2)_m]}\TC_{[(\Z_2)_m]}$.  

Now we compute the relative tensor product (or composition) of bimodules, which physically means the $0$d defects on a $1$d defect fuse along it. We show $ N_0^m
\ot[ {[(\Z_2)_m]} ]  M_1^m$ as an example, all other relative tensor products are computed similarly. We'll short the relative tensor product as $\ot[A]$ if there is no confusion.

The relative tensor product of $ N_0^m$ and $M_1^m$  is defined by (see \ref{RTP})
\begin{equation}
   \begin{tikzcd}
	{N_0^m\otimes [(\Z_2)_m] \otimes M^m_1} & {N^m_0\otimes M_1^m} & {N_0^m\ot[A]M_1^m}
	\arrow["\tau"', shift right, from=1-1, to=1-2]
	\arrow["\rho", shift left, from=1-1, to=1-2]
	\arrow["u", from=1-2, to=1-3]
\end{tikzcd}
\end{equation}    
where $\rho$ is the left action for $M_1$, $\tau$ is the right action for $N_0$, and $u$ is the quotient map. More concretely, let $\ket{v}\in N_0^m$, $\ket{w}\in M_1^m$, $u$ maps $\ket{v}\otimes \ket{w}$ to $\ket{v}\ot[A]\ket{w}\in N_0^m\ot[A]M_1^m$.  Since $u\circ \rho = u\circ \tau$, we have
\begin{equation}
    \begin{split}
        \ket{f}\ot[A]\ket{\mathbbm 1}&= -\ket{e}\ot[A]\ket{m}\equiv \ket{\Tilde{f}}\\
        \ket{f}\ot[A]\ket{m}& = -\ket{e}\ot[A]\ket{\mathbbm 1} \equiv \ket{\Tilde{e}}
    \end{split}
\end{equation}
The induced left action (see remark.\ref{InducedModuleAction}) is defined by
\begin{equation}
   \begin{tikzcd}
	{[(\Z_2)_m]\otimes N_0^m\otimes [(\Z_2)_m] \otimes M_1^m} & {N_0^m\otimes [(\Z_2)_m]\otimes M_1^m} \\
	{[(\Z_2)_m]\otimes N_0^m\otimes M_1^m} & {N_0^m\otimes M_1^m} \\
	{[(\Z_2)_m]\otimes N_0^m \ot[A] M_1^m} & {N_0^m\ot[A] M_1^m}
	\arrow["{\text{id}\otimes\tau}"', shift right, from=1-1, to=2-1]
	\arrow["{\text{id}\otimes\rho}", shift left, from=1-1, to=2-1]
	\arrow["{\text{id}\otimes u}", from=2-1, to=3-1]
	\arrow["\rho", from=1-1, to=1-2]
	\arrow["{\bar{\rho}}", dashed, from=3-1, to=3-2]
	\arrow["\tau"', shift right, from=1-2, to=2-2]
	\arrow["u", from=2-2, to=3-2]
	\arrow["\rho", from=2-1, to=2-2]
	\arrow["\rho", shift left, from=1-2, to=2-2]
\end{tikzcd}
\end{equation}    

By tracking the commuting diagram, we can get the induced left action $\bar{\rho}$,
    \begin{equation}
    \begin{split}
           \bar{\rho}: [(\Z_2)_m]\otimes N_0^m\ot[A] M_1^m &\rightarrow N_0^m\ot[A] M_1^m\\
           \ket{\mathbbm 1}\otimes \ket{\Tilde{e}} &\mapsto \ket{\Tilde{e}}\\
           \ket{\mathbbm 1}\otimes \ket{\Tilde{f}} &\mapsto \ket{\Tilde{f}}\\
           \ket{m}\otimes \ket{\Tilde{e}} &\mapsto -\ket{\Tilde{f}}\\
           \ket{m}\otimes \ket{\Tilde{f}} &\mapsto -\ket{\Tilde{e}}
    \end{split}
    \end{equation}
In the same way, the induced right action $\bar{\tau}$ is
    \begin{equation}
        \begin{split}
            \bar{\tau}: N_0^m\ot[A] M_1^m \otimes  [(\Z_2)_m] &\rightarrow N_0^m\ot[A]M_1^m\\
            \ket{\Tilde{e}}\otimes \ket{\mathbbm 1} &\mapsto \ket{\Tilde{e}}\\
            \ket{\Tilde{f}}\otimes \ket{\mathbbm 1} &\mapsto \ket{\Tilde{f}}\\
            \ket{\Tilde{e}}\otimes \ket{m} &\mapsto \ket{\Tilde{f }}\\
            \ket{\Tilde{f}}\otimes \ket{m} &\mapsto \ket{\Tilde{e}}
        \end{split}
    \end{equation}

so $N_0^m\ot[A]M_1^m\cong N_1^m$ is a $[(\Z_2)_m]$-bimodule with the bimodule isomorphism
\begin{equation}
\begin{split}
    \eta:     N_0^m\ot[A] M_1^m&\rightarrow N_1^m\\
                 \ket{\Tilde{e}}&\mapsto \ket{\underline e}\\
                 \ket{\Tilde{f}}&\mapsto \ket{\underline f}\\
\end{split}
\end{equation}
All the nontrivial relative tensor products $\ot[A]$ are 
    \begin{equation}
\begin{split}
   & M_1^m\ot[A] M_1^m\cong N_0^m\ot[A] N_0^m \cong N_1^m\ot[A] N_1^m \cong M_0^m, \\ 
   & M_1^m\ot[A] N_0^m \cong N_0^m\ot[A] M_1^m\cong N_1^m,
\end{split}
\end{equation}
so $_{[(\Z_2)_m]}\TC_{[(\Z_2)_m]}$ is equivalent to $\ve_{\Z_2\times \Z_2}^\omega$ for some $\omega\in \mathrm{H}^3(\Z_2\times \Z_2, U(1))$.
Since $_{[(\Z_2)_m]}\TC_{[(\Z_2)_m]}$ is Morita equivalent to $\ve_{\Z_2\times \Z_2}$ as fusion categories, we must have 
$_{[(\Z_2)_m]}\TC_{[(\Z_2)_m]}\simeq \text{Vec}_{\mathbb{Z}_2\times \mathbb{Z}_2}$ as fusion categories. Physically, $M_0^m, M_1^m, N_0^m, N_1^m$ are $4$ point-like defects on the domain wall $[(\Z_2)_m]$, their fusion on the wall satisfies the $\mathbb Z_2 \times \mathbb Z_2$ fusion rule.
\\\\
We summarize the results of the other two algebras: 

$_{[(\Z_2)_e]}\TC_{[(\Z_2)_e]}\simeq \text{Vec}_{\mathbb Z_2 \times \mathbb Z_2}$ as fusion categories. There are 4 simple objects \begin{equation}
\label{rrWall0dDefect}
   M_0^e = \mathbbm1\oplus e, M_1^e=\mathbbm 1\oplus e,  N_0^e = m\oplus f, N_1^e = m\oplus f,
\end{equation}
where $M_0^e$ is the tensor unit, $M_0^e$ and $N_0^e$ have fusion rule type of $[(\Z_2)_e]$ two-sided actions, and $M_1^e$, $N_1^e$ have fusion rule type of right 
$[(\Z_2)_e]$ actions and twisted left actions
\begin{equation}
\begin{split}
     \rho_{M_1^e}: [(\Z_2)_e] &\otimes M_1^e \rightarrow M_1^e\\
     \ket{e}&\otimes\ket{e}\mapsto -\ket{\mathbbm1}\\
     \ket{e}&\otimes \ket{\mathbbm1}\mapsto -\ket{e}
\end{split}
\end{equation}
\begin{equation}
\begin{split}
     \rho_{N_1^e}:[(\Z_2)_e] &\otimes N_1^e \rightarrow N_1^e\\
     \ket{e}&\otimes\ket{m}\mapsto -\ket{f}\\
     \ket{e}&\otimes \ket{f}\mapsto -\ket{m}.
\end{split}
\end{equation}
Physically, $ M_0^e, M_1^e, N_0^e, N_1^e$ are $4$ point-like defects on the domain wall $[(\Z_2)_e]$, they have the $\mathbb Z_2 \times \mathbb Z_2$ fusion rule along the wall.

$_{[(\Z_2)_f]}\TC_{[(\Z_2)_f]}\simeq \text{Vec}_{\mathbb Z_2 \times \mathbb Z_2}$ as fusion categories. There are 4 simple objects \begin{equation}
\label{0dDWOnExchange}
    M_0^f=\mathbbm 1\oplus f, M^f_1 = \mathbbm 1\oplus f, N^f_0 = e\oplus m, N^f_1 = e\oplus m,
\end{equation}
where $M_0^f$ is the tensor unit, $M^f_0$ and $N^f_0$ have fusion rule type of $[(\Z_2)_f]$ two-sided actions, and $M^f_1$, $N^f_1$ have fusion rule type of right 
$[(\Z_2)_f]$ actions and twisted left actions
\begin{equation}
\begin{split}
     \rho_{M_1^f}: [(\Z_2)_f]&\otimes M^f_1 \rightarrow M^f_1\\
     \ket{f}&\otimes\ket{f}\mapsto -\ket{\mathbbm1}\\
     \ket{f}&\otimes \ket{\mathbbm1}\mapsto -\ket{f}
\end{split}
\end{equation}

\begin{equation}
\begin{split}
     \rho_{N^f_1}: [(\Z_2)_f] &\otimes N_1^f \rightarrow N_1^f\\
     \ket{f}&\otimes\ket{e}\mapsto -\ket{m}\\
     \ket{f}&\otimes \ket{m}\mapsto -\ket{e}.\\
\end{split}
\end{equation}

Physically, $ M^f_0, M^f_1, N^f_0, N^f_1$ are $4$ point-like defects on the domain wall $\mathbbm 1\oplus f$, they have the $\mathbb Z_2 \times \mathbb Z_2$ fusion rule along the wall.
\end{subsubsection}

\begin{subsubsection}{$[\Z_2\times \Z_2]$-bimodule.}
$[\Z_2\times \Z_2]=\mathbbm1 \oplus e\oplus m\oplus f$, and $_{[\Z_2\times \Z_2]}\mathcal{C}_{[\Z_2\times \Z_2]}\simeq \text{Vec}_{\mathbb Z_2\times \mathbb Z_2}$ as fusion categories. There are 4 simple objects
\begin{equation}
\label{Z2Z2Bimod}
\begin{split}
        M_0=M_1=M_2=M_3 = \mathbbm1 \oplus e\oplus m\oplus f.
\end{split}
\end{equation} 
$M_0$ is just the algebra itself. $M_1, M_2, M_3$ have fusion rule type of right $[\Z_2\times \Z_2]$ actions, and twisted left $[\Z_2\times \Z_2]$ actions. We write down the twisted type of action on $\ket{\mathbbm 1}$, and the other left actions can be derived from the right actions using the compatible conditions \eqref{CompatibleAction}.
\begin{equation}
    \begin{split}
        \rho_{M_1}: [\Z_2\times \Z_2] &\otimes M_1 \rightarrow M_1 \\
                       \ket{e}&\otimes\ket{\mathbbm 1}\mapsto -\ket{e}\\
                       \ket{f}&\otimes\ket{\mathbbm1}\mapsto -\ket{f}
    \end{split}
\end{equation}
\begin{equation}
    \begin{split}
        \rho_{M_2}: [\Z_2\times \Z_2] &\otimes M_2 \rightarrow M_2 \\
                       \ket{m}&\otimes\ket{\mathbbm 1}\mapsto -\ket{m}\\
                       \ket{f}&\otimes\ket{\mathbbm1}\mapsto -\ket{f}
    \end{split}
\end{equation}

\begin{equation}
    \begin{split}
        \rho_{M_3}: [\Z_2\times \Z_2] &\otimes M_3 \rightarrow M_3\\
                       \ket{e}&\otimes\ket{\mathbbm 1}\mapsto -\ket{e}\\
                       \ket{m}&\otimes\ket{\mathbbm1}\mapsto -\ket{m}
    \end{split}
\end{equation}
Physically, $ M_{0,1,2,3}$ are $4$ point-like defects on the $[\Z_2\times \Z_2]$ wall, and they have the $\mathbb Z_2 \times \mathbb Z_2$ fusion rule along the wall.
\end{subsubsection}

\begin{subsubsection}{$[\Z_2\times \Z_2,\psi]$-bimodule.}
$[\Z_2\times \Z_2,\psi]=(\mathbbm1 \oplus e\oplus m\oplus f)_\psi$, and $_{[\Z_2\times \Z_2,\psi]}\TC_{[\Z_2\times \Z_2,\psi]}\simeq \ve_{\mathbb Z_2\times \mathbb Z_2}$ as fusion categories. There are 4 simple objects 
\begin{equation}
\begin{split}
   M_0^\psi= M^\psi_1= M^\psi_2= M^\psi_3 = \mathbbm1 \oplus e\oplus m\oplus f.
\end{split}
\end{equation}
$M^\psi_0$ is just the algebra itself. $M^\psi_1, M^\psi_2, M^\psi_3$ have fusion rule type of right $[\Z_2\times \Z_2,\psi]$ actions, see table \ref{AlgM2}, and twisted left $[\Z_2\times \Z_2,\psi]$ actions. We spell out the twisted action on $\ket{\mathbbm 1}$ 
\begin{equation}
\begin{split}
    \rho_{M_1^\psi}: [\Z_2\times \Z_2,\psi]&\otimes M_1^\psi \rightarrow M_1^\psi\\
                           \ket{e}&\otimes\ket{\mathbbm 1}\mapsto -\ket{e}\\
                           \ket{f}&\otimes\ket{\mathbbm 1}\mapsto -\ket{f}
\end{split}
\end{equation}
\begin{equation}
\begin{split}
    \rho_{M_2^\psi}: [\Z_2\times \Z_2,\psi]&\otimes M_2^\psi \rightarrow M_2^\psi\\
                           \ket{m}&\otimes\ket{\mathbbm 1}\mapsto -\ket{m}\\
                           \ket{f}&\otimes\ket{\mathbbm 1}\mapsto -\ket{f}
\end{split}
\end{equation}
\begin{equation}
\begin{split}
    \rho_{M_3^\psi}: [\Z_2\times \Z_2,\psi]&\otimes M_3^\psi \rightarrow M_3^\psi\\
                           \ket{e}&\otimes\ket{\mathbbm 1}\mapsto -\ket{e}\\
                           \ket{m}&\otimes\ket{\mathbbm 1}\mapsto -\ket{m}
\end{split}
\end{equation}
Physically, $ M_{0,1,2,3}^\psi$ are $4$ point-like defects on the $[\Z_2\times \Z_2,\psi]$ wall, and they have the $\mathbb Z_2 \times \mathbb Z_2$ fusion rule along the wall.
\end{subsubsection}

We show some examples of other bimodules over two different algebras and list the results.
\begin{subsubsection}{$[\Z_1]$-$[(\Z_2)_f]$-bimodule.}
    $[\Z_1]$--$[(\Z_2)_f]$--bimodule is just a right $[(\Z_2)_f]$-module, there are two simple non-isomorphic right $[(\Z_2)_f]$-module, $\chi_+=\mathbbm 1\oplus f$ and $\chi_-=e\oplus m$ with the fusion rule type of right $[(\Z_2)_f]$-action, so the  $[\Z_1]$-$[(\Z_2)_f]$-bimodule category in $\TC$ is $\ve\oplus \ve$.
\begin{remark}
    As we'll see later in the section \ref{FDW}, $[(\Z_2)_f]=\mathbbm 1\oplus f$ corresponds to the invertible domain wall in toric code called $e$-$m$ exchange wall, and the two simple modules $\chi_+,\chi_-$ are just the dislocation defects at the end of the $e$-$m$ exchange wall\cite{Bombin_2010,Kitaev_2012}. The results are the same as left $[(\Z_2)_f]$-modules. We can then calculate the fusion rule of the dislocations (shrinking a $e$-$m$ exchange wall to a point-like defect) as defined in \eqref{FusionOfPoints}
    \begin{equation}
    \begin{split}
        \chi_{\pm}\ot[\mathbbm 1\oplus f] \chi_{\pm} &= \mathbbm 1\oplus f\\
        \chi_{\pm}\ot[\mathbbm 1\oplus f]\chi_{\mp} & = e\oplus m.
    \end{split}
    \end{equation}   
\end{remark}
\end{subsubsection}
lubricant
\begin{subsubsection}{$[(\Z_2)_e]$-$[(\Z_2)_m]$-bimodule.} 
     To calculate $_{[(\Z_2)_e]}\TC_{[(\Z_2)_m]}$, we decompose free $[(\Z_2)_e]$-$[(\Z_2)_m]$-bimodules $[(\Z_2)_e]\otimes i \otimes [(\Z_2)_m]$ where $i\in \{\mathbbm1,e,m,f\}$. It turns out that these $4$ free bimodules are isomorphic indecomposable bimodules, so $_{[(\Z_2)_e]}\TC_{[(\Z_2)_m]}$ is just $\ve$.
\end{subsubsection}

\begin{subsubsection}{$[(\Z_2)_f]$-$[\Z_2\times \Z_2,\psi]$-bimodule.}
    As the same, we decompose free $[(\Z_2)_f]$-$[\Z_2\times \Z_2,\psi]$-bimodules $[(\Z_2)_f]\otimes i \otimes [\Z_2\times \Z_2,\psi]$, $i\in \{\mathbbm1,e,m,f\}$. For each $i$, there are two non-isomorphic simple sub-bimodules $M_{1,i} = \mathbbm 1\oplus e\oplus m\oplus f$ and $M_{2,i} = \mathbbm 1\oplus e\oplus m\oplus f$. $M_{1,i}$ are isomorphic for different $i$, so are $M_{2,i}$, so $_{[(\Z_2)_f]}\TC_{[\Z_2\times\Z_2,\psi]}\simeq \ve\oplus\ve$.
\end{subsubsection}
\\

All the algebra bimodule categories are listed at table.\ref{Bimod}.
\onecolumngrid
    \begin{center}
        \begin{table}[!h]
    \centering
    \begin{tabular}{|c|c|c|c|c|c|c|} \hline 
 $\text{Bimod}$& $\mathbbm 1$ & $\mathbbm 1\oplus e$ & $\mathbbm 1 \oplus m$ & $\mathbbm 1\oplus f$ & $[\Z_2\times \Z_2]$& $[\Z_2\times \Z_2,\psi]$  \\ \hline
$\mathbbm 1$ & $\ve_{\mathbb Z_2\times \mathbb Z_2}$ &$\ve\oplus \ve$ & $\ve\oplus\ve$ & $\ve\oplus \ve$ & $\ve$ & $\ve$  \\ \hline
 $\mathbbm 1\oplus e$&	$\ve\oplus\ve$ &	$\ve_{\mathbb Z_2\times \mathbb Z_2}$&	$\ve$ & $\ve$ & $\ve\oplus\ve$& $\ve\oplus\ve$\\ \hline
$\mathbbm 1\oplus m$ & $\ve\oplus\ve$& $\ve$ &  $\ve_{\mathbb Z_2\times \mathbb Z_2}$&	$\ve$ & $\ve\oplus\ve$ & $\ve\oplus\ve$\\ \hline
$\mathbbm 1\oplus f$ & $\ve\oplus\ve$ & $\ve$ & $\ve$ &  $\ve_{\mathbb Z_2\times \mathbb Z_2}$ & $\ve\oplus\ve$ & $\ve\oplus\ve$\\ \hline
$[\Z_2\times\Z_2]$ & $\ve$& $\ve\oplus\ve$ & $\ve\oplus\ve$ &	$\ve\oplus\ve$ &  $\ve_{\mathbb Z_2\times \mathbb Z_2}$& $\ve$\\ \hline
$[\Z_2\times \Z_2, \psi]$ & $\ve$ & $\ve\oplus\ve$ & $\ve\oplus\ve$ &	$\ve\oplus\ve$ & $\ve$&  $\ve_{\mathbb Z_2\times \mathbb Z_2}$\\ \hline
    \end{tabular}
    \caption{Algebra bimodule categories in toric code and $\TF$, algebras in rows on the left and columns on the right.}
   \label{Bimod}
\end{table} 
    \end{center}
\twocolumngrid

\subsection{Fusion of Domain Walls}
\label{FDW}
Remember the tensor product of domain walls is given by (\ref{multi}) and (\ref{comulti}), and the isomorphism between two objects in $\Sigma \TC$ means they are Morita equvalent algebra in $\TC$. One can immediately derive the fusion rules like  
\begin{equation}
\label{FusionRule1}
    \begin{split}
&[(\Z_2)_e]\Box [(\Z_2)_m] \cong [\Z_2\times \Z_2];\\
    &[(\Z_2)_m]\Box [(\Z_2)_e] \cong [\Z_2\times \Z_2,\psi]\\
    &[(\Z_2)_e]\Box [(\Z_2)_f] \cong [\Z_2\times \Z_2]\\
     &[(\Z_2)_m]\Box [(\Z_2)_f] \cong [\Z_2\times \Z_2,\psi]\\
\end{split}
\end{equation}
since the
  left and right hand sides are just isomorphic as algebras in $\mathcal{C}$, and the naive map
  \begin{equation}
  \begin{split}
      \phi:[(\Z_2)_i]\Box[(\Z_2)_j]&\rightarrow [\Z_2\times \Z_2]\  \mathrm{or} \ [\Z_2\times \Z_2,\psi], \\
\ket{i}\ket{j}&\mapsto \ket{ij}
  \end{split}
  \end{equation}
 where $(i,j)=\{(e,m),(m,e),(e,f),(f,e)\}$,
  is an algebra isomorphism.

There are also fusion rules like
\begin{equation}
\label{FusionWallTC}
    \begin{split}
        &[(\Z_2)_e]\Box[(\Z_2)_e]\cong [(\Z_2)_e] \oplus [(\Z_2)_e]; \\
        &[(\Z_2)_m]\Box[(\Z_2)_m]\cong [(\Z_2)_m] \oplus [(\Z_2)_m];\\
    \end{split}
\end{equation}
we can solve the condition for an isomorphism between objects in $\TC$ to be an algebra isomorphism, and there are non-zero solutions (for example \eqref{AlgIso}).
  When two algebras $A_1$ $A_2$ are isomorphic to each other with an algebra isomorphism $\phi$, then they are also Morita equivalent and the invertible bimodules between them can just be chosen as themselves. That's why we use $\cong$ rather than $\sim_{M}$ in~\eqref{FusionRule1} and  \eqref{FusionWallTC}.  The action of one on the other is induced by the algebra isomorphism $\phi$, i.e. \begin{equation}
     A_1\otimes A_2 \xrightarrow{\phi^{-1}}A_1\otimes A_1 \xrightarrow{\mu}A_1, 
 \end{equation} 
and 
\begin{equation}
\label{IsoInv1}
    A_1\ot[A_2]A_2\cong A_1
\end{equation}
as $A_1$-bimodule,
\begin{equation}
\label{IsoInv2}
    A_2\ot[A_1]A_1 \cong A_2
\end{equation}
as $A_2$-bimodule.

 While there are other fusion rules like \begin{equation}
 \label{ff=1}
     [(\Z_2)_f]\Box [(\Z_2)_f] \sim_M [\Z_1]=\mathbbm 1;
    \end{equation}    \begin{equation}
        [\Z_2\times \Z_2]\Box [(\Z_2)_f] \sim_M [(\Z_2)_e]
 \end{equation} 
the algebras on the left and right hand sides now are just Morita equivalent. 

Let's take $[(\Z_2)_f]\Box[(\Z_2)_f]$ as an example. The multiplication rule of $[(\Z_2)_f]\Box[(\Z_2)_f]$ are \eqref{multi} \begin{equation}
    \ket{ij}\otimes \ket{kl} \mapsto (-2\delta_{jk}\delta_{jf}+1)\ket{i\ot k,j\ot l}, i,j,k,l = \mathbbm 1, f,
\end{equation} 
where $\ket{\mathbbm 1},\ket{f}$ are the basis of $[(\Z_2)_f]$ as we defined in \ref{ToricCode}.
To prove this fusion rule, we can find a pair of invertible bimodules (see \ref{MoritaEqAlg}) between $ [(\Z_2)_f] \Box [(\Z_2)_f] $ and $\mathbbm 1$.

Let's consider a left $[(\Z_2)_f]\Box[(\Z_2)_f]$-module $M=\mathbbm 1\oplus f$, with action
\begin{equation}
\label{InvBim1f}
    \begin{split}
        \rho_M: [(\Z_2)_f]\Box[(\Z_2)_f]\otimes M&\rightarrow M\\
        \ket{\mathbbm 1\mathbbm 1}\otimes \ket{\mathbbm 1}&\mapsto \ket{\mathbbm 1}\\
        \ket{ff}\otimes \ket{\mathbbm 1}&\mapsto -i\ket{\mathbbm 1}\\
        \ket{\mathbbm 1 f}\otimes \ket{\mathbbm 1}&\mapsto \ket{f}\\
        \ket{f\mathbbm 1}\otimes \ket{\mathbbm 1}&\mapsto i\ket{f}\\
        \ket{\mathbbm 1\mathbbm 1}\otimes \ket{f}&\mapsto \ket{f}\\
          \ket{ff}\otimes \ket{f}&\mapsto i\ket{f}\\
            \ket{\mathbbm 1 f}\otimes \ket{f}&\mapsto \ket{\mathbbm 1}\\
              \ket{f\mathbbm 1}\otimes \ket{f}&\mapsto -i\ket{\mathbbm 1}\\
    \end{split}
\end{equation} and 
a right $[(\Z_2)_f]\Box[(\Z_2)_f]$-module $M^\vee=\mathbbm 1\oplus f$ with action
 \begin{equation}
 \begin{split}
      \tau_{M^\vee}:M^\vee \ot ( [(\Z_2)_f]\Box[(\Z_2)_f]) &\rightarrow M^\vee \\
      \ket{\mathbbm 1}\otimes \ket{\mathbbm 1\mathbbm 1}&\mapsto \ket{\mathbbm 1}\\
      \ket{\mathbbm 1}\otimes \ket{ff}&\mapsto -i\ket{\mathbbm 1}\\
      \ket{\mathbbm 1}\otimes \ket{\mathbbm 1f}&\mapsto \ket{f}\\
      \ket{\mathbbm 1}\otimes \ket{f\mathbbm 1}&\mapsto -i\ket{f}\\
      \ket{f}\otimes\ket{\mathbbm 1\mathbbm 1}&\mapsto \ket{f}\\
      \ket{f}\otimes\ket{ff}&\mapsto i\ket{f}\\
      \ket{f}\otimes\ket{\mathbbm 1f}&\mapsto \ket{\mathbbm 1}\\
      \ket{f}\otimes\ket{f\mathbbm 1}&\mapsto i\ket{\mathbbm 1}\\
 \end{split}
 \end{equation}
 
On the one hand, one can show that \begin{equation}
    M\ot M^\vee \cong [(\Z_2)_f]\Box[(\Z_2)_f]
 \end{equation}
 as $[(\Z_2)_f]\Box[(\Z_2)_f]$-bimodule, and a bimodule isomorphism can be chosen explicitly as \begin{equation}
     \begin{split}
         \ket{\mathbbm 1\mathbbm 1}&\mapsto  \ket{\mathbbm 1\mathbbm 1}+i\ket{ff}\\
         \ket{ff}&\mapsto \ket{\mathbbm 1\mathbbm 1}-\ket{ff}\\
          \ket{\mathbbm 1f}&\mapsto  \ket{\mathbbm 1f}+i\ket{f\mathbbm 1}\\
          \ket{f\mathbbm 1}&\mapsto \ket{\mathbbm 1f}-\ket{f\mathbbm 1}\\
     \end{split}
 \end{equation} 
 which is obviously invertible. 
 
 On the other hand, one can compute the relative tensor product $ M^\vee \ot[{ [(\Z_2)_f]\Box [(\Z_2)_f]} ]  M$ is a 1-dimensional vector space graded by the identity group element, so
  \begin{equation}
    M^\vee \ot[{ [(\Z_2)_f]\Box [(\Z_2)_f]} ]  M\cong \mathbbm 1
 \end{equation}
 as object in $\ve_{\mathbb Z_2 \times \mathbb Z_2}$. So $ [(\Z_2)_f]\Box [(\Z_2)_f]$ and $\mathbbm 1$ are Morita equivalent and then $[(\Z_2)_f]\Box [(\Z_2)_f] \sim_M \mathbbm 1$.

We list all the fusion rule of domain walls of toric code in table.\ref{FusionRuleTC}, we can see $ [(\Z_2)_f]=\mathbbm 1\oplus f$ wall is the only invertible wall from the table. 
\begin{remark}
 According to Theorem 3.6, Proposition 3.7 and Corollary 3.8 of \cite{davydov2013structure} and Theorem 5.20 of \cite{FROHLICH2006192}, given a separable algebra $A$ in a modular tensor category $\cC$ such that the left and right center of $A$, denoted by $C_l(A)$ and $C_r(A)$, are also separable, then the categories of local right modules over $C_l(A)$ and $C_r(A)$ are equivalent as modular tensor categories \begin{equation}
   \phi: \cC_{C_l(A)}^0 \simeq \cC_{C_r(A)}^0.
\end{equation}
We can construct the gapped domain wall corresponding to $A$ in this way, i,e. two original phases and an anyon condensation phase sandwich within a braided autoequivalence of the condensed phase (one can also see more detailed and rigorous discussion in~\cite{xu20242morita}). There are three condensable algebras for the toric code case, $\mathbbm 1, \mathbbm1\oplus e, \mathbbm1\oplus m$. $\mathbbm1\oplus e, \mathbbm1\oplus m$ can be condensed and give rough boundary and smooth boundary of toric code respectively. Since $\ve$ has only the identity autoequivalence, these two Lagrangian algebras can produce $4$ gapped domain walls, which are rough-rough ($[(\Z_2)_e]= \mathbbm1\oplus e$), rough-smooth ($[\Z_2\times \Z_2]= \mathbbm 1\oplus e\oplus m\oplus f$), smooth-smooth ($[(\Z_2)_m]=\mathbbm 1\oplus m$), and smooth-rough ($[\Z_2\times \Z_2,\psi]= (\mathbbm 1\oplus e\oplus m\oplus f)_{\psi}$). If we condensed nothing at both sides, we can have identity and $e$-$m$ exchange as two braided autoequivalence of toric code, such two domain walls are just trivial domain wall and so called $e$-$m$ exchange domain wall corresponding to the separable algebras $\mathbbm 1$ and $\mathbbm 1\oplus f$, respectively.  One can compute the fusion of these domain walls easily with this picture in mind, see \cite{kong2022invitation} for details.\\
\end{remark}
\onecolumngrid
\begin{center}
    \begin{table}[!h]
    \begin{tabular}{|c|c|c|c|c|c|c|} \hline 
 $\Box$& $\mathbbm 1$ & $\mathbbm 1\oplus e$ & $\mathbbm 1 \oplus m$ & $\mathbbm 1\oplus f$ & $[\Z_2\times \Z_2]$& $[\Z_2\times \Z_2,\psi]$  \\ \hline
$\mathbbm 1$ & $\mathbbm 1$ &$\mathbbm 1\oplus e$ & $\mathbbm 1 \oplus m$ & $\mathbbm 1\oplus f$ & $[\Z_2\times \Z_2]$& $[\Z_2\times \Z_2,\psi]$  \\ \hline
 $\mathbbm 1\oplus e$&	$\mathbbm 1\oplus e$ &	$2(\mathbbm 1\oplus e)$ &	$[\Z_2\times \Z_2]$ & $[\Z_2\times \Z_2]$ & $2[\Z_2\times \Z_2]$& $\mathbbm 1\oplus e$\\ \hline
$\mathbbm 1\oplus m$ & $\mathbbm 1\oplus m$& $[\Z_2\times \Z_2,\psi]$ & $2(\mathbbm 1\oplus m)$ &	$[\Z_2\times \Z_2,\psi]$ & $\mathbbm 1\oplus m$ & $2[\Z_2\times \Z_2,\psi]$\\ \hline
$\mathbbm 1\oplus f$ & $\mathbbm 1\oplus f$ & $[\Z_2\times \Z_2,\psi]$ & $[\Z_2\times \Z_2]$ & $\mathbbm 1$ & $\mathbbm 1\oplus m$ & $\mathbbm 1\oplus e$\\ \hline
$[\Z_2\times \Z_2]$ & $[\Z_2\times \Z_2]$ & $\mathbbm 1\oplus e$ & $[\Z_2\times \Z_2]$ &	$\mathbbm1\oplus e$ & $[\Z_2\times \Z_2]$& $2(\mathbbm 1\oplus e)$\\ \hline
$[\Z_2\times \Z_2,\psi]$ & $[\Z_2\times \Z_2,\psi]$ & $2[\Z_2\times \Z_2,\psi]$ & $\mathbbm 1\oplus m$ &	$\mathbbm 1\oplus m$ & $2(\mathbbm 1\oplus m)$& $[\Z_2\times \Z_2,\psi]$\\ \hline
    \end{tabular}
    \caption{Fusion rule of domain walls of toric code, row fuses column.}
   \label{FusionRuleTC}
\end{table} 
\end{center}
\twocolumngrid

\subsection{Fusion of Point like Defects together with Domain Walls }


Let's compute the tensor product of 0d defects on the $[(\Z_2)_e]=\mathbbm 1\oplus e$  wall. There are four simple 0d defects~\eqref{rrWall0dDefect} satisfying $\mathbbm Z_2\times \mathbbm Z_2$ fusion rule along the wall.
The basis of $[(\Z_2)_e]\oplus [(\Z_2)_e]=(\mathbbm 1\oplus e) \oplus (\mathbbm 1\oplus e)$ is chosen to be $\ket{\mathbbm 1}_i,\ket{e}_i$, $i=a,b$. $\ket{\mathbbm 1}_i,\ket{e}_i$ satisfy $\mathbbm Z_2$ multiplication rule, and the multiplications of vectors from different sectors vanish.
The algebric isomorphism $[(\Z_2)_e]\Box [(\Z_2)_e] \cong [(\Z_2)_e]\oplus [(\Z_2)_e]$ can then be chosen as
\begin{equation}
\label{AlgIso}
\begin{split}
\phi:   [(\Z_2)_e]\Box [(\Z_2)_e] &\cong [(\Z_2)_e] \oplus [(\Z_2)_e]\\
    \ket{\mathbbm 1\mathbbm 1}&\mapsto \ket{\mathbbm 1}_a+\ket{\mathbbm 1}_b\\
    \ket{ee}&\mapsto \ket{\mathbbm 1}_a-\ket{\mathbbm 1}_b\\
    \ket{\mathbbm 1 e}&\mapsto \ket{e}_a+\ket{e}_b\\
    \ket{e\mathbbm 1}&\mapsto \ket{e}_a-\ket{e}_b
\end{split}    
\end{equation}

According to the discussion in \ref{FDW}, the invertible bimodules between these two algebras can be just chosen as themselves. 
Now let's compute the fusion of $M_0^e$ and $M^e_1$ based on~\eqref{PointsWithWall}, 
    \begin{equation}
    \begin{split}
        &([(\Z_2)_e]\oplus [(\Z_2)_e]\ot[{[(\Z_2)_e]\Box [(\Z_2)_e]}]M_0^e\Box M_1^e\\
        &\ot[(\mathbbm 1\oplus e)\Box (\mathbbm 1\oplus e)] [(\Z_2)_e]\Box [(\Z_2)_e]  \\\cong&  M_\star\oplus M_\bullet.
    \end{split}
\end{equation}

Where $M_\star = M_\bullet = \mathbbm 1\oplus e$.
We use $\ket{\mathbbm 1}_\star,\ket{e}_\star,\ket{\mathbbm 1}_\bullet,\ket{e}_\bullet$ to denote the basis of $M_\star$ and $M_\bullet$, respectively. The left and right $[(\Z_2)_e]\oplus [(\Z_2)_e]$ actions on $M_\star\oplus M_\bullet$ is very interesting: the left action of $[(\Z_2)_e]_a$ on $M_\bullet$ and $[(\Z_2)_e]_b$ on $M_\star$ vanish; the right action of $[(\Z_2)_e]_a$ on $M_\star$ and $[(\Z_2)_e]_b$ on $M_\bullet$ vanish. The rest are 
\begin{widetext}

\begin{equation}
\begin{split}
    \rho_{M_\star\oplus M_\bullet}: ([(\Z_2)_e]\oplus [(\Z_2)_e])\ot (M_\star\oplus M_\bullet) &\rightarrow M_\star\oplus M_\bullet\\
    \ket{\mathbbm 1}_a\ot \ket{\mathbbm 1}_\star &\mapsto \ket{\mathbbm 1}_\star\\
    \ket{\mathbbm 1}_a\ot \ket{e}_\star &\mapsto \ket{e}_\star\\
    \ket{e}_a\ot \ket{\mathbbm 1}_\star &\mapsto -\ket{e}_\star\\
    \ket{e}_a\ot \ket{e}_\star &\mapsto -\ket{\mathbbm 1}_\star\\
     \ket{\mathbbm 1}_b\ot \ket{\mathbbm 1}_\bullet &\mapsto \ket{\mathbbm 1}_\bullet\\
    \ket{\mathbbm 1}_b\ot \ket{e}_\bullet&\mapsto \ket{e}_\bullet\\
    \ket{e}_b\ot \ket{\mathbbm 1}_\bullet &\mapsto -\ket{e}_\bullet\\
    \ket{e}_b\ot \ket{e}_\bullet &\mapsto -\ket{\mathbbm 1}_\bullet,\\
\end{split}
\end{equation}  
\begin{equation}
\begin{split}
     \tau_{M_\star\oplus M_\bullet}: (M_\star\oplus M_\bullet) \ot ([(\Z_2)_e]\oplus [(\Z_2)_e])&\rightarrow M_\star\oplus M_\bullet\\
\ket{\mathbbm 1}_\bullet \ot \ket{\mathbbm 1}_a&\mapsto \ket{\mathbbm 1}_\bullet\\
\ket{e}_\bullet\ot \ket{\mathbbm 1}_a&\mapsto \ket{e}_\bullet\\
\ket{\mathbbm 1}_\bullet \otimes \ket{e}_a&\mapsto \ket{e}_\bullet\\
\ket{e}_\bullet\ot \ket{e}_a&\mapsto \ket{\mathbbm 1}_\bullet\\
     \ket{\mathbbm 1}_\star \otimes \ket{\mathbbm 1}_b &\mapsto \ket{\mathbbm 1}_\star\\
     \ket{e}_\star \otimes \ket{\mathbbm 1_b}&\mapsto \ket{e}_\star \\
     \ket{\mathbbm 1}_\star \otimes \ket{e}_b &\mapsto \ket{e}_\star\\
     \ket{e}_\star \otimes \ket{e}_b&\mapsto \ket{\mathbbm 1}_\star.
\end{split}
\end{equation}    
Diagrammatically, 
\begin{equation}
        \begin{tikzpicture}[scale=0.5]
		\node [style=none] (0) at (-4, 3) {};
		\node [style=none] (2) at (-4, -3) {};
		\node [style=none] (3) at (-7, 3) {};
		\node [style=none] (4) at (-7, -3) {};
		\node [style=none] (5) at (0, 0) {$\cong$};
		\node [style=none] (7) at (7, 3) {};
		\node [style=none] (8) at (7, -3) {};
		\node [style=none] (9) at (4, 3) {};
		\node [style=none] (10) at (4, -3) {};
		\node [style=none] (11) at (5.5, 0) {$\oplus$};
		\node [style=none] (12) at (-7, -4) {$[(\Z_2)_e]$};
		\node [style=none] (13) at (-4, -4) {$[(\Z_2)_e]$};
		\node [style=none] (14) at (4, -4) {$[(\Z_2)_e]$};
		\node [style=none] (15) at (7, -4) {$[(\Z_2)_e]$};
		\draw (3.center) to (4.center);
		\draw (0.center) to (2.center);
		\draw [style=redline] (9.center) to (10.center);
		\draw [style=blueline] (7.center) to (8.center);
    \end{tikzpicture}
\end{equation}
\begin{equation}
\label{1-e}
    \begin{tikzpicture}[scale=0.5]
		\node [style=none] (0) at (-4, 5) {};
		\node [style=red dot] (1) at (-4, 2) {};
		\node [style=none] (2) at (-4, -1) {};
		\node [style=none] (3) at (-7, 5) {};
		\node [style=none] (5) at (-7, -1) {};
		\node [style=none] (6) at (0, 2) {$\cong$};
		\node [style=dash dot] (7) at (-7, 2) {};
		\node [style=none] (8) at (7, 5) {};
		\node [style=none] (10) at (7, -1) {};
		\node [style=none] (11) at (4, 5) {};
		\node [style=none] (12) at (4, -1) {};
		\node [style=none] (14) at (5.5, 2) {$\oplus$};
		\node [style=none] (23) at (-9, 1.5) {$M^e_0$};
		\node [style=none] (24) at (-2, 1.5) {$M^e_1$};
		\node [style=red dot] (28) at (4, 2) {};
		\node [style=red dot] (29) at (7, 2) {};
		\node [style=none] (30) at (2, 1.5) {$M^e_1$};
		\node [style=none] (31) at (9, 1.5) {$M_1^e$};
		\draw (0.center) to (1);
		\draw (1) to (2.center);
		\draw (3.center) to (7);
		\draw (7) to (5.center);
		\draw [style=redline] (11.center) to (28);
		\draw [style=redline] (29) to (10.center);
		\draw [style=blueline] (28) to (12.center);
		\draw [style=blueline] (8.center) to (29);
\end{tikzpicture}
\end{equation}
\end{widetext}

Since the fusion sub-2-category of $\Sigma\TC$ generated by $[(\Z_2)_e]$ is equivalent to $\Sigma\Rep(\Z_2)$, to understand the mathematical result of the fusion of $M_1^e$ and $M_0^e$, one can take a different physical perspective (see Section~\ref{CatSET}):
$[(\Z_2)_e]=\mathbbm 1\oplus e\cong \Fun(\Z_2)\in \Sigma\Rep(\Z_2)$ can represent the the $\mathbb Z_2$-symmetry breaking $1+1$D phase.  $M_1^e$ is the point-like nontrivial $\Z_2$ symmetry defect. 
Consider two Ising chains in the $\Z_2$ symmetry breaking phase. We use $\ket{\Uparrow}$, $\ket{\Downarrow}$ to denote the ground state of all spin up and all spin down, respectively. Stacking of the two symmetry breaking chain will result in 4 ground states
$\ket{\substack{\Uparrow \\ \Uparrow}}$, $\ket{\substack{\Uparrow \\ \Downarrow}}$,$\ket{\substack{\Downarrow \\ \Uparrow}}$,$\ket{\substack{\Downarrow \\ \Downarrow}}$, they can be further divided into 2 sectors 
\begin{equation}
\begin{split}
   S_1 = \{\ket{\substack{\Uparrow \\ \Uparrow}} \stackrel{X^{\text{diag}}}{\longleftrightarrow} \ket{\substack{\Downarrow\\\Downarrow}} \}\\
    S_2= \{\ket{\substack{\Uparrow \\ \Downarrow}} \stackrel{X^{\text{diag}}}{\longleftrightarrow} \ket{\substack{\Downarrow\\\Uparrow}} \},
\end{split}
\end{equation}
which are both closed under the diagonal $\Z_2$ symmetry operator $X^{\text{diag}}$, and totally break the diagonal $\Z_2$ symmetry. That is $[(\Z_2)_e]\otimes [(\Z_2)_e] \simeq [(\Z_2)_e]\oplus [(\Z_2)_e]$.
 Now if there is a nontrivial symmetry defect $\ket{\Uparrow \Downarrow}$ or $\ket{\Downarrow\Uparrow}$ in one chain, its stacking with a chain $\ket{\Uparrow \Uparrow}$ or $\ket{\Downarrow\Downarrow}$ that is without symmetry defect, leads to four states,
 $
    \ket{\substack{\Uparrow\\\Uparrow}} \ket{\substack{\Downarrow\\\Uparrow}}, \ket{\substack{\Uparrow\\\Downarrow}}\ket{\substack{\Downarrow\\\Downarrow}},\ket{\substack{\Downarrow\\\Uparrow}}\ket{\substack{\Uparrow\\\Uparrow}},\ket{\substack{\Downarrow\\\Downarrow}}\ket{\substack{\Uparrow\\\Downarrow}}. $
 Such four states can then be divided into two types, corresponding to defects between the two sectors $S_1,S_2$:
 \begin{equation}
 \begin{split}
     S_1\textendash S_2:& \{\ket{\substack{\Uparrow\\\Uparrow}}\ket{\substack{\Downarrow\\\Uparrow}}, \ket{\substack{\Downarrow\\\Downarrow}}\ket{\substack{\Uparrow\\\Downarrow}} \}\\
     S_2\textendash S_1:&\{\ket{\substack{\Uparrow\\\Downarrow}}\ket{\substack{\Downarrow\\\Downarrow}},\ket{\substack{\Downarrow\\\Uparrow}}\ket{\substack{\Uparrow\\\Uparrow}}\}
 \end{split}\label{eq.S1S2}
 \end{equation}
 which explains the fusion rule \eqref{1-e}. In this physical picture one can see that in the $S_2$ sector there is an obvious ambiguity between the two symmetry-breaking states: it is an artificial choice to call, say $\ket{\substack{\Uparrow\\\Downarrow}}$, as either the spin up chain or the spin down chain. Depending on this choice, the defects between $S_1$ and $S_2$ \eqref{eq.S1S2} can be considered as either nontrivial or trivial.
 The same ambiguity is also present in the algebraic framework, which is exactly the choice of invertible bimodules (0d defects) between the representative separable algebras.

\section{Example: $\TF$ Topological Order}
$\Vect_{\mathbb Z_2 \times \mathbb Z_2}$ has other modular tensor category structures. We can get one by taking braiding as \begin{equation}
\label{3FBraiding}
    c_{i,j}=\begin{pmatrix}
        1&1&1&1\\
        1&-1&-1&1\\
        1&1&-1&-1\\
        1&-1&1&-1
    \end{pmatrix},
\end{equation}
and then $S,T$ matrices \begin{equation}
    S = \frac{1}{2}\begin{pmatrix}
      1&1&1&1\\
      1&1&-1&-1\\
      1&-1&1&-1\\
      1&-1&-1&1
    \end{pmatrix}
\end{equation}
\begin{equation}
    T = \begin{pmatrix}
        1&&&\\
        &-1&&\\
        &&-1&\\
        &&&-1\\
    \end{pmatrix}
\end{equation}
where the elements in both row and column are in the order of $\mathbbm 1, e,m,f$. 

Notice that $e,m,f$ are all fermions in this topological order, and that's why it's named three fermion (\textbf{3F}) topological order. The topological order is chiral, one can compute the chiral central charge $c$ from the $S,T$ matrices by using
\begin{equation}
    \frac{1}{\sqrt{D}}\sum_{i\in \text{Irr}\cC}d_i^2\theta_a = \exp(2\pi i c/8),
\end{equation}
and $c = 4 $ mod $8$ for this system. Here $d_i$ is the quantum dimension for particle $i$, $\theta_i = T_{i,i}$, $D = \sqrt{\sum_{i\in\text{Irr}\cC}d_i^2}$ is the total quantum dimension.   Since $\textbf{3F}$ has the same underlying fusion category as toric code, the separable algebras and bimodules in their condensation completion  are the same, see table~\ref{Bimod}. We'll use the same notation for algebras as in the toric code.

\subsection{Fusion of Domain Walls}

Due to the three fermions in $\TF$,  there are much more efficient ways to find the algebraic properties than the condensation completion algorithm,  we'll discuss it in \ref{3Fapp}.

Due to the special braiding statistics of the three fermions $e,m,f$, $\TF$ has a $S_3$ permutation symmetry that permutes $e,m,f$. 
According to \cite{etingof2009fusioncategorieshomotopytheory}, when $\cC$ is a modular tensor category, invertible objects in $\Sigma\cC$ are one-to-one corresponding to braided autoequivalences in $\cC$. The six domain walls $[\Z_1],...[\Z_2\times \Z_2], [\Z_2\times \Z_2,\psi]$ then must be invertible and correspond to the six group elements of $S_3$, and their fusion rule is just $S_3$ multiplication. The only issue remaining is how to identify these six walls with $S_3$ elements. Since this is not directly related to the condensation completion algorithm, we show the calculation in \ref{3Fapp}, and leave the result below.
\begin{enumerate}
    \item $[\Z_2]_e$ exchanges $m$ and $f$,
    \item $[\Z_2]_m$ exchanges $e$ and $f$,
    \item $[\Z_2]_f$ exchanges $e$ and $m$,
    \item  \begin{equation}
    [\Z_2\times \Z_2]:\begin{tikzcd}
	e && m \\
	& f
	\arrow[from=1-1, to=2-2]
	\arrow[from=2-2, to=1-3]
	\arrow[from=1-3, to=1-1]
\end{tikzcd}.
\end{equation}
\item 
\begin{equation}
\label{3Fpermutation}
    [\Z_2\times\Z_2, \psi]:\begin{tikzcd}
	e && m \\
	& f
	\arrow[from=1-1, to=1-3]
	\arrow[from=1-3, to=2-2]
	\arrow[from=2-2, to=1-1]
\end{tikzcd}.
\end{equation}
\end{enumerate}
After identifying these six walls with $S_3$ element, their fusion rules are just $S_3$ multiplications, see table~\ref{FusionRule3F}.
\onecolumngrid
\begin{center}
     \begin{table}
    \begin{tabular}{|c|c|c|c|c|c|c|} \hline 
 $\Box$& $\mathbbm 1$ & $\mathbbm 1\oplus e$ & $\mathbbm 1 \oplus m$ & $\mathbbm 1\oplus f$ & $[\Z_2\times \Z_2]$& $[\Z_2\times \Z_2,\psi]$  \\ \hline

$\mathbbm 1$ & $\mathbbm 1$ &$\mathbbm 1\oplus e$ & $\mathbbm 1 \oplus m$ & $\mathbbm 1\oplus f$ & $[\Z_2\times \Z_2]$& $[\Z_2\times \Z_2,\psi]$  \\ \hline

 $\mathbbm 1\oplus e$&	$\mathbbm 1\oplus e$ &	$\mathbbm 1$ &	$[\Z_2\times \Z_2]$ & $[\Z_2\times \Z_2,\psi]$ & $\mathbbm 1\oplus m$& $\mathbbm 1\oplus f$\\ \hline

$\mathbbm 1\oplus m$ & $\mathbbm 1\oplus m$& $[\Z_2\times \Z_2,\psi]$ & $\mathbbm 1$ &	$[\Z_2\times \Z_2]$ & $\mathbbm 1\oplus f$ & $\mathbbm 1\oplus e$\\ \hline

$\mathbbm 1\oplus f$ & $\mathbbm 1\oplus f$ & $[\Z_2\times \Z_2]$ & $[\Z_2\times \Z_2,\psi]$ & $\mathbbm 1$ & $\mathbbm 1\oplus e$ & $\mathbbm 1\oplus m$\\ \hline

$[\Z_2\times \Z_2]$ & $[\Z_2\times \Z_2]$ & $\mathbbm 1\oplus f$ & $\mathbbm 1\oplus e$ &	$\mathbbm1\oplus m$ & $[\Z_2\times \Z_2,\psi]$& $\mathbbm 1$\\ \hline

$[\Z_2\times \Z_2,\psi]$ & $[\Z_2\times \Z_2,\psi]$ & $\mathbbm 1\oplus m$ & $\mathbbm 1\oplus f$ &	$\mathbbm 1\oplus e$ & $\mathbbm 1$& $[\Z_2\times \Z_2]$\\ \hline
    \end{tabular}                                                                                                       
    \caption{Fusion rule of domain walls of $\TF$, row fuses column.}
   \label{FusionRule3F}
\end{table} 
\end{center}
   
\twocolumngrid

\subsection{Fusion of Point-like defects}
As mentioned in~\cite{Cui_2019}, for every $G$-crossed braided fusion category $\cC_{G}^\times$, we can construct a fusion 2-category $\cD(\cC_{G}^\times)$ that encodes the data of $\cC_{G}^\times$. In short, objects in $\cD(\cC_{G}^\times)$ are group elements in $G$. For $g,h\in G$, $\Hom(g,h) = (\cC_{G}^\times)_{g^{-1}h}$. The fusion structure of $\cD(\cC_{G}^\times)$ depends on the $G$-crossed braiding of $\cC_G^\times$ (see~\cite{Cui_2019} for details). Such fusion 2-category can be viewed as a subcategory of $\Sigma\cC$, restricted on all $G$ invertible walls. In the $\TF$ case, all domain walls are $S_3$ symmetry wall, so $\Sigma\TF$ and $\TF_{S_3}^\times$ are equivalent description of codimension-1 and higher symmetry defects. 
  The fusion rule of point-like defects in  $\TF$ with the $S_3$ symmetry enrichment has been studied in~\cite{Roberts_2024,Barkeshli_2019} as $\TF_{S_3}^\times$, we refer readers to these references for details.

\section{Example: Two-layer Semion}
Particles in $2+1$D semion topological order form the modular tensor category $\SM = \Vect_{\mathbb Z_2}^{\omega_s}$, where the only nontrivial value of $\omega_s$ is $\omega_s(1,1,1) = -1$. There are two simple anyons $\mathbbm 1$ and a semion $s$, with $\mathbb Z_2$ fusion rule $s\otimes s = \mathbbm 1$. The only nontrival braiding is $c_{s,s}=i$, and then $S$, $T$ matrices are
\begin{equation*}
  S = \frac{1}{\sqrt{2}}\begin{pmatrix}
        1&1\\
        1&-1\\
    \end{pmatrix}, T = \begin{pmatrix}
        1& \\
         &i\\
    \end{pmatrix}  
\end{equation*}
We can stack $\cC$ and $\cC$ itself together and get a new modular tensor category $\TS=\SM\boxtimes\SM$ describing the particles in the two-layer semion topological order. There are $4$ simple anyons $\mathbbm 1,s_1=\mathbbm 1\boxtimes s,s_2=s\boxtimes \mathbbm 1$ and $\psi= s\boxtimes s$ satisfying $\mathbb Z_2\times \mathbb Z_2$ fusion rule. The braiding in a stacked topological order is the tensor product of braidings in each layer,
   \begin{equation}
     c_{i,j}^\TS=  \begin{pmatrix}
           1&1&1&1\\
           1&i&1&i\\
           1&1&i&i\\
           1&i&i&-1
       \end{pmatrix}
   \end{equation}
   so are the  the $S,T$ matrices, 

\begin{equation}
    S_\TS = \frac{1}{2}\begin{pmatrix}
        1&1&1&1\\
        1&-1&1&-1\\
        1&1&-1&-1\\
        1&-1&-1&1
    \end{pmatrix}
\end{equation}

\begin{equation}
    T_{\TS} = \begin{pmatrix}
        1&\ &\ &\\
        \ &i&\ &\ \\
        \ &\ &i&\ \\
        \ &\ &\ &-1\\
    \end{pmatrix}
    \vspace{30pt}
\end{equation}
where the row and column are in the order of $\mathbbm 1, s_1, s_2, \psi$. Note that $\psi$ is a fermion with topological spin $-1$.
 $\TS$ is equivalent to the modular tensor category $\Vect_{\mathbb Z_2\times \mathbb Z_2}^{\omega_\TS,c^{\TS}}$, where 
 \begin{equation}
 \label{2semion3co}
     \omega_\TS(a_1\boxtimes a_2, b_1\boxtimes b_2, c_1\boxtimes c_2) = \omega_s(a_1,b_1,c_1)\omega_s(a_2,b_2,c_2),
 \end{equation}
then $\mathbbm 1,s_1, s_2, \psi$ can be identified as 1-dimensional vector spaces graded by $1, a, b, c$ (see section \ref{ToricCode} for the convention), respectively.  Such modular tensor category is chiral and the chiral central charge is $c =2$ mod $8$. 

\begin{subsection}{Separable Algebras}

In this case, we denote the subgroups of $\Z_2\times \Z_2$ in the following way, $\Z_1=\{1\}$, $(\Z_2)_{s_1}=\{1,a\}$, $(\Z_2)_{s_2}=\{1,b\}$,
 $(\Z_2)_{\psi}=\{1,c\}$.
According to the 3-cocycle \eqref{2semion3co},  we have  $\omega_\TS|_{\Z_1}=1, \omega_\TS|_{(\Z_2)_{\psi}}=1$, there are only two separable algebras, $[\Z_1] = \mathbbm 1$, $[(\Z_2)_\psi] = \mathbbm 1\oplus \psi$. We choose the basis vector in $\mathbbm 1, s_1, s_2, \psi$ as $\ket{\mathbbm 1},\ket{s_1},\ket{s_2},\ket{\psi}$, such that the multiplication rule of $[(\Z_2)_\psi]$ has the same form as the fusion rule, as shown in table~\ref{DSAlg2M}. $[\Z_1]$ is just the subalgebra of $[(\Z_2)_{\psi}]$.

\begin{table}[!h]
    \centering
    \scalebox{1}{
    \begin{tabular}{c c c}  
        & $|\mathbbm1\rangle$ & $|\psi\rangle$  \\ 
  \hline
$|\mathbbm{1}\rangle$ & $|\mathbbm{1}\rangle$ & $\ket{\psi}$ \\  
 \hline
 $\ket{\psi}$ & $\ket{\psi}$ & $|\mathbbm1\rangle$
    \end{tabular}}
    \caption{Multiplications of $[(\Z_2)_\psi]$ in two-layer semion, row times column}
    \label{DSAlg2M}
\end{table}
\end{subsection}

\begin{subsection}{Bimodules Over Algebras}
\begin{subsubsection}{$[\Z_1]$-Bimodule}
$[\Z_1]$-bimodule category ${}_{[\Z_1]}\TS_{[\Z_1]}\simeq \TS$, with monoidal structure just the tensor product of $\TS$.
\end{subsubsection}
\begin{subsubsection}{$[\Z_1]$-$[(\Z_2)_\psi]$-Bimodule}
An $[\Z_1]$-$[(\Z_2)_\psi]$-bimodule is just a right $[(\Z_2)_\psi]$-module. There are two non-isomorphic simple right $[(\Z_2)_\psi]$-modules in $\TS$,
\begin{equation*}
    M_0=\mathbbm 1 \oplus \psi, M_1= s_1\oplus s_2
\end{equation*}

$M_0$ is just the algebra itself. For $M_1$, it's a free module $s_1\otimes (\mathbbm 1\oplus\psi)$, with right action 

\begin{equation}
    \begin{split}
        &\tau: (s_1\otimes (\mathbbm 1\oplus\psi))\otimes (\mathbbm 1\oplus\psi) \\&\xrightarrow{\omega_\TS} s_1\otimes ((\mathbbm 1\oplus\psi)\otimes (\mathbbm 1\oplus\psi))\rightarrow s_1\otimes (\mathbbm 1\oplus\psi)\\
    (\ket{s_1}\ket{\psi})\ket{\psi}&\mapsto -\ket{s_1}(\ket{\psi}\ket{\psi})\mapsto -\ket{s_1}\ket{\mathbbm 1}\\
    \end{split}
\end{equation}

where the minus sign comes from the nontrivial associator 
\begin{equation}
    \omega_\TS(s_1,\psi,\psi) = \omega_s(\mathbbm 1,s,s)\omega_s(s,s,s) = -1,
\end{equation}
so  
\begin{equation}
\begin{split}
    \tau_{M_1}:M_1 \otimes [(\Z_2)_f]&\rightarrow M_1\\
    \ket{s_2}\otimes \ket{\psi}&\mapsto -\ket{s_1}\\
\end{split}
\end{equation}

and all other actions are of fusion rule type. We then have $\TS_{[(\Z_2)_{\psi}]}\simeq \ve\oplus\ve$, and similar for. Physically, they are the point-like defects living on the ends of the $\mathbbm 1\oplus \psi$ wall.
\end{subsubsection}

\begin{subsubsection}{$[(\Z_2)_\psi]$-Bimodule}
There are four $[(\Z_2)_\psi]$-bimodules 
\begin{equation*}
   M^\psi_0= \mathbbm 1\oplus \psi, M^\psi_1 =  \mathbbm 1\oplus \psi, N^\psi_0 =s_1\oplus s_2,N^\psi_1 =  s_1\oplus s_2
\end{equation*}
$M^\psi_0$ is the algebra itself. $M^\psi_1$ has the fusion rule type of right action, and the left action is twisted 
\begin{equation}
\begin{split}
     \rho_{M^\psi_1} : [(\Z_2)_\psi] \otimes M^\psi_1 &\rightarrow M^\psi_1\\
     \ket{\psi}\otimes \ket{\psi}&\mapsto -\ket{\mathbbm 1}\\
     \ket{\psi}\otimes \ket{\mathbbm 1} &\mapsto -\ket{\psi}.
\end{split}
\end{equation}
$N_0^\psi$ and $N_1^\psi$ are from the decomposition of the free bimodule $[(\Z_2)_\psi]\otimes s_1) \otimes [(\Z_2)_\psi]$. In this convention, the left action is 

\begin{equation}
\begin{split}
    \rho:& [(\Z_2)_\psi]\otimes ([(\Z_2)_\psi]\otimes s_1) \otimes [(\Z_2)_\psi])\\&\xrightarrow{\omega_\TS} [(\Z_2)_\psi]\otimes( [(\Z_2)_\psi]\otimes(s_1\otimes [(\Z_2)_\psi]))\\
    &\xrightarrow{\omega_\TS} ([(\Z_2)_\psi]\otimes [(\Z_2)_\psi])\otimes (s_1\otimes [(\Z_2)_\psi])\\
    &\rightarrow [(\Z_2)_\psi]\otimes (s_1\otimes [(\Z_2)_\psi])\\
    &\xrightarrow{\omega_\TS} ([(\Z_2)_\psi]\otimes s_1) \otimes[(\Z_2)_\psi]
\end{split}
\end{equation}

where the third line is the  $[(\Z_2)_\psi]$ multiplication and the other lines are generally nontrivial associators. The right action is 

\begin{equation}
    \begin{split}
        \tau:  &([(\Z_2)_\psi]\otimes s_1) \otimes [(\Z_2)_\psi])\otimes [(\Z_2)_\psi]\\&\xrightarrow{\omega_\TS} ([(\Z_2)_\psi]\otimes s_1)\otimes ([(\Z_2)_\psi]\otimes [(\Z_2)_\psi])\\
        &\rightarrow ([(\Z_2)_\psi]\otimes s_1) \otimes [(\Z_2)_\psi]
    \end{split}
\end{equation}

We then find the following two sub-bimodules
\begin{widetext}
\begin{equation}
\begin{split}
    N^\psi_0 &= \langle\ket{s_1},\ket{s_2}\rangle= \langle (\ket{\mathbbm 1}\ket{s_1})\ket{\mathbbm 1}+i(\ket{\psi }\ket{s_1})\ket{\psi},  (\ket{\mathbbm 1}\ket{s_1})\ket{\psi}-i(\ket{\psi }\ket{s_1})\ket{\mathbbm 1} \rangle,\\
     N^\psi_1 & = \langle\ket{s_1},\ket{s_2}\rangle=\langle (\ket{\mathbbm 1}\ket{s_1})\ket{\mathbbm 1}-i(\ket{\psi }\ket{s_1})\ket{\psi},  (\ket{\mathbbm 1}\ket{s_1})\ket{\psi}+i(\ket{\psi }\ket{s_1})\ket{\mathbbm 1} \rangle.
    \end{split}
\end{equation}    
\end{widetext}

More explicitly, for $(s_1\oplus s_2)_0$ we have the left action
\begin{equation}
    \begin{split}
        \rho_{N^\psi_0}: [(\Z_2)_\psi] \otimes N^\psi_0&\rightarrow N^\psi_0\\
        \ket{\psi}\otimes \ket{s_1}&\mapsto i\ket{s_2}\\
        \ket{\psi}\otimes \ket{s_2}&\mapsto i\ket{s_1} 
    \end{split}
\end{equation}
and the right action
\begin{equation}
    \begin{split}
        \tau_{N^\psi_0}: N^\psi_0\otimes [(\Z_2)_\psi]&\rightarrow N^\psi_0\\
        \ket{s_1}\otimes\ket{\psi}&\mapsto \ket{s_2}\\
        \ket{s_2}\otimes \ket{\psi} &\mapsto -\ket{s_1}.
    \end{split}
\end{equation}
For $N^\psi_1$ we have the left action 
\begin{equation}
    \begin{split}
        \rho_{N^\psi_1}: [(\Z_2)_\psi] \otimes N^\psi_1 &\rightarrow N^\psi_1\\
        \ket{\psi}\otimes \ket{s_1}&\mapsto -i\ket{s_2}\\
        \ket{\psi}\otimes \ket{s_2}&\mapsto -i\ket{s_1} 
    \end{split}
\end{equation}
and the right action
\begin{equation}
    \begin{split}
        \tau_{N^\psi_1}: N^\psi_1\otimes [(\Z_2)_\psi] &\rightarrow N^\psi_1\\
        \ket{s_1}\otimes\ket{\psi}&\mapsto \ket{s_2}\\
        \ket{s_2}\otimes \ket{\psi} &\mapsto -\ket{s_1}.\\
    \end{split}
\end{equation}

These four simple bimodules satisfy the  $\mathbb Z_2 \times \mathbb Z_2$ fusion rule, as fusion category we must have  ${}_{[(\Z_2)_\psi]}\TS_{[(\Z_2)_\psi]}\simeq \TS$. Physically, $  M^\psi_0, M^\psi_1, N^\psi_0, N^\psi_1$ are the four simple point-like defects living on the $[(\Z_2)_\psi]$ wall, and they have the  $\mathbb Z_2 \times \mathbb Z_2$ fusion rule along the wall.
We list all the bimodule categories in table \ref{BimodTwoSemion}.
\begin{table}
    \centering
    \begin{tabular}{|c|c|c|} \hline 
 $\text{Bimod}$& $\mathbbm 1$ & $\mathbbm 1\oplus \psi$  \\ \hline
$\mathbbm 1$ & $\ve_{\mathbb Z_2\times \mathbb Z_2}^{\omega_{\TS}}$ &$\ve\oplus \ve$  \\ \hline
 $\mathbbm 1\oplus \psi$&	$\ve\oplus\ve$ &	$\ve_{\mathbb Z_2\times \mathbb Z_2}^{\omega_{\TS}}$	\\ \hline
    \end{tabular}
    \caption{Bimodule categories, algebras in rows on the left and columns on the right.}
   \label{BimodTwoSemion}
\end{table}
\end{subsubsection}
\end{subsection}

\begin{subsection}{Fusion of Domain Walls}
The only nontrivial fusion rule of domain walls in doubled semion is \begin{equation}
\label{TwoSemionWallFusion}
[(\Z_2)_\psi]\Box[(\Z_2)_\psi] \sim_M \mathbbm 1.
\end{equation}  
Since the topological order is made of two layers of semion topological order, so the nontrivial wall should be the exchange of the two layers, exchange two times must be the trivial wall.
Graphically, 

\begin{tikzpicture}[scale=0.4]
		\node [style=none] (0) at (2, 2) {};
		\node [style=none] (1) at (2, 0) {};
		\node [style=none] (2) at (8, 2) {};
		\node [style=none] (3) at (8, 0) {};
		\node [style=none] (4) at (-2, 1) {$\mathbbm{1}:=$};
		\node [style=none] (5) at (9, 2) {$\SM$};
		\node [style=none] (6) at (9, 0) {$\SM$};
		\node [style=none] (7) at (-2, -3) {$\mathbbm1\oplus \psi:=$};
		\node [style=none] (8) at (2, -2) {};
		\node [style=none] (9) at (2, -4) {};
		\node [style=none] (10) at (6, -2) {};
		\node [style=none] (11) at (6, -4) {};
		\node [style=none] (12) at (4, -2) {};
		\node [style=none] (13) at (4, -4) {};
		\node [style=none] (14) at (8, -2) {};
		\node [style=none] (15) at (8, -4) {};
		\node [style=none] (18) at (9, -2) {$\SM$};
		\node [style=none] (19) at (9, -4) {$\SM$};
		\node [style=none] (20) at (2, -6) {};
		\node [style=none] (21) at (2, -8) {};
		\node [style=none] (22) at (3.5, -6) {};
		\node [style=none] (23) at (3.5, -8) {};
		\node [style=none] (24) at (3, -6) {};
		\node [style=none] (25) at (3, -8) {};
		\node [style=none] (26) at (5, -6) {};
		\node [style=none] (27) at (5, -8) {};
		\node [style=none] (28) at (5, -6) {};
		\node [style=none] (29) at (5, -8) {};
		\node [style=none] (30) at (7, -6) {};
		\node [style=none] (31) at (7, -8) {};
		\node [style=none] (32) at (6.5, -6) {};
		\node [style=none] (33) at (6.5, -8) {};
		\node [style=none] (34) at (8, -6) {};
		\node [style=none] (35) at (8, -8) {};
		\node [style=none] (37) at (10, -6) {};
		\node [style=none] (38) at (10, -8) {};
		\node [style=none] (39) at (13, -6) {};
		\node [style=none] (40) at (13, -8) {};
		\node [style=none] (41) at (-2, -7) {$(\mathbbm1\oplus \psi)\Box(\mathbbm1\oplus \psi):=$};
            \node[style=none] (42) at (0,-11) {$=$};
            \node [style=none] (43) at (2, -10) {};
		\node [style=none] (44) at (2, -12) {};
            \node [style=none] (45) at (8, -10) {};
		\node [style=none] (46) at (8, -12) {};
            \node [style=none] (47) at (9, -6) {$\SM$};
		\node [style=none] (48) at (9, -8) {$\SM$};
            \node [style=none] (49) at (9, -10) {$\SM$};
		\node [style=none] (50) at (9, -12) {$\SM$};
		\draw (0.center) to (2.center);
		\draw (1.center) to (3.center);
		\draw (8.center) to (12.center);
		\draw (12.center) to (11.center);
		\draw (9.center) to (13.center);
		\draw (13.center) to (10.center);
		\draw (10.center) to (14.center);
		\draw (11.center) to (15.center);
		\draw (20.center) to (24.center);
		\draw (24.center) to (23.center);
		\draw (21.center) to (25.center);
		\draw (25.center) to (22.center);
		\draw (22.center) to (26.center);
		\draw (23.center) to (27.center);
		\draw (28.center) to (32.center);
		\draw (32.center) to (31.center);
		\draw (29.center) to (33.center);
		\draw (33.center) to (30.center);
		\draw (30.center) to (34.center);
		\draw (31.center) to (35.center);
            \draw (43.center) to (45.center);
            \draw (44.center) to (46.center);
\end{tikzpicture}

From the algebraic point of view, the braided fusion subcategory of toric code generated by $\{\mathbbm 1, f\}$ is equivalent to the braided fusion subcategory of two-layer semion generated by $\{\mathbbm 1, \psi\}$, \eqref{TwoSemionWallFusion} is then from \eqref{ff=1}.

\begin{remark}
For a topological order $\cC$, We define the time reversal conjugate of $\cC$ as the same fusion category but with $\bar{c}_{a,b}:= c_{b,a}^{-1}$, and denote it as $\overline{\cC}$. For the semion topological order $\SM$, then 
$\overline{\SM}$ is the anti-semion topological order with $\bar{c}_{s,s}=-i$. The double layer anti-semion topological order
$\overline{\SM}\boxtimes\overline{\SM}$ is also a $\mathbb Z_2 \times \mathbb Z_2$ topological order with chiral central charge $c=6$ mod $8$. Similar to the two-layer semion, $\psi=s\boxtimes s$ is a fermion, and $\mathbbm 1\oplus \psi$ is only one nontrivial domain wall which is also invertible. 

$\TC$, $\TS$, $\TF$, and $\overline{\SM}\overline{\SM}$ are the four $2+1$D topological orders with $\mathbb Z_2\times \mathbb Z_2$ fusion rule that come from gauging fermion parity symmetry of the $0,4,8,12$ mod $16$ layers of $p+ip$ topological superconductors in the $16$-fold way~\cite{KITAEV20062} with chiral central charge $0,2,4,6$, respectively. 
\end{remark}

\begin{remark}
If we stack the semion topological $\cC$ with $\overline{\cC}$, we get the double semion topological order $\DS=\cC\boxtimes\overline{\cC}=Z_1(\cC)$, it's another kind of non-chiral topological order with $\mathbb Z_2\times \mathbb Z_2$ fusion rule. $p=s\boxtimes s$ is now a boson. $\mathbbm 1\oplus p$ is the only one nontrivial domain wall satisfying the fusion rule 
\begin{equation}
    \label{dsWallFusion}
    (\mathbbm 1\oplus p)\Box(\mathbbm 1\oplus p) \cong (\mathbbm 1\oplus p)\oplus (\mathbbm 1\oplus p).
\end{equation}
Graphically,

\begin{tikzpicture}[scale=0.4]
		\node [style=none] (0) at (4.5, 2) {};
		\node [style=none] (1) at (4.5, 0) {};
		\node [style=none] (2) at (10.5, 2) {};
		\node [style=none] (3) at (10.5, 0) {};
		\node [style=none] (4) at (0.5, 1) {$\mathbbm1:=$};
		\node [style=none] (5) at (11.5, 2) {$\mathcal{C}$};
		\node [style=none] (6) at (11.5, 0) {$\Bar{\mathcal{C}}$};
		\node [style=none] (7) at (0.5, -3) {$\mathbbm1\oplus p:=$};
		\node [style=none] (8) at (4.5, -2) {};
		\node [style=none] (9) at (4.5, -4) {};
		\node [style=none] (10) at (8.5, -2) {};
		\node [style=none] (11) at (8.5, -4) {};
		\node [style=none] (12) at (6.5, -2) {};
		\node [style=none] (13) at (6.5, -4) {};
		\node [style=none] (14) at (10.5, -2) {};
		\node [style=none] (15) at (10.5, -4) {};
		\node [style=none] (16) at (11.5, -2) {$\mathcal{C}$};
		\node [style=none] (17) at (11.5, -4) {$\Bar{\mathcal{C}}$};
		\node [style=none] (39) at (0.5, -7) {$(\mathbbm1\oplus p)\Box(\mathbbm1\oplus p):=$};
		\node [style=none] (40) at (4.5, -6) {};
		\node [style=none] (41) at (4.5, -8) {};
		\node [style=none] (42) at (8.5, -6) {};
		\node [style=none] (43) at (8.5, -8) {};
		\node [style=none] (44) at (6.5, -6) {};
		\node [style=none] (45) at (6.5, -8) {};
		\node [style=none] (46) at (10.5, -6) {};
		\node [style=none] (47) at (10.5, -8) {};
		\node [style=none] (48) at (10.5, -6) {};
		\node [style=none] (49) at (10.5, -8) {};
		\node [style=none] (50) at (14.5, -6) {};
		\node [style=none] (51) at (14.5, -8) {};
		\node [style=none] (52) at (12.5, -6) {};
		\node [style=none] (53) at (12.5, -8) {};
		\node [style=none] (54) at (16.5, -6) {};
		\node [style=none] (55) at (16.5, -8) {};
		\node [style=none] (56) at (4.5, -10) {};
		\node [style=none] (57) at (4.5, -12) {};
		\node [style=none] (58) at (8.5, -10) {};
		\node [style=none] (59) at (8.5, -12) {};
		\node [style=none] (60) at (6.5, -10) {};
		\node [style=none] (61) at (6.5, -12) {};
		\node [style=none] (62) at (10.5, -10) {};
		\node [style=none] (63) at (10.5, -12) {};
		\node [style=none] (64) at (11.5, -10) {};
		\node [style=none] (65) at (11.5, -12) {};
		\node [style=none] (66) at (15.5, -10) {};
		\node [style=none] (67) at (15.5, -12) {};
		\node [style=none] (68) at (13.5, -10) {};
		\node [style=none] (69) at (13.5, -12) {};
		\node [style=none] (70) at (17.5, -10) {};
		\node [style=none] (71) at (17.5, -12) {};
		\node [style=none] (72) at (11, -11) {$\oplus$};
            \node [style=none] (73) at (2.5, -11) {$=$};
            \node [style=none] (74) at (17.5, -6) {$\cC$};
            \node [style=none] (75) at (17.5, -8) {$\Bar{\cC}$};
            \node [style=none] (76) at (18, -10) {$\cC$};
            \node [style=none] (77) at (18, -12) {$\Bar{\cC}$};
		\draw (0.center) to (2.center);
		\draw (1.center) to (3.center);
		\draw (8.center) to (12.center);
		\draw (9.center) to (13.center);
		\draw (10.center) to (14.center);
		\draw (11.center) to (15.center);
		\draw [bend left=45, looseness=1.25] (12.center) to (13.center);
		\draw [bend right=45, looseness=1.25] (10.center) to (11.center);
		\draw (40.center) to (44.center);
		\draw (41.center) to (45.center);
		\draw (42.center) to (46.center);
		\draw (43.center) to (47.center);
		\draw [bend left=45, looseness=1.25] (44.center) to (45.center);
		\draw [bend right=45, looseness=1.25] (42.center) to (43.center);
		\draw (48.center) to (52.center);
		\draw (49.center) to (53.center);
		\draw (50.center) to (54.center);
		\draw (51.center) to (55.center);
		\draw [bend left=45, looseness=1.25] (52.center) to (53.center);
		\draw [bend right=45, looseness=1.25] (50.center) to (51.center);
		\draw (56.center) to (60.center);
		\draw (57.center) to (61.center);
		\draw (58.center) to (62.center);
		\draw (59.center) to (63.center);
		\draw [bend left=45, looseness=1.25] (60.center) to (61.center);
		\draw [bend right=45, looseness=1.25] (58.center) to (59.center);
		\draw (64.center) to (68.center);
		\draw (65.center) to (69.center);
		\draw (66.center) to (70.center);
		\draw (67.center) to (71.center);
		\draw [bend left=45, looseness=1.25] (68.center) to (69.center);
		\draw [bend right=45, looseness=1.25] (66.center) to (67.center);
\end{tikzpicture}

\end{remark}
\end{subsection}

\section{Example: $\mathbb{Z}_4$ Topological Orders}
Let's now consider a topological order with $\mathbb{Z}_4$ fusion rule. 
$\mathbb Z_4=\{0,1,2,3\}$. 
   It can be proved that only the category $\text{Vec}_{\mathbb Z_4}^\omega$ has modular structure\footnote{As a fusion category, $\text{Vec}_{\mathbb Z_4}$ associator can take anyone in $\text{H}^3(\mathbb Z_4, U(1)) = \mathbb Z_4$, with explict formula $\omega_{p}(a,b,c) = \exp(\frac{2\pi i p}{4^2}a(b+c-[b+c]))$, and $p,a,b,c\in \mathbb Z_4$. It turns out that only $\omega_0$ and $\omega_2$ can be extended to a braided fusion category, and only $\omega_2$ can be extended to a modular tensor category.}, where \begin{equation}\label{3cocycle}
    \omega(a,b,c) = \exp(\frac{i\pi}{4}a(b+c-[b+c])),
\end{equation}
$a,b,c\in \mathbb Z_4$, $[b+c] = (b+c)$ \text{mod} $4$.
We'll denote the simple objects in $\text{Vec}_{\mathbb Z_4}^\omega$ as $\mathbbm1, \sigma_1, \sigma_2, \sigma_3$, which are 1-dimension vector spaces  graded by $0,1,2,3$, respectively. Such category has $4$ different modular structures, corresponding to $2,6,10,14$ mod $16$ layers of $p+ip$ topological superconductor in the $16$-fold way. We'll focus on the 2-layer one with chiral central charge $c=1$ mod $8$, and the $S, T$ matrices are
\begin{equation}
    S = \frac{1}{2}\begin{pmatrix}
        1&1&1&1\\
        1&-i&-1&i\\
        1&-1&1&-1\\
        1&i&-1&-i
    \end{pmatrix}
\end{equation}
\begin{equation}
    T = \begin{pmatrix}
        1&&&\\
        &e^{i\pi/4}&&\\
        &&-1&\\
        &&&e^{i\pi/4}\\
    \end{pmatrix}
\end{equation}
and the convention of braiding $c_{a,b}$ we use are \begin{equation}
\label{braiding}
    \begin{pmatrix}
        1&1&1&1\\
1&e^{i\pi/4}&i&e^{i3\pi/4}\\
        1 & i & -1 & -i\\
        1 & e^{i3\pi/4} & -i & e^{i\pi/4}
    \end{pmatrix}.
\end{equation}
The row and column labels for the above three matrices runs in order of $\mathbbm1, \sigma_1,\sigma_2,\sigma_3$. Note that $\sigma_2$ is an emergent fermion.
We'll denote $\text{Vec}_{\mathbbm Z_4}^\omega$ as $\mathcal{C}$ for convenience. See $S,T$ matrices of the $6,10,14$-layer cases in \ref{ST}.

\subsection{Separable Algebras}
$\mathbb Z_4$ has three subgroups, $  \{0\}=\Z_1$, $\{0,2\}=\Z_2$ and $ \{0,1,2,3\}= \Z_4$. Using the formula \eqref{3cocycle}, one can see $\omega|_{\Z_1}=1$, $\omega|_{\Z_2}=1$. Since $\text{H}^2(\Z_1, U(1))$ and $  \text{H}^2(\Z_2, U(1))$ are trivial, we don't have any 2-cocycle to twist the $\Z_1$ and $\Z_2$ group algebra. Then we get two separable algebras $[\Z_1] = \mathbbm1$, $[\Z_2] = \mathbbm1\oplus \sigma_2$. For subgroup $\Z_4$, since $\omega|_{\Z_4} = \omega$, which is cohomology nontrivial, we don't have a separable algebra associated with $\Z_4$. 

We choose the basis vector in $\mathbbm1, \sigma_{1,2,3}$ as $\ket{\mathbbm1}, \ket{\sigma_{1,2,3}}$,  respectively, such that the multiplication rule of $[\Z_2]$ has the same form as the fusion rule, as shown in table.\ref{Alg2M}. $[\Z_1]$ is just the subalgebra of $[\Z_2]$.

\begin{table}[!h]
    \centering
    \scalebox{1}{
    \begin{tabular}{c c c}  
        & $|\mathbbm1\rangle$ & $|\sigma_2\rangle$  \\ 
  \hline
$|\mathbbm{1}\rangle$ & $|\mathbbm{1}\rangle$ & $\ket{\sigma_2}$ \\  
 \hline
 $\ket{\sigma_2}$ & $\ket{\sigma_2}$ & $|\mathbbm1\rangle$
    \end{tabular}}
    \caption{Multiplications of $[\Z_2]$ in $\mathbb Z_4$ topological order, row times column}
    \label{Alg2M}
\end{table}

\subsection{Bimodules Over Algebras}
Although the algebra multiplications are trivial, we need to deal with the problem carefully due to the nontrivial associator.
\begin{subsubsection}{$[\Z_1]$-bimodule}
    $[\Z_1]$ bimodule category $_{[\Z_1]}\mathcal{C}_{[\Z_1]}\simeq \mathcal{C}$, with monoidal structure just the tensor product of $\mathcal{C}$.
\end{subsubsection}

\begin{subsubsection}{$[\Z_1]$-$[\Z_2]$-bimodule}
\label{gamma}
    A $[\Z_1]$-$[\Z_2]$-bimodule is just a right $[\Z_2]$-module. $[\Z_2]$ has two non-isomorphic simple right modules 
    \begin{equation*}
       \gamma_+=\mathbbm 1\oplus \sigma_2, \gamma_-= \sigma_1\oplus\sigma_3
    \end{equation*}
$\gamma_+$ is just the algebra $[\Z_2]$ as its own module. $\gamma_-$ is isomorphic to the free module $\sigma_1\otimes [\Z_2] = \sigma_1\otimes  (\mathbbm 1\oplus \sigma_2)$.
 Again, remember the associator is generally nontrivial. 

 For example, the free right action of $\sigma_1\otimes [\Z_2]$ is
     \begin{equation}
     \begin{split}
    \tau_{\sigma_1\otimes[\Z_2]}: & (\sigma_1\otimes [\Z_2] )\otimes [\Z_2]\\ &
\xrightarrow{\omega}\sigma_1\otimes ([\Z_2] \otimes[\Z_2])\rightarrow \sigma_1\otimes[\Z_2],
\end{split}
    \end{equation} 
 
we'll have 
    \begin{equation}
(\ket{\sigma_1}\ket{\sigma_2})\ket{\sigma_2}\mapsto - \ket{\sigma_1}(\ket{\sigma_2}\ket{\sigma_2})\mapsto -\ket{\sigma_1}\ket{\mathbbm 1},
    \end{equation}
    since $\omega(1,2,2)=-1$. We redefine $\ket{\sigma_1}\ket{\mathbbm 1}$ as $\ket{\sigma_1}$, $\ket{\sigma_1}\ket{\sigma_2}$ as $\ket{\sigma_3}$, the action on $\gamma_-$ is then
\begin{equation}
\begin{split}
    \tau_{\gamma_-} :  \gamma_-\ot {[\Z_2]}&\rightarrow    \gamma_-\\
    \ket{\sigma_3}\ot \ket{\sigma_2}&\mapsto -\ket{\sigma_1}
\end{split}
\end{equation}
and the other actions are of fusion rule type. 

So we have $\mathcal{C}_{[\Z_2]}=\ve\oplus\ve$.  The result for $[\Z_2]$-$[\Z_1]$-bimodule is the same. Physically, $\gamma_+, \gamma_-$ are two point-like defects living on the end of $[\Z_2]$ wall.
\end{subsubsection}

\begin{subsubsection}{$[\Z_2]$-bimodule}
Using the same way of decomposing the free bimodule, and tracking the associator carefully, one can get all $[\Z_2]$ non-isomorphic indecomposable simple bimodules. we list the result below.

There are 4 indecomposable simple $[\Z_2]$-bimodules, 
\begin{equation*}
M_0^{\sigma_2}= \mathbbm1\oplus\sigma_2,
M_1^{\sigma_2}=\sigma_1\oplus \sigma_3, 
M_2^{\sigma_2} = \mathbbm1\oplus \sigma_2, 
M_3^{\sigma_2} = \sigma_1\oplus \sigma_3.
\end{equation*}
In details,
\begin{enumerate}
\item $M_0^{\sigma_2}\cong [\Z_2]$ is an $[\Z_2]$-bimodule.

\item  The left $[\Z_2]$ action on $M_1^{\sigma_2}$ is of fusion rule type, and the right action is twisted:
\begin{equation}
\begin{split}
\tau_{M_1^{\sigma_2}}: &M_1^{\sigma_2}
\otimes  [\Z_2]\rightarrow M_1^{\sigma_2}\\
& \ket{\sigma_1}\otimes\ket{\sigma_2}\mapsto -i\ket{\sigma_3}\\
&\ket{\sigma_3}\otimes \ket{\sigma_2}\mapsto -i\ket{\sigma_1}
\end{split}
\end{equation}

\item The right $[\Z_2]$ action  on $M_2^{\sigma_2}$ is of fusion rule type, and the left action is twisted:\begin{equation}
\begin{split}
     \rho_{M_2^{\sigma_2}} : &[\Z_2]\otimes M_2^{\sigma_2}\rightarrow M_2^{\sigma_2}\\
     &\ket{\sigma_2}\otimes\ket{\mathbbm1}\mapsto -\ket{\sigma_2}\\
     & \ket{\sigma_2}\otimes\ket{\sigma_2}\mapsto 
     -\ket{\mathbbm1}
\end{split}
\end{equation}

\item  The left $[\Z_2]$ action on $M_3^{\sigma_2}$ is \begin{equation}
\begin{split}
\rho_{M_3^{\sigma_2}}: & [\Z_2]\otimes M_3^{\sigma_2} \rightarrow M_3^{\sigma_2}\\
& \ket{\sigma_2}\otimes\ket{\sigma_1}\mapsto -i\ket{\sigma_3}\\
&\ket{\sigma_2}\otimes\ket{\sigma_3}\mapsto i\ket{\sigma_1}
\end{split}
\end{equation}
and right action is \begin{equation}
\begin{split}
\tau_{M_3^{\sigma_2}}: &M_3^{\sigma_2}
\otimes  [\Z_2]\rightarrow M_3^{\sigma_2}\\
& \ket{\sigma_1}\otimes\ket{\sigma_2}\mapsto \ket{\sigma_3}\\
&\ket{\sigma_3}\otimes \ket{\sigma_2}\mapsto -\ket{\sigma_1}
\end{split}
\end{equation} 
\end{enumerate}
\end{subsubsection}
These 4 simple bimodules also satisfy $\text{Vec}_{\Z_4}$ type of fusion rule\begin{equation} 
\begin{split}
&M_1^{\sigma_2}\ot[A] M_1^{\sigma_2}
 \cong M_2^{\sigma_2}\\
&M_2^{\sigma_2}\ot[A]M_2^{\sigma_2}\cong M_0^{\sigma_2};\\
& M_2^{\sigma_2}\ot[A] M_1^{\sigma_2} \cong M_3^{\sigma_2}.
\end{split}
\end{equation}
 As a fusion category, we must have $_{[\Z_2]}\mathcal{C}_{[\Z_2]}\simeq \mathcal{C}$. Physically, $M_0^{\sigma_2}, M_1^{\sigma_2}, M_2^{\sigma_2}, M_3^{\sigma_2}$ are four simple point-like defects living on the $[\Z_2]$ wall, and they have the $\mathbb Z_4$ fusion rule along the wall.

We list all the bimodule categories in table \ref{BimodZ4}.
\begin{table}
    \centering
    \scalebox{1}{
    \begin{tabular}{|c|c|c|} \hline 
 $\text{Bimod}$& $\mathbbm 1$ & $\mathbbm 1\oplus \sigma_2$  \\ \hline
$\mathbbm 1$ & $\ve_{\mathbb Z_4}^\omega$ &$\ve\oplus \ve$  \\ \hline
 $\mathbbm 1\oplus \sigma_2$&	$\ve\oplus\ve$ &	$\ve_{\mathbb Z_4}^\omega$	\\ \hline
    \end{tabular}}
    \caption{Bimodule categories, algebras in rows on the left and columns on the right.}
   \label{BimodZ4}
\end{table}

\subsection{Fusion of Domain Walls}
The only nontrivial fusion rule of domain walls in this model is \begin{equation}
    [\Z_2]\Box [\Z_2] \sim_M \mathbbm 1,
\end{equation} 
and passing through the wall will exchange $\sigma_1$ and $\sigma_3$.

The multiplication rule of $(\mathbbm 1\oplus \sigma_2)\Box(\mathbbm 1\oplus \sigma_2)$ are \eqref{multi} \begin{equation}
    \ket{ij}\otimes \ket{kl} \mapsto (-2\delta_{jk}\delta_{j\sigma_2}+1)\ket{i\ot k,j\ot l}, i,j,k,l = \mathbbm 1, \sigma_2,
\end{equation}
The invertible bimodules can be chosen the same as the case in toric code \eqref{InvBim1f}, but just replace all $f$ with $\sigma_2$. We denote this pair of invertible $[\Z_2]\Box[\Z_2]$-$\mathbbm 1$ bimodule and $\mathbbm 1$- $[\Z_2]\Box[\Z_2]$ bimodule as $N= \mathbbm 1\oplus \sigma_2$ and $N^\vee=\mathbbm 1\oplus \sigma_2$, respectively.

\subsection{Fusion of Point-like defects together with Domain Walls}
We compute the tensor product of $\gamma_{\pm}\in {}_{\mathbbm 1} \cC_{[\Z_2]}$ (see \ref{gamma}) as an example. We can use \begin{equation}
    (\gamma_+ \Box \gamma_-)\ot[{[\Z_2]\Box[\Z_2]}] N
\end{equation} as the fusion result of $\gamma_+$ and $\gamma_-$. After quotient the action of $[\Z_2]\Box [\Z_2]$, the result is the direct sum of two 1-dimensional vector spaces graded by $1, 3\in \mathbb Z_4$, so $(\gamma_+ \Box \gamma_-)\ot[(\mathbbm 1\oplus \sigma_2) \Box(\mathbbm 1\oplus \sigma_2)]  N \cong \sigma_1\oplus \sigma_3$. Similarly, one can get

\begin{widetext}
    \begin{equation}
    \tikzfig{fusion3}
\end{equation}

\begin{equation}
    \tikzfig{fusion2}
\end{equation}
\end{widetext}

\section{Other Applications of Condensation Completion}
The condensation completion procedures for different kinds of categories are similar but may have quite different physical interpretations.
\subsection{Condensation Completion of a Braided Fusion Category and the Defects on the Boundary}
For a general braided fusion category $\cB$, 
\begin{enumerate}
    \item $\cB$ can be the input of Walker Wang model~\cite{Walker:2012mcd}, and it is believed that the output is a $3+1$D topological order whose codimension-2 and higher defects  form a braided fusion 2-category $Z(\Sigma \mathcal{B})$\cite{Walker:2012mcd, 
 von_Keyserlingk_2013,Wang_2017}. And \GY{if so}, it will have a canonical gapped $2+1$D boundary, the codimension-1 and higher defects on the boundary form the fusion 2-category $\Sigma \mathcal{B}$. 

    \item $\Sigma\cB$ can be viewed as a so called 1-symmetry in $2+1$D, which is in general non-invertible. If $\cB$ is a pointed braided fusion category $\mathrm{Vec}_G^\omega$, i.e. all simple objects in $\cB$ are invertible,  $\Sigma\cB$ is called a 1-form symmetry in $2+1$D. Physically, for system with the fusion 2-category $\Sigma\cB$ symmetry, although objects in $\Sigma\cB$ correspond to the codimension-1 symmetry defects or 0-(form) symmetry operators, they  are all condensation descendant of codimension-2 symmetry defects or 1-(form) symmetry operators, so $\Sigma \cB$ is essentially a 1-(form) symmetry. We use $\Sigma\cB$ instead of $\cB$ to present the symmetry because we prefer the mathematical completeness of categories of symmetry.
    
\end{enumerate}

\subsection{Condensation Completion and the category of SET orders}
\label{CatSET}
In the paper \cite{Lan2023CategoryOS}, we propose the representation principle to study the symmetry enriched topological orders (which include SPT orders and spontaneous symmetry breaking orders). We show that in $n+1$D, the SET orders with a fusion $n$-category symmetry $\cT$ and anomaly $\cX\in (n+2)\ve$ form the category $\Fun(\Sigma\cT, \cX)$. Then, the $n+1$D essentially anomaly-free $G$-SET orders for a finite group $G$ form $\Fun(\Sigma n\ve_G, (n+1)\ve)$, which is $(n+1)\Rep(G)$ by definition. Given  $F_1, F_2 \in \Fun(\Sigma n\ve_G, (n+1)\ve)$, the stacking of them is $F_1\boxtimes F_2 \circ \Sigma \Delta$, where  $\Delta: n\ve_G\rightarrow n\ve_G\boxtimes n\ve_G, g\mapsto g\boxtimes g$ is the monoidal functor that breaks the symmetry from $G\times G$ to $G$, and stacking is exactly the symmetric monoidal structure of $(n+1)\Rep (G)\simeq \Sigma n\Rep(G) $.

Let's take $n=1$ as an example.  
 Bosonic $1+1$D anomaly free gapped quantum phases with symmetry $G$ form the category $\Fun(\Sigma \ve_G, 2\ve)  = 2\Rep (G) \simeq \Sigma \Rep (G)$, which is the symmetric fusion 2-category of separable algebras, bimodules over algebras, bimodule maps in $\Rep(G)$. It's known that the  Morita class of separable algebras in $\Rep(G)$ is classified by a triple $(G,H<G, \psi\in \text{H}^2(H,U(1)))$~\cite{Ostrik:2001xnt}, which matches the physical classification of  bosonic $1+1$D anomaly free gapped quantum phases  ~\cite{Chen_2011,Schuch_2011}.  Every Morita class of algebras in $\Rep (G)$ gives an $1+1$D anomaly free gapped $G$-symmetric phases, a bimodule over algebras gives a defect between the two phases,  the tensor product of two algebras or bimodules gives the stacking of the two corresponding phases (with defects), see table.\ref{corres2}.
\onecolumngrid
    \begin{center}
        \begin{table}[h!]
    \centering
    \begin{tabular}{|c|c| c| c|}
    \hline
       &the category of $1+1$D $G$-SET orders  &$\Sigma \Rep(G)$  \\
       \hline
     object  & 1d gapped $G$ symmetric phase & separable algebra in $\Rep(G)$\\
     \hline
     1-morphism & 0d gapped domain wall between 1d phases  & bimodule over algebras in $\Rep(G)$\\
     \hline
     2-morphism & instanton &  bimodule map\\
     \hline
     tensor product & stacking of 1d phases  & tensor product of separable algebras\\
     \hline
    \end{tabular}
    \caption{$1+1$D $G$-SET orders}
    \label{corres2}
\end{table}
    \end{center}
\twocolumngrid
\begin{remark}
    The group structure of stacking of SPT phases can be covered in the following way. Let $(M,\tau: G\rightarrow \mathrm{End}(M)) \in \Rep^{\omega_2}(G)$ be a $G$ projective representation twisted by $\omega_2\in \mathrm{H}^2(G,U(1))$, then $M^*=\Hom(M,\mathbb C)$ has a induced projective representation $\tau^*$ twisted by $\omega_2^{-1}$ \begin{equation}
    (\tau^*(g)f) (-) := \omega_2^{-1}(g,g^{-1}) f(\tau(g^{-1})(-))
\end{equation}
$\forall g\in G, f\in M^*$, and
\begin{equation}
    \begin{split}
        \tau^*(g)\tau^*(h)&=\frac{\omega_2(gh,h^{-1}g^{-1})\omega_2(h^{-1},g^{-1})}{\omega_2(h,h^{-1})\omega_2(g,g^{-1})}\tau^*(gh)\\
        & = \omega_2^{-1}(g,h)\tau^*(gh).
    \end{split}
\end{equation}
 Then $(M\otimes M^*, \tau\otimes \tau^*)$ is a separable algebra in $\Rep(G)$ with evaluation and coevaluation as multiplication and comultiplication, respectively. The algebra $M\otimes M^*$ corresponds to the $(G,\omega_2)$ SPT phase, we can really use $M\otimes M^*$ to construct a commuting projector Hamiltonian with a unique ground state for periodic boundary condition to realize the $(G,\omega_2)$ SPT phase\cite{lan2023quantum, meng2024}. For another SPT phase $(G, \omega_2')$ given by $N\otimes N^*$, $(N, \sigma: G\rightarrow \End(N)) \in \Rep^{\omega_2'}(G)$, their stacking is $M\otimes M^*\otimes N\otimes N^*$. On the other hand, $(M\otimes N, \tau\otimes \sigma)$ is a projective representation twisted by $\omega_2\cdot \omega_2'$, so $M\otimes N\otimes N^*\otimes M^*$ is a separable algebra gives the $(G,\omega_2\cdot\omega_2')$ SPT. We can then prove that 
\begin{equation}
    M\otimes M^*\otimes N\otimes N^* \xrightarrow{\mathrm{id}_M\otimes c_{M^*,N\otimes N^*}} M\otimes N\otimes N^*\otimes M^*
\end{equation}
is an algebra isomorphism in $\Rep(G)$, where $c_{M^*,N\otimes N^*}$ is the braiding in $\ve$. So, the stacking of SPT phases tracks the group structure of $\mathrm{H}^2(G,U(1))$.
\end{remark}

\begin{example}
   As an example, we can compute the stacking of $1+1$D gapped phases with $S_3$ symmetry. There are $4$ Morita equivalent classes of separable algebras in $\Rep(S_3)$. We can use $\mathbbm 1, \Fun({S_3}/{\Z_3}), \Fun({S_3}/{\Z_2}),\Fun(S_3)$ as representatives, and they correspond to the SPT phase, symmetry breaking to $\Z_3$ subgroup phase , symmetry breaking to $\Z_2$ subgroup phase, and totally symmetry breaking phase, respectively. See table.\ref{S3Stacking} for the stacking results.

    Consider two 1d $G$ symmetry breaking phases with unbroken subgroup symmetry $H_1$ and $H_2$, the stable ground states form  $G$-sets $G/{H_1}$ and $G/{H_2}$, respectively.  The stacking of them gives another $G$ symmetry breaking phase, who stable ground states form the $G$-set  $G/{H_1}\times G/{H_2}$ with the diagonal $G$-action.  The fusion rule $\Fun(G/H_1)\otimes \Fun(G/H_2)$ shows how ground states in $G/{H_1}\times G/{H_2}$ are decomposed into different sectors, which are closed under symmetry actions. Mathematically, such decomposition is the orbit decomposition of $G$-sets.
\end{example}
\onecolumngrid
\begin{center}
    \begin{table}[h!]
        \centering
        \begin{tabular}{|c|c|c|c|c|}
        \hline
          $\boxtimes$ & $\mathbbm 1$ & $\Fun(S_3/{\Z_3})$   & $\Fun(S_3/{\Z_2})$ & $\Fun(S_3)$ \\
          \hline
          $\mathbbm 1$ &   $\mathbbm 1$ & $\Fun(S_3/{\Z_3})$   & $\Fun(S_3/{\Z_2})$ & $\Fun(S_3)$\\
          \hline
          $\Fun(S_3/{\Z_3})$& $\Fun(S_3/{\Z_3})$& $\Fun(S_3/{\Z_3})^{\oplus 2}$& $\Fun(S_3)$ & $\Fun(S_3)^{\oplus 2}$\\
          \hline$\Fun(S_3/{\Z_2})$&$\Fun(S_3/{\Z_2})$ & $\Fun(S_3)$ & $\Fun(S_3/{\Z_2})\oplus \Fun(S_3)$& $\Fun(S_3)^{\oplus 3}$\\
           \hline$\Fun(S_3)$&$\Fun(S_3)$& $\Fun(S_3)^{\oplus 2}$ & $\Fun(S_3)^{\oplus 3}$ & $\Fun(S_3)^{\oplus 6}$\\
          \hline
        \end{tabular}
        \caption{Stacking of $1+1$D gapped phases with $S_3$ symmetry}
        \label{S3Stacking}
    \end{table}
    \end{center}
\twocolumngrid

\begin{remark}
    For $n>1$ cases, $\Sigma n\Rep(G)$ can only cover non-chiral $n+1$D G-SET phases (including SPT phases, symmetry breaking phases), the SET phases with chiral underlying topological orders do not correspond to objects in $\Sigma n\Rep(G)$, since they have nontrivial anomaly $\cX$. Explicit formulation of $\Sigma n\Rep(G)$ for a general $n$ is still not clear,
    see~\cite{D_coppet_2023,D_coppet_20231} for some recent progress on algebras in $2\Rep (G)$.
\end{remark}

\subsection{1d Defects and Gapped Boundaries Correspondence}
\label{Def-Bdy}
If we have a 1d defect $A$ in a 2+1D topological order $\mathfrak{C}$, we can fold $\mathfrak{C}$ along $X$ and get a new 2+1D topological order $\mathfrak{C}\boxtimes \bar{\mathfrak{C}}$ with $A$ as its gapped boundary. The anyons in the stacking phase $\mathfrak{C}\boxtimes \bar{\mathfrak{C}}$ is described by the category $\mathcal{C}\boxtimes \bar{\mathcal{C}}$, where $\mathcal{C}$ is a modular tensor category and $\bar{\mathcal{C}}$ reverses the braiding.  
 
Let $A$ just be a separable algebra in $\cC$ that represents the 1d defect, remember $_A\mathcal{C}_A$ is the category of particle-like excitation living on $A$. Due to the folding picture, $_A\mathcal{C}_A$ should describe the particle-like excitation on the folding boundary.  There must be a Lagrangian algebra $L$ in $\mathcal{C}\boxtimes \bar{\mathcal{C}}$, such that  $(\mathcal{C}\boxtimes \bar{\mathcal{C}})_L \simeq\ _A\mathcal{C}_A$ as $\mathcal{C}\boxtimes \bar{\mathcal{C}}$-module categories. There is a one-to-one correspondence between the 1d defects in a topological order and the gapped boundaries of the stacking of the topological order and itself. In this section we will compute the Lagrangian algebra $L\in \cC\boxtimes \overline{\cC}$ from $A\in \cC$.

\begin{remark}
    Given a fusion category $\cC$, it is  well known that there is a bijection between Lagrangian algebras in $Z(\cC)$ and Morita classes of separable algebras in $\cC$ \cite{DavydovMügerNikshychOstrik+2013+135+177}. The folding trick above gives a physical picture of this bijection when $\cC$ is further an MTC. 
\end{remark}

According to \cite{DavydovMügerNikshychOstrik+2013+135+177}, let $\cC$ be an MTC, $\cA$ a fusion category and $F:\cC\rightarrow\cA$ a surjective central functor, such that $\dim \cC=(\dim \cA)^2$, then the right adjoint of $F^\vee(\mathbbm 1_\cA)$ is a Lagrangian algebra in $\cC$, and $\cC_{F^\vee(\mathbbm 1_\cA)}\cong \cA$, where $F^\vee$ is the right adjoint of $F$.

In our case when $\cC$ is an MTC, we need to find a central functor  $F:\cC\boxtimes \overline{\cC}\rightarrow \bmd {A}{A}\cC$. Then $X\boxtimes Y\in\mathcal{C}\boxtimes \bar{\mathcal{C}}$ acts on $M\in {}_A\mathcal{C}_A$, denoted by $(X\boxtimes Y)\odot M$, is defined by $F(X\boxtimes Y)\ot[A] M$. 
We choose the central functor $F$ defined as follows,
\begin{equation}
    \label{ModuleAction}
        F:\mathcal{C}\boxtimes \bar{\mathcal{C}}\simeq Z_1(\mathcal{C})\simeq \text{Fun}_{\mathcal{C}|\mathcal{C}}(\mathcal{C},\mathcal{C})\rightarrow \text{Fun}_\mathcal{C}(\mathcal{C}_A,\mathcal{C}_A)\simeq\ _A\mathcal{C}_A
\end{equation}

The first equivalence in~\eqref{ModuleAction} is the canonical equivalence between $\mathcal{C}\boxtimes \bar{\mathcal{C}}$ and $Z_1(\mathcal{C})$
when $\mathcal{C}$ is modular~\cite{MUGER2003159}, where $Z_1(\mathcal{C})$ is the Drinfeld center of $\mathcal{C}$. The equivalence is defined as follows.  Let $c_{X,Y}$ be the braiding natural isomorphism from $X\otimes Y$
to $Y\otimes X$. We define two fully faithful functors
\begin{equation}
    \begin{split}
        I_1:\  &\mathcal{C}\rightarrow Z_1(\mathcal{C})\\
        & X\mapsto (X,c_{-,X})
    \end{split}
\end{equation}
\begin{equation}
    \begin{split}
        I_2:\  &\bar{\mathcal{C}}\rightarrow Z_1(\mathcal{C})\\
        & X\mapsto (X,c^{-1}_{X,-})
    \end{split}
\end{equation}
Then the equivalence is 
    \begin{equation}
\begin{split}
    I:\ &\mathcal{C}\boxtimes \bar{\mathcal{C}}\rightarrow Z_1(\mathcal{C})\\
    & X\boxtimes Y \mapsto I_1(X)\ot[Z_1(\mathcal{C})] I_2(Y)\\& = (X\otimes Y, c^{-1}_{Y,-}\circ (c_{-,X}\circ \text{id}_Y))
\end{split}
\end{equation}

The second equivalence in \eqref{ModuleAction} is the equivalence between the Drinfeld center of $\mathcal{C}$ and the category of $\mathcal{C}$-bimodule functors from $\mathcal{C}$ to $\mathcal{C}$. Given $(X, \gamma_{-,X}) \in Z_1(\mathcal{C})$, $X\otimes -$ is a $\mathcal{C}$-bimodule functor. The left $\mathcal{C}$-module functor structure is induced by the half braiding $\gamma_{X}$, and the right $\mathcal{C}$-module functor structure is   induced by the associator in $\mathcal{C}$.

The third arrow just identify a $\mathcal{C}$-bimodule functor $K$  as a $\mathcal{C}$-module functor from $\mathcal{C}_A$ to $\mathcal{C}_A$. Let $N\in \mathcal{C}_A$, the right $A$ module strutre of $F(N)$ is given by the right $\mathcal{C}$-module functor structure of $K$
\begin{equation}
    K(N)\otimes A\simeq K(N\otimes A) \rightarrow K(N).
\end{equation} The left $\mathcal{C}$-module functor structure remains.

The last equivalence is given by $K\mapsto K(A)$. $K(A)$ is a right $A$ module by definition, and the left $A$ action is given by \begin{equation}
    A\otimes K(A) \simeq K(A\otimes A) \rightarrow K(A),
\end{equation}  
where the isomorphism is due to the left $\mathcal{C}$-module functor structure of $K$. Readers can go to \cite{etingof2016tensor} for more details. 

Finally, \begin{equation}
    \begin{split}
        F:\ &\mathcal{C}\boxtimes \bar{\mathcal{C}} \rightarrow \ _A\mathcal{C}_A\\
       & X\boxtimes Y \mapsto (X\otimes Y)\otimes A.
    \end{split}
\end{equation}
    As an $A$-bimodule, the right $A$ action is just the $A$ multiplication, and the left $A$ action comes from braiding with $X$ and anti-braiding with $Y$ and then do the $A$ multiplication 
  
        \begin{equation}
        \begin{split}
            A\otimes X\otimes Y \otimes A \stackrel{c_{A,X}}{\longrightarrow} X\otimes A\otimes Y \otimes A \\\stackrel{c^{-1}_{Y,A}}{\longrightarrow} X\otimes Y \otimes A\otimes A \rightarrow X\otimes Y \otimes A .
        \end{split}
    \end{equation}
    
Since $F^\vee$ is right adjoint to $F$, then we have 
\begin{equation}
_A \mathcal{C}_A(F(X\boxtimes Y), A) \simeq \mathcal{C}\boxtimes \bar{\mathcal{C}}(X\boxtimes Y, F^{\vee}(A)). 
\end{equation} 
Since $F(X\boxtimes Y) = (X\otimes Y)\otimes A\simeq ((X\otimes Y)\otimes A)\otimes_A A\simeq (X\boxtimes Y) \odot A$, and the internal hom $[A,A]$ represents the hom-functor  $_A \mathcal{C}_A(-\odot A, A)$\cite{etingof2016tensor}, then we have 
\begin{equation}
 F^{\vee}(A)\simeq [A,A].
\end{equation}
We will denote the Lagrangian algebra $F^{\vee}(A)$ as $[A,A]$ from now and then. To get the Lagrangian algebra explicitly, we just need to exhaust all simple objects $X\boxtimes Y$ in $\mathcal{C}\boxtimes \bar{\mathcal{C}}$ and compute the dimension of the hom-space $_A \mathcal{C}_A(X\otimes Y\otimes A, A)$. If the dimension is $n$, then there will be a $(X\boxtimes Y)^{\oplus n}$ direct summand in $[A,A]$. 

\begin{widetext}
For the examples of toric code, we have 
\begin{equation}
\begin{split}
[\mathbbm1,\mathbbm1]&=\mathbbm1\boxtimes \mathbbm1\oplus e\boxtimes e\oplus m\boxtimes m \oplus f\boxtimes f,\\
    [\mathbbm1\oplus e,\mathbbm1\oplus e] &= \mathbbm 1\boxtimes \mathbbm1 \oplus  \mathbbm1\boxtimes e\oplus e\boxtimes \mathbbm1 \oplus e\boxtimes e,\\
    [\mathbbm1\oplus m,\mathbbm1\oplus m]&=\mathbbm 1\boxtimes \mathbbm1 \oplus  \mathbbm1\boxtimes m\oplus m\boxtimes \mathbbm1 \oplus m\boxtimes m,\\
     [\mathbbm1\oplus f,\mathbbm1\oplus f]&=\mathbbm 1\boxtimes \mathbbm1 \oplus  e\boxtimes m\oplus m\boxtimes e \oplus f\boxtimes f,\\
      [\mathbbm1\oplus e\oplus m \oplus f,\mathbbm1\oplus e\oplus m \oplus f]&=\mathbbm 1\boxtimes \mathbbm1 \oplus  e\boxtimes\mathbbm1\oplus \mathbbm1\boxtimes m \oplus e\boxtimes m,\\
       [(\mathbbm1\oplus e\oplus m \oplus f)_\psi,(\mathbbm1\oplus e\oplus m \oplus f)_\psi]&=\mathbbm 1\boxtimes \mathbbm1 \oplus \mathbbm1\boxtimes e\oplus m\boxtimes\mathbbm1 \oplus m\boxtimes e,
\end{split}
\end{equation}    
which are six gapped boundaries of double toric code. 

For the $\TF$ topological order, we have 
    \begin{equation}
    \label{TFIH}
    \begin{split}
          [\mathbbm 1, \mathbbm 1]& = \mathbbm1\boxtimes \mathbbm1\oplus e\boxtimes e\oplus m\boxtimes m \oplus f\boxtimes f,\\
          [\mathbbm1\oplus e,\mathbbm1\oplus e] &= \mathbbm 1\boxtimes \mathbbm 1 \oplus m\boxtimes f \oplus f\boxtimes m \oplus e\boxtimes e  \\
          [\mathbbm1\oplus m,\mathbbm1\oplus m]&= \mathbbm 1\boxtimes \mathbbm 1\oplus e\boxtimes f \oplus f\boxtimes e\oplus m\boxtimes m\\
          [\mathbbm1\oplus f,\mathbbm1\oplus f]&= \mathbbm 1\boxtimes \mathbbm 1 \oplus e\boxtimes m\oplus m\boxtimes e \oplus f\boxtimes f\\
          [\mathbbm1\oplus e\oplus m \oplus f,\mathbbm1\oplus e\oplus m \oplus f]&= \mathbbm 1\boxtimes \mathbbm 1\oplus e\boxtimes f\oplus f\boxtimes m \oplus m\boxtimes e\\
          [(\mathbbm1\oplus e\oplus m \oplus f)_\psi,(\mathbbm1\oplus e\oplus m \oplus f)_\psi]&= \mathbbm 1\boxtimes \mathbbm 1\oplus e\boxtimes m \oplus m\boxtimes f\oplus f\boxtimes e
    \end{split}
\end{equation}
which are six gapped boundaries of the double $\TF$.
\end{widetext}

For the two-layer semion topological order, we have 
\begin{equation}
    \begin{split}
        [\mathbbm 1,\mathbbm 1] &= \mathbbm 1\boxtimes \mathbbm 1\oplus s_1\boxtimes s_1 \oplus s_2\boxtimes s_2 \oplus \psi \boxtimes \psi\\
    [\mathbbm 1\oplus \psi,\mathbbm 1\oplus \psi] &= \mathbbm 1\boxtimes \mathbbm 1 \oplus s_1\boxtimes s_2 \oplus s_2\boxtimes s_1 \oplus \psi\boxtimes \psi
    \end{split}
\end{equation}
which are two gapped boundaries of the double two-layer semion topological order.

For the case of $\mathbb Z_4$ chiral topological order, we have 
\begin{equation}
    \begin{split}
        [\mathbbm1, \mathbbm1]& = \mathbbm1\boxtimes \mathbbm1\oplus \sigma_1\boxtimes \sigma_3 \oplus \sigma_2\boxtimes\sigma_2\oplus\sigma_3\boxtimes \sigma_1\\
        [\mathbbm1\oplus\sigma_2, \mathbbm1\oplus\sigma_2] &= \mathbbm1\boxtimes \mathbbm1
        \oplus \sigma_1\boxtimes \sigma_1
        \oplus \sigma_2\boxtimes\sigma_2
    \oplus
    \sigma_3\boxtimes \sigma_3,
    \end{split}
\end{equation}
which are two gapped boundaries of the double $\text{Vec}_{\mathbbm Z_4}^\omega$ topological order.

\begin{remark} A separable algebra $A$ in a MTC $\cC$ may correspond to an invertible domain wall in the topological order $\cC$ (e.g., the $e$-$m$ exchange wall in toric code, corresponding to $\mathbbm1\oplus f$.) Whether $A$ corresponds to an invertible wall can be see from the internal hom $[A,A]$: According to~\cite{davydov2013structure}, $A$ corresponds to an invertible wall presicely when $[A,A]\cap (\mathbbm1\boxtimes\bar\cC)=\mathbbm1\boxtimes\mathbbm1=[A,A]\cap (\cC\boxtimes\mathbbm1)$, in other words, the only simple summands in $[A,A]$ of the form $\mathbbm1\boxtimes x$ or $x\boxtimes\mathbbm1$ is $\mathbbm1\boxtimes\mathbbm1$.
An invertible wall further determines a braided autoequivalence $\phi_A:$ $\cC\rightarrow \cC$.
One can also see how an invertible wall $A$ permutes anyons from $[A,A]$:
    \begin{equation}
        [A,A] = \oplus_{i\in \text{Irr}(\cC)}i\boxtimes\phi_A(i)^*.
    \end{equation}

For example, from the last line of \eqref{TFIH}, we can immediately see that the domain wall permutes anyons as $e\mapsto m, m\mapsto f, f\mapsto e$, the same as what we get in~\eqref{3Fpermutation}. 
\end{remark}

\section{Conclusion and Outlook}
This paper mainly discussed how to use condensation completion to find the codimension-1 gapped defects in $2+1$D topological orders and their algebraic properties. We also talked about some other applications of condensation completion in physics including defining higher symmetry, finding Lagrangian algebras in $\cC\boxtimes \overline{\cC}$ for a MTC $\cC$, and classifying phases with symmetries. Condensation completion provides us with powerful algebraic tools to study topological phases from a macroscopic point of view. 

At the same time, condensation completion related topics present many challenges and future work that need to be addressed: 
\begin{enumerate}
\item Given a fusion category 
$\cC$, the explicit model for 
$\Sigma\cC$ requires the classification theory of separable algebras in 
$\cC$, which is generally hard to do.
\item 
For a fusion $n$-category ($n\geq 2$) $\cC$, although the abstract construction of condensation completion has been developed in~\cite{gaiotto2019condensations}, we still lack a concrete, workable model of $\Sigma\cC$. Such a model will deepen our understanding of higher-dimensional topological phases, and we are actively working on this project.
\item 
With all these macroscopic descriptions, it remains to be seen whether we can realize this data from the microscopic degrees of freedom, for example, by studying local operator algebras or constructing fixed-point lattice models.
\end{enumerate}

\acknowledgments
We are grateful for the helpful discussions with Chenqi Meng, Holiverse Yang and Tian Yuan. TL is supported by start-up funding from The Chinese University of Hong Kong, and by funding from the Research Grants Council, University Grants Committee of Hong Kong (ECS No.~24304722).

\appendix{
\section{Algebras and Modules}
\label{AlgebraAndModule}
\begin{definition}[Algebra in a monoidal category]
    Let $\mathcal{C}=(\mathcal{C}, \otimes, \mathbbm{1}, \lambda, \epsilon)$ be a monoidal category. An (associative unital) algebra in $\mathcal{C}$ is a triple $(A, \mu,\eta)$, where $A\in \mathcal{C}$, $\mu: A\otimes A\to A$ and $\eta:\mathbbm{1}\to A$ such that the following diagrams 
    \begin{equation}
        \begin{tikzcd}
    	(A\otimes A)\otimes A &&&& A\otimes (A\otimes A) \\
    	\\
    	A\otimes A && A && A\otimes A
    	\arrow["\alpha_{A,A,A}", from=1-1, to=1-5]
    	\arrow["\mu\otimes \mathrm{id}_A"', from=1-1, to=3-1]
    	\arrow["\mu"', from=3-1, to=3-3]
    	\arrow["\mathrm{id}_A\otimes\mu", from=1-5, to=3-5]
    	\arrow["\mu", from=3-5, to=3-3]
        \end{tikzcd}
    \end{equation}
    \begin{equation}
        \begin{tikzcd}
    	\mathbbm{1}\otimes A && A\otimes A && A\otimes\mathbbm{1} \\
    	\\
    	&& A
    	\arrow["\eta\otimes \mathrm{id}_A", from=1-1, to=1-3]
    	\arrow["\mu", from=1-3, to=3-3]
    	\arrow["\lambda"', from=1-1, to=3-3]
    	\arrow["\epsilon", from=1-5, to=3-3]
    	\arrow["\mathrm{id}_A\otimes\eta"', from=1-5, to=1-3]
        \end{tikzcd}
    \end{equation}
    commute.
\end{definition}

\begin{definition}[Module over an algebra]
Given an algebra $(A,\mu,\eta)$. A right $A$-module is a pair $(M,\tau)$, where $M\in\mathcal{C}$ and $\tau: M\otimes A\to M$ such that the following diagrams
    \begin{equation}
        \begin{tikzcd}
    	(M\otimes A)\otimes A &&&& M\otimes (A\otimes A) \\
    	\\
    	M\otimes A && M && M\otimes A
    	\arrow["\alpha_{M,A,A}", from=1-1, to=1-5]
    	\arrow["\tau\otimes \mathrm{id}_A"', from=1-1, to=3-1]
    	\arrow["\tau"', from=3-1, to=3-3]
    	\arrow["\mathrm{id}_M\otimes\mu", from=1-5, to=3-5]
    	\arrow["\tau", from=3-5, to=3-3]
        \end{tikzcd}
    \end{equation}
    \begin{equation}
       \begin{tikzcd}
	{M\ot \mathbbm 1} \\
	{M\ot A} && M
	\arrow["{\id_M\ot \eta}"', from=1-1, to=2-1]
	\arrow["\cong", from=1-1, to=2-3]
	\arrow["\tau"', from=2-1, to=2-3]
\end{tikzcd}
    \end{equation}
    commute.
\end{definition}
\begin{remark}
    A left $A$-module $(N,\rho:A\otimes N\to N)$ is defined as the similar way with a left action.
\end{remark}

\begin{definition}[Free module]
    Given an algebra $(A,\mu,\eta)$. A free right $A$-module is a module $(X\otimes A,\tau)$, where $X\in \cC$ and $\tau=(\id_X\otimes\mu)\circ \alpha_{X,A,A}:(X\otimes A)\otimes A\xrightarrow{}X\otimes(A\otimes A)\xrightarrow{}X\otimes A$.
\end{definition}
\begin{definition}[Module map]
Given two right $A$-modules $(M_1,\tau_1)$ and $(M_2,\tau_2)$. A right module map between $M_1$ and $M_2$ is a morphism $f:M_1\to M_2$ such that the following diagram
\begin{equation}
    \begin{tikzcd}
	M_1\otimes A && M_1 \\
	\\
	M_2\otimes A && M_2
	\arrow["\tau_1", from=1-1, to=1-3]
	\arrow["f", from=1-3, to=3-3]
	\arrow["f\otimes\mathrm{id}_A"', from=1-1, to=3-1]
	\arrow["\tau_2"', from=3-1, to=3-3]
    \end{tikzcd}
\end{equation}
\end{definition}
\begin{remark}
    The right $A$-modules and $A$-module maps in $\cC$ form a category, we denote it as $\cC_A$.
\end{remark}
\begin{definition}[Algebra bimodule]
Given two algebras $(A,\mu_A,\eta_A)$ and $(B,\mu_B,\eta_B)$. A $B$-$A$-bimodule is a triple $(M, \rho, \sigma)$ where
\begin{itemize}
    \item the pair $(M, \tau: M\otimes A\to M)$ is a right $A$-module.
    \item the pair $(M, \rho: B\otimes M\to M)$ is a left $B$-module.
\end{itemize}
such that the following diagram
    \begin{equation}
    \label{CompatibleAction}
        \begin{tikzcd}
    	(B\otimes M)\otimes A &&&& B\otimes (M\otimes A) \\
    	\\
    	M\otimes A && M && B\otimes M
    	\arrow["\alpha_{B,M,A}", from=1-1, to=1-5]
    	\arrow["\rho\otimes \mathrm{id}_A"', from=1-1, to=3-1]
    	\arrow["\tau"', from=3-1, to=3-3]
    	\arrow["\mathrm{id}_B\otimes\tau", from=1-5, to=3-5]
    	\arrow["\rho", from=3-5, to=3-3]
        \end{tikzcd}
    \end{equation}
\end{definition}
commutes.
\begin{definition}[Bimodule map]
\label{BimoduleMap}
    Given two $B$-$A$-bimodules $(M_1,\tau_1,\rho_1)$, $(M_2,\tau_2,\rho_2)$, a $B$-$A$-bimodule map between $M_1$ and $M_2$ is a morphism $f:M_1\to M_2$ such that $f$ is both a left $B$-module map and a right $A$-module map.
\end{definition}
\begin{remark}
     $A$-$B$-bimodules and $A$-$B$-module maps in $\cC$ form a category, we denote it as ${}_A\cC_B$. Note that ${}_A\cC_A$ is further a monoidal category with tensor product as a relative tensor product over $A$. See definition \ref{RTP} below.
\end{remark}
\begin{definition}[Relative tensor product]
\label{RTP}
    Let $A$ be an algebra in a monoidal category $\cC$. Given a right $A$-bimodule $(M,\tau_M)$ and a left $A$-bimodule $(N,\rho_N)$. The relative tensor product between $M$ and $N$ is an object in $\cC$, denoted by $M\ot[A]N$, defined by the coequalizer
\begin{equation}
    \begin{tikzcd}
	M\otimes A \otimes N  && M\otimes N && M\ot[A]N \\
	&&&& X
	\arrow["\mathrm{id}_M\otimes \rho_N"', shift right=2, from=1-1, to=1-3]
	\arrow["\tau_M\otimes\mathrm{id}_N", shift left=2, from=1-1, to=1-3]
	\arrow["u", from=1-3, to=1-5]
	\arrow["\exists!\bar{f}",dashed, from=1-5, to=2-5]
	\arrow["\forall f"', from=1-3, to=2-5]
    \end{tikzcd}
\end{equation}
    where $u$ is the quotient map, and $u\circ \tau_M\otimes\mathrm{id}_N = u\circ\mathrm{id}_M\otimes \rho_N$, $X$ is any object in $\mathcal{C}$ and $f\circ \tau_M\otimes\mathrm{id}_N = f\circ\mathrm{id}_M\otimes \rho_N$. Here we omit the associator for simplicity, in a monoidal category with a nontrivial associator, the lower coequalized morphism should be $\mathrm{id}_M\otimes \rho_N\circ\alpha_{M,A,N}$, similar for remark.\ref{InducedModuleAction} below.
\end{definition}
\begin{remark}
\label{InducedModuleAction}
    If $B$, $C$ are algebras in $\cC$, and $M,N$ are further $B$-$A$-bimodule and $A$-$C$-bimodule respectively, then $M\ot[A]N$ has a natural $B$-$C$-bimodule structure, induced by left $B$ action on $M$ and right $C$ action on $N$,
    \begin{widetext}
        \begin{equation}
    \begin{tikzcd}
	{B\otimes M\otimes A\otimes N} && {B\otimes M\otimes N} && {B\otimes M\ot[A]N} \\
	\\
	{ M\otimes A\otimes N} && { M\otimes N} && { M\ot[A]N}
	\arrow["{\mathrm{id}_B\otimes\tau_M\otimes \mathrm{id}_N}", shift left=2, from=1-1, to=1-3]
	\arrow["{\mathrm{id}_B\otimes \mathrm{id}_M\otimes \rho_N}"', shift right=2, from=1-1, to=1-3]
	\arrow["{\mathrm{id}_B\otimes u}", from=1-3, to=1-5]
	\arrow["{\rho_M\otimes \mathrm{id}}"', from=1-1, to=3-1]
	\arrow["{\rho_M\otimes \mathrm{id}}", from=1-3, to=3-3]
	\arrow["u", from=3-3, to=3-5]
	\arrow["{\exists!\ \bar{\rho}_{M\ot[A] N}}"', dashed, from=1-5, to=3-5]
	\arrow["{\tau_M\otimes \mathrm{id}_N}", shift left=2, from=3-1, to=3-3]
	\arrow["{\mathrm{id}_M\otimes \rho_N}"', shift right=2, from=3-1, to=3-3]
\end{tikzcd}
    \end{equation}
    \end{widetext}
   here we use the property that $B\otimes -$ preserves the coequalizer.  The right $C$ action $\bar{\tau}_{M\ot[A]N}$ is defined similarly.
\end{remark}
\begin{definition}[Morita equivalence of algebras]
\label{MoritaEqAlg}
    Given two algebras $A$ and $B$ in a monoidal category $\cC$. They are Morita equivalent if there is a $A$-$B$-bimodule $M$, and a $B$-$A$-bimodule $N$, such that\begin{equation}
        \begin{split}
            M\ot[B]N&\cong A\\
            N\ot[A]M&\cong B.
        \end{split}
    \end{equation} 
as $A$-$A$-bimodule and $B$-$B$-bimodule, respectively. We call such bimodule $M,N$ as invertible bimodule.
\end{definition}
\begin{remark}
   Two algebras $A,B$ in $\cC$ are Morita equivalent if and only if their category of modules $\cC_A$, $\cC_B$ are equivalent as left $\cC$-module categories. See the definition of module category below.
   \end{remark}
\begin{definition}[Module category over a monoidal category]
   Given a monoidal category $\mathcal{C}$. A left module category over $\mathcal{C}$ is a category $\mathcal{M}$ with
   \begin{itemize}
       \item a functor $\otimes: \mathcal{C}\times\mathcal{M}\to\mathcal{M}$.
       \item two natural isomorphisms $\gamma_{X,Y,M}:(X\otimes Y)\otimes M\to X\otimes (Y\otimes M)$ and $l_M:\mathbbm1 \otimes M\to M$, for any $X,Y\in \mathcal{C}$, $M\in\mathcal{M}$ and $\mathbbm1$ is the tensor unit in $\mathcal{C}$, 
   \end{itemize}
   such that $\forall X,Y,Z\in \cC$ and $M\in \cM$, the following diagrams
   \begin{widetext}
          \begin{equation}
       \begin{tikzcd}
    	((X\otimes Y)\otimes Z)\otimes M &&&&(X\otimes (Y\otimes Z))\otimes M \\
    	\\
    	(X\otimes Y)\otimes (Z\otimes M) && X\otimes (Y\otimes (Z\otimes M)) && X\otimes ((Y\otimes Z)\otimes M)
    	\arrow["\alpha_{X,Y,Z}\otimes \id_M", from=1-1, to=1-5]
    	\arrow["\gamma_{X\otimes Y,Z,M}"', from=1-1, to=3-1]
    	\arrow["\gamma_{X,Y,Z\otimes M}"', from=3-1, to=3-3]
    	\arrow["\gamma_{X,Y\otimes Z,M}", from=1-5, to=3-5]
    	\arrow["\id_X\otimes\gamma_{Y,Z,M}", from=3-5, to=3-3]
        \end{tikzcd}
   \end{equation}
   \begin{equation}
       \begin{tikzcd}
	(X\otimes \mathbbm1)\otimes M &&&& X\otimes(\mathbbm1\otimes M) \\
	\\
	&& X\otimes M
	\arrow["\gamma_{X,\mathbbm1,M}", from=1-1, to=1-5]
	\arrow["\lambda_X\otimes \id_M"', from=1-1, to=3-3]
	\arrow["\id_X\otimes \epsilon_M", from=1-5, to=3-3]
        \end{tikzcd}
   \end{equation}
   \end{widetext}
   commute.
\end{definition}
\begin{remark}
Let $A$ be an algebra in $\cC$, then $\cC_A$ is a left $\cC$-module, with associator and unitor the same as the ones in $\cC$.
\end{remark}

\begin{definition}[Module functor]
\label{ModuleFunctor}
    Given two left $\cC$-module categories $\cM$ and $\cN$. A left $\cC$-module functor is a functor $F:\cM\to\cN$ equipped with a natural isomorphism $s_{X,M}:F(X\otimes M)\to X\otimes F(M)$ for any $X\in\cC$ and $M\in\cM$,
    
    such that $\forall X,Y\in \cC, M\in\cM$, the diagrams
    \begin{widetext}
    \begin{equation}
       \begin{tikzcd}
    	F((X\otimes Y)\otimes M) &&&& (X\otimes Y)\otimes F(M) \\
    	\\
    	F(X\otimes (Y\otimes M)) && X\otimes F(Y\otimes M) && X\otimes (Y\otimes F(M))
    	\arrow["s_{X\otimes Y,M}", from=1-1, to=1-5]
    	\arrow["F(\gamma_{X,Y,M})"', from=1-1, to=3-1]
    	\arrow["s_{X,Y\otimes M}"', from=3-1, to=3-3]
    	\arrow["\gamma_{X,Y,F(M)}", from=1-5, to=3-5]
    	\arrow["X\otimes s_{Y,M}"', from=3-3, to=3-5]
        \end{tikzcd}
   \end{equation}
   \begin{equation}
       \begin{tikzcd}
	F(\mathbbm1\otimes M) &&&& \mathbbm1\otimes F(M) \\
	\\
	&& F(M)
	\arrow["s_{\mathbbm1,M}", from=1-1, to=1-5]
	\arrow["F(l_M)"', from=1-1, to=3-3]
	\arrow["l_{F(M)}", from=1-5, to=3-3]
        \end{tikzcd}
   \end{equation}    
    \end{widetext}
    
   commute.
\end{definition}
\begin{definition}[Module natural transformation]
    Let $\cC$ be a monoidal category and $\cM$, $\cN$ be two left $\cC$-module, and $(F,s),(F',s'): \cM\rightarrow \cN$ be two left $\cC$-module functors, then a module functor natural transformation $\nu$ between $(F,s)$ and $(F',s')$ is a natural transformation satisfies an additional condition that the diagram\begin{equation}
        \begin{tikzcd}
	{F(X\otimes M)} && {F'(X\otimes M)} \\
	{X\otimes F(M)} && {X\otimes F'(M)}
	\arrow["{s_{X,M}}"', from=1-1, to=2-1]
	\arrow["{\nu_{X\otimes M}}", from=1-1, to=1-3]
	\arrow["{s'_{X,M}}", from=1-3, to=2-3]
	\arrow["{\text{id}_X\otimes \nu_M}"', from=2-1, to=2-3]
\end{tikzcd}
    \end{equation}
    commutes.
\end{definition}
\begin{remark}
    The $\cC$-module functors from $\cM$ to $\cN$ and module natural transformations form a category $\Fun_{\cC}(\cM,\cN)$.
\end{remark}
\begin{definition}[Morita equivalence of monoidal category]
Let $\cC$ and $\cD$ be two monoidal categories, they are Morita equivalent if there is a left $\cC$-module $\cM$, such that \begin{equation}
    \cD^{\text{rev}}\simeq \Fun_\cC(\cM,\cM):= \cC_{\cM}^\vee
\end{equation} 
as monoidal categories, here ``rev" means reversing the order of tensor product.
\end{definition}
\begin{remark}
    Two fusion categories $\cC$ and $\cD$ are Morita equivalent if and only if $Z(\cC)\simeq Z(\cD)$ as braided fusion categories.
\end{remark}
\begin{definition}[Separable algebra]
\label{SeparableAlg}
    An algebra $(A,\mu,\eta)$ in a monoidal category $\mathcal{C}$ is called a separable algebra if there is a $A$-$A$-bimodule map $\Delta: A\rightarrow A\otimes A$ such that $\mu\circ \Delta = \mathrm{id}_A$.
\end{definition}
\begin{remark}
The bimodule splitting $\Delta$ is not a part of the data of a separable algebra, only its existence is required. Any choice of 
    $\Delta :A \rightarrow A\ot A$ is automatically coassociative.
\end{remark}
\begin{lemma}
\label{A.1}
    If $A$ is a separable algebra in a fusion category $\mathcal{C}$, then all the simple right $A$-modules are the direct summand of a free module $i\otimes A$ for some simple object $i\in \mathcal{C}$\cite{DavydovMügerNikshychOstrik+2013+135+177}.
\end{lemma}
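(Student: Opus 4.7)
The plan is to reduce the lemma to the intermediate assertion that every right $A$-module $M$ is a direct summand of the free module $M\otimes A$ in $\mathcal{C}_A$, and then to exploit semisimplicity. Assuming that intermediate claim, I would decompose $M\cong\bigoplus_j n_j\cdot i_j$ into simples in the fusion category $\mathcal{C}$ and apply the additive free-module functor $-\otimes A\colon\mathcal{C}\to\mathcal{C}_A$ to obtain $M\otimes A\cong\bigoplus_j n_j\cdot(i_j\otimes A)$ in $\mathcal{C}_A$. If $M$ is simple (in particular indecomposable) in $\mathcal{C}_A$, then since $\mathcal{C}_A$ is a semisimple abelian category (a consequence of the separability of $A$), Krull--Schmidt forces $M$ to appear as a direct summand of a single $i_j\otimes A$ for some $j$.

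The heart of the argument is thus the intermediate claim, which I would prove via a conceptual reformulation of separability: the multiplication $\mu\colon A\otimes A\to A$ is a split epimorphism in the category of $A$-$A$-bimodules, with bimodular section $\Delta$. Applying $M\otimes_A-$ to this split epimorphism yields a split right $A$-module epimorphism $M\otimes_A(A\otimes A)\to M\otimes_A A$; under the canonical identifications $M\otimes_A A\cong M$ and $M\otimes_A(A\otimes A)\cong M\otimes A$ in $\mathcal{C}_A$, this recovers the action map $\tau_M\colon M\otimes A\to M$ equipped with a right $A$-linear section $s:=M\otimes_A\Delta$. Concretely one can write
\begin{equation*}
s \;=\; (\tau_M\otimes\mathrm{id}_A)\circ(\mathrm{id}_M\otimes\Delta)\circ(\mathrm{id}_M\otimes\eta)\colon M\longrightarrow M\otimes A,
\end{equation*}
and $\tau_M\circ s=\mathrm{id}_M$ follows from a short chase using $\mu\circ\Delta=\mathrm{id}_A$, the module associativity of $\tau_M$, and the unit axiom for $M$.

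The main obstacle is verifying the right $A$-linearity of $s$ in concrete terms: one must combine the right-bimodule property of $\Delta$, namely $(\mathrm{id}_A\otimes\mu)\circ(\Delta\otimes\mathrm{id}_A)=\Delta\circ\mu$, with the associativity of $\tau_M$ to check $(\mathrm{id}_M\otimes\mu)\circ(s\otimes\mathrm{id}_A)=s\circ\tau_M$. The conceptual framing via $M\otimes_A-$ circumvents this explicit diagram chase, at the price of invoking that $M\otimes_A-$ is a well-defined functor that preserves split epimorphisms, which is automatic in a fusion category. Once the intermediate claim is secured, the Krull--Schmidt step above is essentially a one-liner and completes the proof.
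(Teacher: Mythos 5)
Your proposal is correct and follows essentially the same route as the paper: the paper's proof also uses separability to split $A\otimes A\cong A\oplus Y$ as bimodules, applies $X\otimes_A-$ to realize a simple module $X$ as a direct summand of $X\otimes A$, and then decomposes $X$ into simples of $\mathcal{C}$ to conclude $X$ is a summand of some $i\otimes A$. Your explicit section $s=(\tau_M\otimes\mathrm{id}_A)\circ(\mathrm{id}_M\otimes\Delta)\circ(\mathrm{id}_M\otimes\eta)$ and the appeal to Krull--Schmidt merely spell out details the paper leaves implicit.
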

\begin{proof}
    Since $A$ is separable, $A\otimes A = A\oplus Y$ as $A$-$A$-bimodules. For any right $A$-modules $X$, we have \begin{equation}
    X\otimes A\simeq X\otimes_{A}(A\otimes A)\simeq X\otimes_A(A\oplus Y)\simeq X\oplus (X\otimes_A Y)
\end{equation}
Let's suppose $X$ is a simple module, and $X=i\oplus j\oplus\cdots$ as object in $\mathcal{C}$. Then  \begin{equation}
    (i\otimes A)\oplus(j\otimes A)\oplus\cdots\simeq X\oplus (X\otimes_A Y)
\end{equation}
so $X$ must be a direct summand of $i\otimes A$ for some simple $i\in\mathcal{C}$. 
\end{proof}
To find all the simple right $A$-modules, we just need to do direct sum decomposition for $i\otimes A, \forall i\in \mathcal{C}$. In the similar way we can find all simple $A$-$B$-bimodules for $A,B$ both separable algebras.

\section{Indecomposable modules over $\text{Vec}_G^\omega$}
\label{Gomega}
Let $G$ be a finite group and $\omega\in Z^3(G,U(1))$. We define $^gh = ghg^{-1}$. For a subgroup $L$ of $G$, we define $^gL=\{^gh|h\in L\}$. Let $H$ be another subgroup of $G$ and $H={}^gL$. For a $n$-cochain $\psi\in C^n(H, U(1))$, $\psi^g\in C^n(L, U(1))$ is defined by $\psi^g(g_1,\cdots,g_n)=\psi(^gg_1,\cdots,\ ^gg_n)$. Given $g\in G$, we define the 2-cochian $\Omega_g\in C^2(G, U(1))$, \begin{equation}
    \Omega_g(g_1,g_2) = \frac{\omega( ^gg_1,\ ^gg_2,g)\omega(g,g_1,g_2)}{\omega(^gg_1,g,g_2)}
\end{equation}
For a 2-cochain $\psi\in C^2(H,U(1))$, if it satisfies \begin{equation}
\label{TwistedAlgebraCondition}
    d\psi = \omega|_{H\times H\times H}=1
\end{equation}
then we denote $A(H,\psi)$ as the group algebra with multiplication twisted by $\psi$. Then every indecomposable left $\text{Vec}_G^\omega$-module is equivalent to  $(\text{Vec}_G^\omega)_{A(H,\psi)}$ for some $H$ and $\psi\in  C^2(H, U(1))$ statisfying $\eqref{TwistedAlgebraCondition}$ \cite{10.1155/S1073792803205079,etingof2016tensor}. The classification theorems of indecomposable modules over $\text{Vec}_G^\omega$ or separable algebras in $\text{Vec}_G^\omega$ are as follows (theorem 1.1 and 1.2 in \cite{Natale_2017}),

\begin{theorem}
    Let $H,L$ be subgroups of $G$ and let $\psi\in C^2(H,U(1))$ and $\xi\in C^2(L,U(1))$ statisfying \eqref{TwistedAlgebraCondition}, then $(\text{Vec}_G^\omega)_{A(H,\psi)}$ is equivalent to $(\text{Vec}_G^\omega)_{A(L,\xi)}$ as left $\text{Vec}_G^\omega$-modules if and only if there is a $g\in G$ such that $H=\ ^gL$ and $\xi^{-1}\psi^g \Omega_g|_{L\times L}$ is trivial in $\text{H}^2(L, U(1))$. 
\end{theorem}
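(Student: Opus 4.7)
The plan is to prove both implications by analyzing invertible $A(H,\psi)$-$A(L,\xi)$ bimodules in $\text{Vec}_G^\omega$. Recall that two algebras give equivalent module categories if and only if they are Morita equivalent, and that a Morita equivalence is implemented by an invertible bimodule. Since $\text{Vec}_G^\omega$ is pointed with simple objects $\{\delta_g\}_{g\in G}$, any invertible bimodule must, as an object in $\text{Vec}_G^\omega$, be concentrated on a single left $H$- right $L$-orbit in $G$, and a dimension count with $|H|=|L|$ forces it to be $M=\bigoplus_{l\in L}\delta_{gl}$ for some $g\in G$.

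\textbf{Sufficiency ($\Leftarrow$).} Suppose $H={}^gL$ and $\xi^{-1}\psi^g\,\Omega_g|_{L\times L}=d\lambda$ for some $\lambda\in C^1(L,U(1))$. I would construct the bimodule $M_g$ with underlying object $\bigoplus_{l\in L}\mathbb{C}\,\delta_{gl}$, left $A(H,\psi)$-action $|{}^gl'\rangle\cdot|gl\rangle\mapsto c(l',l)\,|gl'l\rangle$ and right $A(L,\xi)$-action $|gl\rangle\cdot|l'\rangle\mapsto c'(l,l')\,|gll'\rangle$, where the scalar factors $c,c'$ are built from $\psi$, $\xi$, $\lambda$, and selected values of $\omega$ so that unit and associativity for both actions reduce to the algebra cocycle equations for $\psi$ and $\xi$. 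The mixed compatibility square \eqref{CompatibleAction}, when written out with the associator $\omega$ inserted at the triple bracketings $({}^gl'\otimes g)\otimes l$, $({}^gl',g)\otimes l$, ${}^gl'\otimes(g\otimes l)$, produces exactly the identity
\begin{equation}
    \xi(l_1,l_2)\,\lambda(l_1l_2)=\psi^g(l_1,l_2)\,\Omega_g(l_1,l_2)\,\lambda(l_1)\,\lambda(l_2),
\end{equation}
which holds by hypothesis. Invertibility of $M_g$ is verified by exhibiting the mirror bimodule built from $\bigoplus_{h\in H}\delta_{hg^{-1}}$ and checking that the two relative tensor products reduce to $A(H,\psi)$ and $A(L,\xi)$, respectively.

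\textbf{Necessity ($\Rightarrow$).} Conversely, given an equivalence $(\text{Vec}_G^\omega)_{A(H,\psi)}\simeq(\text{Vec}_G^\omega)_{A(L,\xi)}$ of left $\text{Vec}_G^\omega$-modules, I would extract an invertible bimodule $M$. By the observation above, $M=\bigoplus_{l\in L}\delta_{gl}$ for some $g\in G$. The existence of a nonzero left $A(H,\psi)$-action forces every $h\in H$ to map $\delta_{gl}$ to some $\delta_{gl'}$, hence $hg\in gL$ for all $h\in H$, i.e.\ $H\subseteq{}^gL$; the same argument on the other side gives $L\subseteq{}^{g^{-1}}H$, so $H={}^gL$. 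Reading off the scalars in the two one-sided action equations defines a cochain $\lambda\in C^1(L,U(1))$, and the mixed compatibility diagram then rearranges to $\xi^{-1}\psi^g\,\Omega_g|_{L\times L}=d\lambda$, proving cohomological triviality.

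\textbf{Main obstacle.} The delicate point is the associator bookkeeping that produces $\Omega_g$. One must expand the compatibility pentagon for a bimodule in a nontrivially associative monoidal category and recognize the combination
$\omega({}^gl_1,{}^gl_2,g)\,\omega(g,l_1,l_2)/\omega({}^gl_1,g,l_2)$
as $\Omega_g(l_1,l_2)$; the 3-cocycle identity $d\omega=1$ applied to $({}^gl_1,{}^gl_2,g,l_?)$ is needed to reduce the raw $\omega$-factors into this form. Once that identification is made, both directions follow from the same equation, and the freedom to rescale $M$ by an element of $C^1(L,U(1))$ is precisely the freedom to modify $\lambda$ by a coboundary, giving the well-defined class in $H^2(L,U(1))$ that appears in the statement.
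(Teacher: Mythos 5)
The paper does not actually prove this theorem: it is quoted from Natale (Theorems 1.1 and 1.2 of \cite{Natale_2017}), and the surrounding remarks on $\mathrm{ad}_g$ and $\Omega_g$ serve only to restate it in algebra language (the appendix's second theorem). Your proposal therefore supplies a proof where the paper supplies a citation. The route you take --- module-category equivalence $\Leftrightarrow$ Morita equivalence $\Leftrightarrow$ existence of an invertible bimodule, followed by an explicit determination of which $G$-graded vector spaces can carry an invertible bimodule structure --- is the standard one, and its skeleton is sound: the mixed compatibility square \eqref{CompatibleAction}, expanded with the associator $\omega$ at the three bracketings you list, is exactly where the combination $\Omega_g$ appears, and the freedom to rescale the basis of $M$ by a $1$-cochain is exactly the coboundary ambiguity of $\lambda$, which is why only the class of $\xi^{-1}\psi^g\Omega_g|_{L\times L}$ in $\mathrm{H}^2(L,U(1))$ matters.

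One step you should tighten, because as written it is circular: you obtain $M=\bigoplus_{l\in L}\delta_{gl}$ from ``a dimension count with $|H|=|L|$,'' but $|H|=|L|$ is part of what the necessity direction must establish, and your subsequent derivation of $H={}^gL$ from the support of $M$ presupposes that $M$ is already concentrated on the single coset $gL$. The non-circular order is: an invertible bimodule $M$ is simple (since $\mathrm{End}(M)\cong\mathrm{End}_{A\text{-}A}(A)\cong\mathbb{C}$), hence supported on a single double coset $HgL$, whose size is $|H||L|/|{}^{g^{-1}}H\cap L|$; all graded components of $M$ over that coset have equal dimension $d$ because the group-element generators of the twisted group algebras act by isomorphisms; and $\mathrm{FPdim}(M)=\sqrt{|H||L|}$ because $M\otimes_{A(L,\xi)}M^{*}\cong A(H,\psi)$ and $M^{*}\otimes_{A(H,\psi)}M\cong A(L,\xi)$ while $\mathrm{FPdim}(M^{*})=\mathrm{FPdim}(M)$. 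Comparing gives $d=|{}^{g^{-1}}H\cap L|/\sqrt{|H||L|}\le 1$, forcing $d=1$, $|H|=|L|$ and ${}^{g^{-1}}H=L$ simultaneously; only then does $M=\bigoplus_{l\in L}\delta_{gl}$, after which the cocycle bookkeeping you describe completes both directions.
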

There is also an equivalent way to state the theorem from the algebra perspective. 
\begin{definition}
    $\forall g\in G$, we define a adjoint action functor $\text{ad}_g : \text{Vec}_G^\omega\rightarrow \text{Vec}_G^\omega$, which maps $V$ to ${}^gV$, where $({}^gV)_h = V_{{}^gh}$. For $f: V\rightarrow W$, $\text{ad}(f)_h:= f_{{}^gh}$.
\end{definition}
\begin{remark}
    $\text{ad}_g$ is a monoidal antoequivalence, with monoidal structure \begin{equation}
        \begin{split}
            (\text{ad}_g^2)_{V,W}: {}^gV\otimes {}^gW&\simeq {}^g(V\ot W)\\
            v\otimes w &\mapsto \Omega_g(h,k)^{-1} v\otimes w
            \end{split}
    \end{equation}  
    where $h,k\in G$ and $v\in V_h=({}^gV)_{g^{-1}hg}$, $w\in W_k =({}^gW)_{g^{-1}kg}$. The coherent conditions of a monoidal functor follow from the fact that \begin{equation}
    d\Omega_g = \frac{\omega}{\omega_g}.
\end{equation}
\end{remark}
\begin{remark}
    $\Omega_g$ satisfies \begin{equation}
        \Omega_{g_1g_2} = \Omega_{g_1}^{g_2}\Omega_{g_2}d\gamma(g_1,g_2),
    \end{equation}
    where $\gamma(g_1,g_2): G\rightarrow U(1)$ is defined by \begin{equation}
        \gamma(g_1,g_2)(g) = \frac{\omega(g_1,g_2,g)\omega({}^{g_1g_2}g,g_1,g_2)}{\omega(g_1,{}^{g_2}g,g_2)}.
    \end{equation} 
    This gives a well defined adjoint $G$ action on $\text{Vec}_G^\omega$, which is a monoidal functor
    \begin{equation}
    \begin{split}
         \text{ad}: \underline{G} &\rightarrow \underline{\text{Aut}}_{\otimes}(\text{Vec}_G^\omega)\\
         g&\mapsto \text{ad}_g
    \end{split}
    \end{equation} 
with monoidal structure \begin{equation}
\begin{split}
    \text{ad}^2_V : {}^g({}^{g'}V)& \simeq {}^{gg'}V\\
    v&\mapsto \gamma(g,g')(h)v
\end{split}
\end{equation}
where $g,g'\in G$ and $v \in V_h$.
\end{remark}
\begin{theorem}
     Let $H,L$ be subgroups of $G$ and let $\psi\in C^2(H,U(1))$ and $\xi\in C^2(L,U(1))$ statisfying \eqref{TwistedAlgebraCondition}, then $(\text{Vec}_G^\omega)_{A(H,\psi)}$ is equivalent to $(\text{Vec}_G^\omega)_{A(L,\xi)}$ as left $\text{Vec}_G^\omega$-modules if and only if there exists $g\in G$ s.t. ${}^gA(H,\psi)\simeq A(L,\xi)$ as algebras in $\text{Vec}_G^\omega$.
\end{theorem}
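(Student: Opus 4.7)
The plan is to derive this theorem from the previous cocycle-theoretic classification by showing that the algebra-isomorphism condition stated here is precisely an algebraic repackaging of the two conditions ``$H={}^{g}L$'' and ``$\xi^{-1}\psi^{g}\Omega_{g}|_{L\times L}$ trivial in $\mathrm{H}^{2}(L,U(1))$''. The key calculational input will be to compute the algebra ${}^{g}A(H,\psi)$ explicitly as an object of $\text{Vec}_{G}^{\omega}$ and identify it with a twisted group algebra supported on $L$.

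For the easy direction, I would assume ${}^{g}A(H,\psi)\simeq A(L,\xi)$ as algebras in $\text{Vec}_{G}^{\omega}$. Since $\mathrm{ad}_{g}$ is a monoidal autoequivalence of $\text{Vec}_{G}^{\omega}$, it induces an equivalence of left $\text{Vec}_{G}^{\omega}$ module categories $(\text{Vec}_{G}^{\omega})_{A(H,\psi)}\simeq(\text{Vec}_{G}^{\omega})_{{}^{g}A(H,\psi)}$ by pushforward of modules, and composing with the given algebra isomorphism yields the desired $(\text{Vec}_{G}^{\omega})_{A(H,\psi)}\simeq(\text{Vec}_{G}^{\omega})_{A(L,\xi)}$.

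For the converse, I would invoke the preceding (cocycle-form) theorem to obtain $g\in G$ with $H={}^{g}L$ and $\xi^{-1}\psi^{g}\Omega_{g}|_{L\times L}$ cohomologically trivial, and then exhibit this same $g$ as realizing an algebra isomorphism. Concretely, as an object of $\mathrm{Vec}_{G}$, ${}^{g}A(H,\psi)$ is supported on $g^{-1}Hg=L$, and its multiplication is the composite $({}^{g}\mu)\circ(\mathrm{ad}_{g}^{2})_{A,A}$. Unpacking the formula for $\mathrm{ad}_{g}^{2}$ given in the remark just above, on basis vectors $\ket{\tilde h}\otimes\ket{\tilde k}$ with $\tilde h,\tilde k\in L$ this produces $\psi^{g}(\tilde h,\tilde k)\,\Omega_{g}(\tilde h,\tilde k)\,\ket{\tilde h\tilde k}$, so that ${}^{g}A(H,\psi)\cong A(L,\psi^{g}\Omega_{g}|_{L\times L})$ as algebras in $\text{Vec}_{G}^{\omega}$. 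I would then invoke the standard fact that two twisted group algebras $A(L,\phi_{1})$ and $A(L,\phi_{2})$ in $\text{Vec}_{G}^{\omega}$ supported on the same subgroup $L$ are isomorphic as algebras iff $\phi_{1}\phi_{2}^{-1}$ is a $2$-coboundary on $L$: any such algebra isomorphism must be diagonal in the group basis and is thus given by a $1$-cochain $\beta\colon L\to U(1)$, and rescaling the basis by $\beta$ shifts the twist by $d\beta$. Applying this with $\phi_{1}=\psi^{g}\Omega_{g}|_{L\times L}$ and $\phi_{2}=\xi$, the cohomological triviality coming from the first theorem finishes the converse.

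The main obstacle I expect is the careful bookkeeping in the calculation of the multiplication on ${}^{g}A(H,\psi)$, in particular correctly combining the phase contributed by $\mathrm{ad}_{g}^{2}$ (which uses the $H$-grading before reindexing) with the reindexing of $\psi$ to $\psi^{g}$ on the conjugated subgroup $L$, and then verifying that the resulting $2$-cochain $\psi^{g}\Omega_{g}|_{L\times L}$ actually satisfies $d(\psi^{g}\Omega_{g}|_{L\times L})=\omega|_{L}$ so that the candidate algebra is well defined in $\text{Vec}_{G}^{\omega}$. That consistency check ultimately rests on the identity $d\Omega_{g}=\omega/\omega_{g}$ recorded in the remark together with $d\psi=\omega|_{H}$, and it is also the place where the orientation conventions (which direction ``$g$'' acts, and whether to use $g$ or $g^{-1}$) are easiest to confuse and need to be pinned down before starting.
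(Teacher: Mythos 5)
The paper itself offers no proof of this theorem; both it and the preceding cocycle-form classification are quoted from Natale's work, so there is no in-paper argument to compare against. Your overall strategy is nonetheless the natural one and matches what the surrounding remarks are set up for: compute ${}^gA(H,\psi)$ explicitly as a twisted group algebra $A(L,\psi^g\Omega_g|_{L\times L})$ supported on $L=g^{-1}Hg$, check well-definedness via $d\Omega_g=\omega/\omega^g$ together with $d\psi=\omega|_{H\times H\times H}$, and use the fact that two twisted group algebras on the same subgroup are isomorphic in $\text{Vec}_G^\omega$ iff their twists differ by a coboundary (the diagonal-rescaling argument you give for this is correct, since each graded component is one-dimensional and algebra morphisms are grading-preserving). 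Modulo the convention bookkeeping you already flag, the converse direction is sound.

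The one genuine gap is in your ``easy'' direction. The principle you invoke --- that a monoidal autoequivalence $F$ of a category $\cC$ induces, by pushforward of modules, an equivalence $\cC_A\simeq\cC_{F(A)}$ of \emph{left $\cC$-module categories} --- is false in general: pushforward intertwines the action of $X$ on $\cC_A$ with the action of $F(X)$ on $\cC_{F(A)}$, so it is a module equivalence only after twisting the module structure by $F$. Concretely (writing $\cC=\text{Vec}_G^\omega$ with $\omega$ trivial), if $F$ is induced by an outer automorphism $\phi$ of $G$ and $A=A(H)$ with $\phi(H)$ not conjugate to $H$, then $\cC_{A(H)}$ and $\cC_{A(\phi(H))}$ correspond to the non-isomorphic $G$-sets $G/H$ and $G/\phi(H)$ and are not equivalent as left $\cC$-modules. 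What rescues the argument for $F=\mathrm{ad}_g$ is that this particular autoequivalence is inner: ${}^gA$ is canonically isomorphic to $\delta_{g^{-1}}\otimes A\otimes\delta_g$, where $\delta_g$ is the invertible simple object of degree $g$, so $A$ and ${}^gA$ are Morita equivalent via the invertible bimodule $A\otimes\delta_g$, and right tensoring with $\delta_g$ is a genuine left-module equivalence $\cC_A\simeq\cC_{{}^gA}$ because it does not touch the left action. Alternatively --- and more in the spirit of your own opening paragraph --- you can drop the pushforward argument entirely: your computation ${}^gA(H,\psi)\cong A(L,\psi^g\Omega_g|_{L\times L})$ plus the diagonal-isomorphism criterion shows that the algebra-isomorphism condition is \emph{equivalent} to the pair of conditions in the cocycle-form theorem ($H={}^gL$ from matching gradings, and the coboundary condition from the rescaling), after which both implications of the present theorem follow from that theorem simultaneously.
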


\section{Relative Deligne tensor product of modules over pointed braided fusion category}
\label{RDT}
In this section, we review the result of relative Deligne tensor product of modules over pointed braided fusion category $\cC=\ve_G^{\omega,c}$. 


Let $E,F$ be two subgroup of $G$ and $\phi\in\mathrm{C}^2(E,U(1)), \psi\in \mathrm{C}^2(F,U(1))$ satisfying $d\phi = \omega|^{-1}_{E\times E\times E}$, $ d\psi = \omega|^{-1}_{F\times F\times F}$, then we can define a 2-cochain $\chi$ on $E\times F$ 
\begin{equation}
\begin{split}
    \chi((e_1,f_1),(e_2,f_2))=&\omega(e_1,f_1,e_2f_2)\omega^{-1}(f_1,e_2,f_2) c_{f_1,e_2} \omega(e_2,f_1,f_2)\\&\cdot\omega^{-1}(e_1,e_2,f_1f_2)\phi(e_1,e_2)\psi(f_1,f_2)
\end{split}
\end{equation}
$\forall e_1,e_2\in E, f_1, f_2\in F$.
Let $\pi:E\times F\rightarrow G$ be the group multiplication. One can check $\chi$ satisfies 
\begin{equation}
    d\chi = \pi^{*}\omega^{-1}.
\end{equation}
$\ker(\pi)=\{(e,f)|ef=1\}$,  the kernel of $\pi$ is a subgroup of $E\times F$. It's easy to see $\ker(\pi)$ is isomorphic to $E\cap F$, and an element $e\in E\cap F$ maps to $(e,e^{-1})\in \ker(\pi)$.  

We define a map $B(\chi): (E\times F)\times (E\cap F)\rightarrow U(1)$ by 
\begin{equation}
    B(\chi)(a,b)= \frac{\chi(a,b)}{\chi(b,a)}
\end{equation}
 $\forall a\in E\times F, b\in E\cap F$. We define the orthogonal complement of $E\cap F$ by 
\begin{equation}
    (E\cap F)^\perp=\{(e,f)\in E\times F| B(\chi)((e,f),(e',e'))=1, \forall e\in E\}.
\end{equation}
One can show that $(E\cap F)^\perp$ is actually a subgroup of $E\times F$. Let $H:=\Im(\pi)|_{(E\cap F)^{\perp}}$. Fix any set section $s: H\rightarrow (E\cap F)^\perp$, we define a 2-cochain $\rho$ on $H$:  
\begin{equation}
\begin{split}
     &\rho(g,h)\\=&\omega^{-1}(h^{-1}g^{-1},g,h)\omega(h^{-1},g^{-1},g)\chi^{-1}(s(h)^{-1}, s(g)^{-1})
\end{split}
\end{equation}
$\forall g,h\in H$.
\begin{theorem}\cite{Decoppet_2023RDT}
    \begin{equation}
     \cM(E,\phi)\boxtimes_\cC \cM(F,\psi)\simeq \cM(H,\rho)^{\oplus \alpha}
 \end{equation}
 where $H,\rho$ are the subgroup and 2-cochain constructed above, and 
 \begin{equation}
     \alpha=\frac{|(E\cap F)^{\perp}||E\cap F|}{|E||F|}
 \end{equation}
\end{theorem}
In algebra version, 
\begin{equation}
     A(E,\phi)\Box A(F,\psi)\sim_M A(H,\rho)^{\oplus \alpha}.
\end{equation}

As an example, let's use this formula to show the fusion rule 
$(\mathbbm 1 \oplus e\oplus m\oplus f) \Box (\mathbbm 1\oplus f) \sim_M \mathbbm 1 \oplus e$ in toric code. $(\mathbbm 1 \oplus e\oplus m\oplus f)=[\Z_2\times \Z_2]=[\{1,a,b,c\}]$, $\mathbbm 1\oplus f=[(\Z_2)_f]=[\{1,c\}]$. According to the braiding \eqref{TCBraiding} we choose,
the nontrivial values of $\chi$ are
\begin{equation}
\chi((g_1,f),(m,g_2))=\chi((g_1,f),(f,g_2))=-1   
\end{equation}
 for any $g_1\in \Z_2\times \Z_2, g_2\in (\Z_2)_f$. 

 $\ker(\pi)=(\Z_2\times \Z_2)\cap (\Z_2)_f = (\Z_2)_f$. One can then show that \begin{equation}
     ((\Z_2\times \Z_2)\cap (\Z_2)_f)^{\perp} = \{(1,1),(a,1), (b,c), (c,c)\},
 \end{equation}
 so $H=\Im(\pi)|_{((\Z_2\times \Z_2)\cap (\Z_2)_f)^{\perp}} = \{1,a\} =(\Z_2)_e$. We can take a section 
 \begin{equation}
 \begin{split}
        s: (\Z_2)_e&\rightarrow ((\Z_2\times \Z_2)\cap (\Z_2)_f)^{\perp}\\
            1&\mapsto (1,1)\\
           a &\mapsto (a,1)
 \end{split}
 \end{equation}
 the 2-cochain $\rho$ is totally trivial, so $[H,\rho]=[(\Z_2)_e]=\mathbbm 1\oplus e$. The multiplicity 
 \begin{equation}
     \alpha = \frac{4\times 2}{4\times 2}=1,
 \end{equation}
 so we finally have $(\mathbbm 1 \oplus e\oplus m\oplus f) \Box (\mathbbm 1\oplus f) \sim_M \mathbbm 1 \oplus e$.
 
\section{Identifying domain walls in $\TF$ with $S_3$ elements}
\label{3Fapp}
According to ~\cite{Kitaev_2012,etingof2003finite}, if there are two $2+1$D topological orders whose particles form modular tensor categories $\cC$ and $\cD$, and let $\cM$ be the fusion category of point-like defects on a domain wall of $\cC$ and $\cD$, moving bulk particles to the domain wall gives two central functors $L_\cM: \cC\rightarrow \cM$ and $R_\cM: \cD\rightarrow \cM$, if $\cM$ is an invertible wall, then $L_\cM, R_\cM$ are monoidal equivalences, and $T_\cM = R^{-1}_\cM L_\cM$ gives a braided equivalence between $\cC$ and $\cD$. We can use this to calculate what will a particle become after passing through a invertible domain wall, i.e. how does the invertible wall permutes anyons.

In our case, let $A\in \Sigma\cC$ be a domwain wall, the left bulk-to-wall functor is defined as \begin{equation}
    \begin{split}
        L_A: \cC &\rightarrow {}_A\cC_A\\
        x&\mapsto x\otimes A
    \end{split}
\end{equation} 
where the right $A$-module structure of $x\otimes A$ is just $A$ multiplication, and the left $A$-module structure is given by \begin{equation}
    A\otimes x\otimes A\xrightarrow{c_{A,x}\otimes \id_A}x\otimes A\otimes A\xrightarrow{\mu_A} x\otimes A.
\end{equation}

Similarly, the right bulk-to-wall functor is defined as \begin{equation}
    \begin{split}
        R_A: \cC &\rightarrow {}_A\cC_A\\
        x&\mapsto A\otimes x
    \end{split}
\end{equation} 
where the left $A$-module structure of $A\otimes x$ is just $A$ multiplication, and the right $A$-module structure is given by \begin{equation}
    A\otimes x\otimes A\xrightarrow{\id_A\otimes c_{x,A}}A\otimes A\otimes x\xrightarrow{\mu_A} A\otimes x,
\end{equation}
we have explained the definition of these functors in section~\ref{Def-Bdy}.

First, let's consider the domain wall $[(\Z_2)_e]$.  $L_{[\Z_2]_e}$ is a monoidal equivalence and we have \begin{equation}\begin{split}
    \mathbbm 1\otimes [(\Z_2
    )_e]&\cong M_0^e\\
      e\otimes [(\Z_2
    )_e]&\cong M_1^e\\
      m\otimes [(\Z_2
    )_e]&\cong N_1^e\\
      f\otimes[(\Z_2
    )_e]&\cong N_0^e
\end{split}
\end{equation}
as $\mathbbm 1\oplus e$-bimodule (see~\eqref{rrWall0dDefect} and below). We show $m\otimes [(\Z_2)_e]\cong  N_1^e$ and then all other isomorphic relations are self-evident. The left $[(\Z_2)_e]$ action on $ m\otimes [(\Z_2)_e]$ is 
\begin{equation}
\begin{split}
    \rho_{m\otimes [(\Z_2)_e]}: [(\Z_2)_e]\otimes (m\otimes [(\Z_2)_e])&\rightarrow m\otimes [(\Z_2)_e]\\
    \ket{e}\otimes \ket{m\mathbbm 1}&\mapsto -\ket{me}\\
    \ket{e}\otimes \ket{me}&\mapsto -\ket{m\mathbbm 1}
\end{split}
\end{equation}
since we have $c_{e,m}=-1$ (see~\eqref{3FBraiding}). The one can check \begin{equation}
\begin{split}
       \phi_{L,m}: m\otimes[(\Z_2)_e]&\rightarrow N_1^e\\
       \ket{m\mathbbm 1}&\mapsto \ket{m}\\
       \ket{me}&\mapsto \ket{f}
\end{split}
\end{equation}
is a bimodule isomorphism. Now we need to find whose images are $L_{[(\Z_2)_e]}(i), i=\mathbbm 1,e,m,f$ under $\Ree$. We claim \begin{equation}
    \begin{split}
      \Le(\mathbbm 1)&\cong \Ree(\mathbbm 1)\\
      \Le(e)&\cong \Ree(e)\\
      \Le(m)&\cong\Ree(f)\\
      \Le(f)&\cong\Ree(m)
    \end{split}
\end{equation}
as $[(\Z_2)_e]$-bimodule. Again, we show $\Le(m)\cong\Ree(f)$ and others then automatically hold. $\Ree(f)  = [(\Z_2)_e]\otimes f$, and its right $[(\Z_2)_e]$ action is given by \begin{equation}
\begin{split}
      \tau_{[(\Z_2)_e]\otimes f}: ([(\Z_2)_e]\otimes f)\otimes [(\Z_2)_e]&\rightarrow [(\Z_2)_e]\otimes f\\
      \ket{\mathbbm 1f}\otimes \ket{e}&\mapsto -\ket{ef}\\
      \ket{ef}\otimes \ket{e}&\mapsto
      -\ket{\mathbbm 1f},
\end{split}
\end{equation}
and then one can check \begin{equation}
\begin{split}
    \phi_{mf}: m\otimes[(\Z_2)_e]&\rightarrow [(\Z_2)_e]\otimes f\\
    \ket{m\mathbbm 1}&\mapsto i\ket{ef}\\
    \ket{m e}&\mapsto -i\ket{\mathbbm 1f}
\end{split}
\end{equation}
is a $[(\Z_2)_e]$-bimodule isomorphism. We can conclude that passing through $[(\Z_2)_e]$ wall will exchange $m$ and $f$. The calculation for $[(\Z_2)_m]$ and $[(\Z_2)_f]$ are almost the same, $[(\Z_2)_m]$ wall will exchange $e$ and $f$, $[(\Z_2)_f]$ wall will exchange $e$ and $m$.

Then we consider the wall $[\Z_2\times \Z_2]= \mathbbm 1\oplus e\oplus m\oplus f$. 
$L_{[\Z_2\times \Z_2]}, R_{[\Z_2\times \Z_2]}$ are monoidal equivalences and we have \begin{equation}
    \begin{split}
        \mathbbm 1\otimes [\Z_2\times \Z_2] &\cong M_0\\
        e\otimes  [\Z_2\times \Z_2]&\cong M_1\\
        m\otimes  [\Z_2\times \Z_2]&\cong M_3\\
        f\otimes  [\Z_2\times \Z_2]&\cong M_2 \\
    \end{split}
\end{equation}
\begin{equation}
    \begin{split}
          [\Z_2\times \Z_2] \otimes \mathbbm 1 &\cong M_0\\
         [\Z_2\times \Z_2]\otimes e&\cong M_3\\
          [\Z_2\times \Z_2] \otimes m&\cong M_2\\
          [\Z_2\times \Z_2] \otimes f&\cong M_1 \\
    \end{split}
\end{equation}
as $[\Z_2\times \Z_2]$-bimodule (see~\eqref{Z2Z2Bimod} and below for the action of $M_{0,1,2,3}$), or more explicitly 
\begin{equation}
\begin{split}
    \LV(\mathbbm 1)&\cong \RV(\mathbbm 1)\\
    \LV(e)&\cong\RV(f)\\
    \LV(m)&\cong\RV(e)\\
    \LV(f)&\cong\RV(m)\\
\end{split}
\end{equation}
Physically this means that passing through $[\Z_2\times \Z_2]$ wall from left to right will permute $e,m,f$ as \begin{equation}
    \begin{tikzcd}
	e && m \\
	& f
	\arrow[from=1-1, to=2-2]
	\arrow[from=2-2, to=1-3]
	\arrow[from=1-3, to=1-1]
\end{tikzcd}.
\end{equation}

For the last one, $[\Z_2\times \Z_2,\psi]
$ wall, passing which from left to right must permute $e,m,f$ as 
\begin{equation}
    \begin{tikzcd}
	e && m \\
	& f
	\arrow[from=1-1, to=1-3]
	\arrow[from=1-3, to=2-2]
	\arrow[from=2-2, to=1-1]
\end{tikzcd}.
\end{equation}

\section{$S$,$T$ matrices for $\mathbb Z_4$ topological order with $c=3,5,7$ mod $8$}
\label{ST}
\paragraph{$c=3$ mod $8$}
\begin{equation}
        S=\begin{pmatrix}
          1&1&1&1\\
          1&i&-1&-i\\
          1&-1&1&-1\\
          1&-i&-1&i
        \end{pmatrix}
        \end{equation}
        \begin{equation}
    T=\begin{pmatrix}
            1&&&\\
            &e^{\frac{3\pi i}{4}}&&\\
            &&-1&\\
            &&&e^{\frac{3\pi i}{4}}
        \end{pmatrix} 
        \end{equation}
\paragraph{$c=5$ mod $8$}
\begin{equation}
        S=\begin{pmatrix}
          1&1&1&1\\
          1&-i&-1&i\\
          1&-1&1&-1\\
          1&i&-1&-i
        \end{pmatrix}
        \end{equation}
        \begin{equation}
        T=\begin{pmatrix}
            1&&&\\
            &e^{\frac{5\pi i}{4}}&&\\
            &&-1&\\
            &&&e^{\frac{5\pi i}{4}}
        \end{pmatrix}
\end{equation}
\paragraph{$c=7$ mod $8$}
\begin{equation}
S=\begin{pmatrix}
          1&1&1&1\\
          1&i&-1&-i\\
          1&-1&1&-1\\
          1&-i&-1&i
        \end{pmatrix}
        \end{equation}
        \begin{equation}
        T=\begin{pmatrix}
            1&&&\\
            &e^{\frac{7\pi i}{4}}&&\\
            &&-1&\\
            &&&e^{\frac{7\pi i}{4}}
        \end{pmatrix}
\end{equation}
these three cases correspond to $6,10,14$ mod $16$ layers of $p+ip$ superconductors.
}

\bibliography{bib}

\end{document}